\definecolor{darkpastelgreen}{rgb}{0.01, 0.75, 0.24}
\def\eqref#1{equation~\ref{#1}}
\def\1{\bm{1}}
\def\vzero{{\bm{0}}}
\def\vone{{\bm{1}}}
\def\va{{\bm{a}}}
\def\vb{{\bm{b}}}
\def\vc{{\bm{c}}}
\def\vi{{\bm{i}}}
\def\vj{{\bm{j}}}
\def\vp{{\bm{p}}}
\def\vr{{\bm{r}}}
\def\vs{{\bm{s}}}
\def\vw{{\bm{w}}}
\def\mA{{\bm{A}}}
\DeclareMathAlphabet{\mathsfit}{\encodingdefault}{\sfdefault}{m}{sl}
\SetMathAlphabet{\mathsfit}{bold}{\encodingdefault}{\sfdefault}{bx}{n}
\newcommand{\R}{\mathbb{R}}
\theoremstyle{plain}
\newtheorem{theorem}{Theorem}[section]
\newtheorem{corollary}[theorem]{Corollary}
\newtheorem{claim}[theorem]{Claim}
\newtheorem{lemma}[theorem]{Lemma}
\newtheorem{proposition}[theorem]{Proposition}
\newtheorem{conjecture}[theorem]{Conjecture}
\newtheorem*{question*}{Question}
\theoremstyle{definition}
\newtheorem{Definition}[theorem]{Definition}
\newtheorem{Remark}[theorem]{Remark}
\newtheorem{example}[theorem]{Example}
\Crefname{claim}{Claim}{Claim}
\Crefname{conjecture}{Conjecture}{Conjecture}
\crefname{Definition}{Definition}{Definitions}
\crefname{@algorithm}{Algorithm}{Algorithms}
\DeclareMathOperator{\query}{\texttt{query}}
\DeclareMathOperator{\update}{\texttt{update}}
\DeclareMathOperator{\level}{{level}}
\DeclareMathOperator{\pred}{{pred}}
\DeclareMathOperator{\pt}{{par}}
\DeclareMathOperator{\sib}{{sib}}
\DeclareMathOperator{\buy}{\texttt{buy}}
\DeclareMathOperator{\cost}{\texttt{cost}}
\DeclareMathOperator{\price}{\texttt{price}}
\DeclareMathOperator{\tradef}{\texttt{forward\_trade}}
\DeclareMathOperator{\tradeb}{\texttt{backward\_trade}}
\DeclareMathOperator{\pend}{pend}
\DeclareMathOperator{\val}{val}
\newcommand{\defn}[1]{{\textbf{\textit{#1}}}}
\newcommand{\card}[1]{\lvert#1\rvert}
\newcommand{\M}{\mathcal{M}}
\newcommand{\U}{\mathcal{U}}
\newcommand{\V}{\mathcal{V}}
\newcommand{\A}{\mA}
\newcommand{\Cx}{\tilde{C}}
\newcommand{\veta}{\boldsymbol{\eta}}
\newcommand{\Parens}[1]{\left(#1\right)}
\newcommand{\bigParens}[1]{\bigl(#1\bigr)}
\newcommand{\BigParens}[1]{\Bigl(#1\Bigr)}
\def\vmu{{\bm{\mu}}}
\newcommand\fang[1]{\textcolor{olive}{[F:#1]}}
\newcommand\xw[1]{\textcolor{blue}{[xw: #1]}}
\renewcommand\fang[1]{}
\renewcommand\xw[1]{}
\title{Designing Automated Market Makers for Combinatorial Securities: A Geometric Viewpoint}
\author{Prommy Sultana Hossain\thanks{George Mason University, \texttt{phossai@gmu.edu}}
\and Xintong Wang\thanks{Rutgers University, \texttt{xintong.wang@rutgers.edu}}\and Fang-Yi Yu\thanks{George Mason University, \texttt{fangyiyu@gmu.edu}}}
\date{}
\begin{document}

\maketitle
\begin{abstract}
% Alternative titles:
% Characterizing the complexity of prediction markets with combinatorial bids: A geometric viewpoint
% Designing prediction markets with combinatorial securities: A geometric viewpoint

Designing automated market makers (AMMs) for prediction markets on combinatorial securities over large outcome spaces poses significant computational challenges.  
Prior research has studied combinatorial prediction markets on specific set systems (e.g., intervals, permutations), characterizing and addressing the design challenges by exploiting their respective security or outcome structures.  A comprehensive framework of AMMs design for prediction markets on \textit{arbitrary} set systems remains yet elusive.  In this paper, through establishing a novel connection between the design of AMMs for combinatorial prediction markets and the range query problem in computational geometry, we present a unified framework for both analyzing the computational complexity and designing efficient AMMs for combinatorial prediction markets. 

We first demonstrate the equivalence between price queries and trade updates under the popular combinatorial \emph{logarithmic market scoring rule} market (LMSR) and the range query and range update problem (RQRU).  
We show that combinatorial LMSR that allow efficient AMMs can be characterized by their securities' associated set system having a \emph{bounded} VC dimension.  
Specifically, we construct a partition-tree-based scheme to support price queries and trade updates in time sublinear to the number of outcomes, when the VC dimension is bounded, and show the non-existence of sublinear time AMM when the VC dimension is unbounded.  
We then generalize to AMMs for combinatorial prediction market with other scoring rules (quadratic and power scoring rules) by illustrating their connection to RQRU with different update rules and employing variations of the partition tree scheme.  
Finally, we show that the \emph{multi-resolution market design} can be naturally integrated into the partition-tree scheme.  
This facilitates the creation of submarkets with varying liquidity parameters and scoring rules without compromising computational efficiency or introducing arbitrage opportunities.  

Moreover, we introduce the combinatorial swap operation problem for automated market makers in decentralized finance and show that it can be efficiently reduced to range update problems.

% Designing automated market makers (AMMs) for prediction markets on combinatorial securities over large outcome spaces poses significant computational challenges.  
% Prior research has primarily focused on combinatorial prediction markets within specific set systems (e.g., intervals, permutations).
% In this paper, we establish a comprehensive framework for designing AMMs on \textit{arbitrary} set systems, by building a novel connection between the design of AMMs for combinatorial prediction markets and the range query problem in computational geometry.
% We present a unified framework for both analyzing the computational complexity and designing efficient AMMs.

% We first demonstrate the equivalence between price queries and trade updates under the popular combinatorial \textit{logarithmic market scoring rule} market and the range query and range update problem.  
% Building on this equivalence, we construct sublinear time algorithms when the VC dimension of the set system is bounded and show the non-existence of such algorithms for unbounded VC dimension cases.  
% We then extend this approach to AMMs for combinatorial prediction markets with quadratic and power scoring rules.  
% Finally, we show that the multi-resolution market design can be naturally integrated into the partition-tree scheme.  

% Additionally, we introduce the combinatorial swap operation problem for automated market makers in decentralized finance and show that it can be efficiently reduced to range update problems.
\end{abstract}
\clearpage
\setcounter{page}{1}

\section{Introduction}
In a prediction market, traders buy and sell securities that pay out based on the outcomes of future events, with market prices reflecting predictions of those events.    
These markets have found applications across various domains, such as business for forecasting sales trends~\cite{bertsekas2015convex,DBLP:journals/mansci/SpannS03}, politics for predicting election results~\cite{hanson1999,berg2008results}, and entertainment for anticipating box office performance~\cite{pennock2001real}. 
Recently, prediction markets have been built on blockchain technologies (e.g., Augur on Ethereum), enabling the deployment of complex contracts and market designs through computer programs to automate transactions.

A combinatorial prediction market allows participants to bet on \emph{combinations of outcomes}.
For example, the opening value of AMZN will fall between 190 and 200 \emph{and} the opening value of GOOGL will be between 180 and 190 on July 18, 2025; the same political party will win both Ohio and Pennsylvania.  
Tailored to the designer's need, these markets may provide more information and refined forecasts by aggregating traders' predictions on a collection of events of interest, known as a set system.  
However, designing such markets is complex, particularly due to the algorithmic challenge of pricing and trading of the exponentially large number of possible combinations.

We are interested in designing \emph{automated market makers} (AMMs) and characterizing their computational complexity for combinatorial markets. 
%associated with different set systems.
An AMM uses predefined functions (e.g., cost functions~\cite{AbernethyChVa11,DBLP:journals/corr/abs-1206-5252}) to set prices and execute trades for any requested security; it removes the need for a traditional counterparty and can address the \emph{thin market} problem that is particularly pronounced in combinatorial markets.
The design of AMMs is critical, as it impacts how effectively markets integrate information and operate computationally.

%Despite substantial progress in understanding combinatorial prediction markets, 
Prior research has focused on and made substantial progress in understanding combinatorial prediction markets within specific set systems (e.g., intervals~\cite{Dudik21}, permutations~\cite{chen2007betting}), tackling design challenges by leveraging the unique structures of their securities or outcomes.  
However, a unified framework for markets associated with \emph{arbitrary} set systems remains elusive and yet to be developed. 
For a market designer, knowing the complexity of different combinatorial structures and their expressiveness is important for assessing the feasibility of implementing prediction markets in practice.
Our paper aims to fill this gap by exploring the algorithmic problem of computing security prices for an arbitrary set system from a geometric viewpoint.   
Specifically, we focus on the design of (efficient) AMMs for combinatorial prediction markets that support price queries, cost calculations, and always offer to buy or sell any combinatorial security at some price.
%to trade and then update the prices after each trade.   
% We note that while all operations (price, cost, and buy) can be performed in linear time by exhaustively considering all $n$ outcomes, the practicality of handling large $n$ values necessitates the development of sublinear or even polylogarithmic time algorithms.  
\emph{By establishing a novel connection between the market design problem and computational geometry, we present a unified framework for both analyzing the computational complexity and designing efficient algorithms for combinatorial prediction markets.}

%\fang{added}
Finally, we extend our framework beyond prediction markets to design AMMs for decentralized finance, which facilitate the trading of assets or cryptocurrencies (e.g., ERC-20 and BEP-20 tokens) without relying on a trusted third party. 
The core component of these exchanges is a class of AMMs known as \emph{constant function market makers} (CFMMs)~\cite{egorov2019stableswap,zhang2018formal,adams2021uniswap,martinelli2019balancer,zhang2018formal}, which support swap operations by determining the required amount of one asset in exchange for a specific amount of another. 
However, when it comes to multi-asset trades~\cite{angeris2021analysis}, where traders offer a basket of multiple assets to the CFMM in exchange for another basket of assets, the market making process becomes significantly more complex due to the vast number of possible combinations of asset baskets (450,000 assets for ERC-20 and over one million for BEP-20).  
We formalize the swap problem for multi-asset trades, demonstrate that it can be reduced from range update problems, and thus provide a gateway for understanding and designing algorithms for these trades.

%
% \subsection{Our contributions}  
\subsection{Our contribution}
We report a systematic study of the design of AMMs for various markets. 
We start with the popular \emph{logarithmic market scoring rule} AMMs on prediction markets, extend to other scoring rules, and finally, explore AMMs in decentralized finance. 
Below, we provide a brief summary of our results. 

\paragraph{Prediction markets using logarithmic market scoring rule.}
We first examine AMMs for prediction markets with Hanson's \emph{logarithmic market scoring rule} (LMSR).
LMSR has been extensively studied~\cite{Hanson03,Hanson07,chen2007betting,Dudik21} and widely deployed in practical contexts, such as predicting political events~\cite{hanson1999}, building opening date~\cite{Othman10}, product sales~\cite{Chen2002}, and instructor ratings~\cite{Chakraborty13}. 
In \cref{sec:equiv}, we demonstrate that the problem of AMMs for LMSR prediction market, which supports price, cost, and buy operations, is equivalent to the \emph{range query with multiplication range update} problem defined in \cref{sec:rqru}.  
This equivalence enables us to use tools from computational geometry to derive both algorithmic and hardness results for LMSR prediction markets across different set systems, as summarized in \cref{tab:results}.
In \cref{sec:alg}, we introduce a partition-tree-based scheme to design efficient LMSR algorithms for combinatorial prediction markets associated with set systems of finite VC dimensions.  
\cref{sec:hardness} provides our hardness results for other LMSR markets.  
Below, we highlight three results for LMSR: 
\begin{enumerate}
    \item We develop a partition-tree-based algorithm for LMSR markets on interval securities, where the outcome is a real-valued random variable, and traders can bet on interval events (\cref{ex:one}). 
    Our algorithms can support all market operations in time logarithmic in the size of outcome space $n$ (\cref{prop:one}), consistent with prior research~\cite{Dudik21}.  
    We further provide a lower bound in \cref{prop:hard1d} demonstrating the optimality of logarithmic time.
    \item Our partition-tree algorithm extends naturally to LMSR for $d$-dimensional outcome spaces (\cref{ex:orthogonal}), achieving sublinear running time of $O(n^{1-1/d})$ (\cref{prop:orthogonal}). 
    We show that achieving a sub-polynomial time for LMSR markets in even the two-dimensional setting is improbable (\cref{prop:hard2d}). 
    Otherwise, solving matrix multiplication in near-quadratic time would be feasible, contradicting the current leading algorithm, which requires time in $O(m^{2.371552})$~\cite{williams2023new}. 
    Our result provides an answer to the open problem proposed in \citet{Dudik21}.
    \item For general set systems, we show that combinatorial prediction markets that admit efficient algorithms can be characterized by the VC dimension of the set system (\cref{def:vc}). 
    Specifically, our partition-tree scheme has sublinear running time when the VC dimension is bounded. 
    Additionally, we provide an information-theoretic lower bound showing the non-existence of sublinear time algorithms when the VC dimension is unbounded (\cref{prop:limit_vc}).
    With this hardness result, we revisit the \#P-hardness results from \citet{chen2007betting} for boolean function securities on $\{0,1\}^K$ and pairing securities on permutations of $K$ candidates. 
    We prove that there is no $o(2^K)$ time LMSR for the boolean function securities and no $o(K!)$ time LMSR for the pairing securities (\cref{cor:vc}).  
    % In contrast to \citet{chen2007betting}'s conditional hardness results, our hardness results are information theoretic.  
\end{enumerate}

\begin{table}[]
\caption{Summary of AMMs for prediction markets with LMSR on $(\mathcal{X}, \mathcal{F})$ where $n = |\mathcal{X}|$ and $d\ge 2$.}
    \label{tab:results}
    \centering
    \begin{tabular}{p{0.35\textwidth} l p{0.35 \textwidth}}
    \toprule
        Set systems $(\mathcal{X}, \mathcal{F})$ & Running time  & Results\\
        \toprule
        Intervals (\cref{ex:one}) & $\Theta(\log n)$   & \cref{fig:tree,prop:one,prop:hard1d}  \\
        $d$-orthogonal sets (\cref{ex:orthogonal}) & $O(n^{1-1/d})$   & \cref{prop:orthogonal,prop:hard2d}\\
        % Pseudo-disks in $\R^2$& \\
        Hyperplane in $\R^d$  (\cref{ex:hyperplane}) &$O(n^{1-1/d})$  & \cref{prop:hyperplance}\\
        % Sphere in $\R^d$& $O(n^{1-1/d}\log n)$ & $\tilde{O}(n^{2+2/d})$ by \cite{DBLP:conf/compgeom/CsikosM21}\\
        % Balls in $\R^d$ & $O(n^{1-1/d})$ & \\
        Finite VC   (\cref{ex:topK})  & $O(n^{1-\epsilon})$ with $\epsilon>0$ & \cref{prop:abstract} \\
        Infinite VC (\cref{ex:pairing,ex:boolean})& no $o(n)$ & \cref{prop:limit_vc,cor:vc}\\
        \bottomrule
    \end{tabular}
\end{table}
\paragraph{Other market scoring rules.}
We extend our approach to other scoring rules for prediction markets, illustrating their connections to range query range update problem (RQRU) with varying update rules and employing variations of the partition tree scheme (\cref{sec:beyondlmsr}), as summarized in \cref{tab:results2}.  
We first note that the equivalence between a market maker and a certain RQRU problem is not sufficient for an efficient algorithm, as a single query problem can be NP-hard. 
However, we show that several common market makers can be reduced to RQRU problems that admit our partition tree scheme, thus enabling efficient algorithms.

\cref{sec:qmsr} studies the \emph{quadratic market scoring rule} (QMSR), another widely-adopted proper scoring rule~\cite{DBLP:journals/corr/abs-1206-5252}. 
Interestingly, differed from the LMSR reduction, we show that QMSR can be reduced to a range query with \emph{addition range updates} problem in \cref{lem:equivq}, which enables us to apply our partition tree scheme to solve QMSR with the same computational complexity (\cref{thm:qmsr}).   
In contrast to our hardness result regarding the subpolynomial time algorithm for LMSR on two-dimensional interval securities, we introduce a polylogarithmic time algorithm for QMSR in \cref{cor:orthogonalp}. %using recent advancements in range query with addition range updates~\cite{Ibtehaz_2021}.   
\emph{This result underscores a computational distinction between LMSR and QMSR.}  

We extend to examine the \emph{power market scoring rules}~\cite{dawid2007geometry,jose2008scoring}, which include QMSR as a special case.  
In \cref{sec:pmsr}, we consider a power scoring rule with a degree of $\frac32$ and formulate the market making problem as a range query range update problem with \emph{group action updates} in \cref{lem:equivp}.  
We show that our partition-tree algorithm remains applicable under group action updates, preserving the same computational complexity (\cref{thm:pmsr}).

\cref{sec:multi-res} examines the \emph{multi-resolution market design} that can be naturally integrated into our partition-tree scheme, facilitating multiple market makers to mediate submarkets of increasingly fine-grained outcome partitions. 
%This facilitates the creation of submarkets with varying liquidity parameters, or even scoring rules.
%, without compromising computational efficiency or introducing arbitrage opportunities.  
We demonstrate that with efficient and local weight updates, such multi-resolution design will not affect our characterization of market complexity, including removing arbitrage that may arise due to the use of different market scoring rules for combinatorial securities associated with information at different granularity.\looseness=-1
%\xw{Specifically, we generalize previous work based on a static binary tree~\cite{Dudik21} to a more general refinement with our proposed partition-tree scheme.}

\paragraph{Constant function market maker in decentralized finance.}
Finally, in \cref{sec:swap}, we discuss AMMs in decentralized finance, specifically the \emph{constant function market makers} (CFMM). 
We demonstrate that the swap operations problem can be reduced from range update problems, a special case of RQRU. 
Consequently, we show that under linear and logarithmic trading functions, a similar partition tree scheme can be applied to achieve the same computational complexity.

% \subsection*{Contributions}

% Notably, our contribution extends beyond existing work on partition trees, predominantly focusing on the setting of no or point updates. One of our contribution involves the design of new algorithms explicitly tailored to support range updates, as defined in \cref{def:rqru}, and thus enable algorithms for LMSR market maker.  We further demonstrate the efficacy of our general partition tree algorithms for LMSR market makers.
% Specifically, we provide sublinear\fang{motivation} query and update time across all \cref{ex:one,ex:orthogonal,ex:hyperplane,ex:pairing,ex:hierarchical} showcasing the versatility and power in practical applications.  

\begin{table}[]
\caption{Summary of reductions for AMMs to various range query range update problems}
    \label{tab:results2}
    \centering
    \begin{tabular}{l l l}
    \toprule
        Automated market maker & Query  & Update\\
        \midrule
        LMSR (\cref{def:lmsr}) & addition $+$   & multiplication $\cdot$\\
        QMSR (\cref{def:qmsr}) & addition $+$   & addition $+$\\
        $\frac32$-MSR (\cref{def:pmsr}) & addition $+$  & group action\\
        \midrule
        Log CFMM (\cref{eq:swap_log})  & addition $+$ & multiplication $\cdot$ \\
        Linear CFMM (\cref{eq:swap_lin}) & addition $+$ & addition $+$\\
        Geometric mean CFMM (\cref{eq:swap_prod}) & multiplication $\cdot$ & addition $+$\\
        \bottomrule
    \end{tabular}
\end{table}
\subsection{Related work}
\paragraph{Designing combinatorial prediction markets.}
%\fang{good}
While \citet{AbernethyChVa11} provide a thorough characterization of cost-based markets for combinatorial prediction markets that satisfies key axioms for eliciting truthful predictions, efficient algorithms for combinatorial prediction are not fully understood.\fang{check} 
\citet{Chen_CombBet(2007)} demonstrate that simple comparison securities on permutations (\cref{ex:pairing}) or Boolean function securities on hypercubes (\cref{ex:boolean}) are \#P-hard.  
\citet{Dudik21} present an efficient algorithm for interval securities (\cref{ex:one}).  \citet{DBLP:conf/stoc/ChenGP08} offer efficient algorithms for tournament outcomes where prices can be succinctly encoded as a Bayesian tree.  
Our established connection to computational geometry provides an algorithmic approach to extract structure of those securities.   

There are also prior works focusing on relaxation techniques. 
\citet{DBLP:conf/ijcai/XiaP11} provide a Monte Carlo algorithm for approximate pricing on tournament outcomes with Bayesian network distributions. 
\citet{DBLP:journals/jair/Laskey0HTMG18} generalize the result to Bayesian network structure preserving distributions. 
\citet{DudikLaPe12,dudik2013combinatorial} relax the arbitrage-free condition. 
\citet{kroer2016arbitrage} propose an integer-programming-based arbitrage removal algorithm but relax the worst-case computational complexity guarantee.

Recent works examining decentralized finance has introduced an axiomatic framework that connects general \textit{constant function market makers} (CFMMs), which form the core implementation of \textit{Uniswap v2}~\cite{adamsuniswapv2}, to cost-function-based prediction markets~\cite{Frongillo2023}.
The work opens up the possibility of designing and characterizing the complexity of CFMMs using results in combinatorial prediction markets.

\paragraph{Range query range update.}
%\fang{good}
In computational geometry, the range query problem has been extensively studied~\cite{agarwal2017range}, particularly in settings without updates or with point updates, where an update operation modifies the weight of a single point. 
Our range query range update problem requires support for updates on \emph{a set of points} (\cref{def:rqru}), generalizing the classical range query problem with point updates when the set system contains all singletons.
Several recent works have also explored range queries with range updates, with a primary focus on regular orthogonal set systems. 
For instance, \citet{DBLP:journals/corr/abs-2101-02003,mishra2013new} investigate addition range updates, while \citet{yang2020updating,DBLP:journals/corr/abs-2101-02003} delve into the hardness of general range updates. 
Techniques like lazy propagation have been effectively applied to one-dimensional or multi-dimensional orthogonal set systems \cite{yang2020updating,DBLP:journals/corr/abs-2101-02003}. 
Our partition-tree-based algorithm presents a novel adaptation for handling general set systems.

% \fang{more}

% Previous work has~\cite{DBLP:journals/corr/abs-2101-02003} also explores the RQRU problem, but most focus on regular orthogonal set system (\cref{ex:orthogonal})

% Combinatorial prediction market~\cite{AbernethyChVa11}
% Prediction market on parameters~\cite{DBLP:journals/corr/AbernethyKLS14}
% Prediction market for limit order~\cite{heidari2018integrating}

% Positive results
% Bayesian network to represent prices in a compact
%  way
 
%  Previous solutions restrict the
%  betting language to allow polynomial time algorithms \cite{chen2007betting,DBLP:conf/stoc/ChenGP08} approximated price query\cite{DBLP:conf/ijcai/XiaP11}

% Arbitrage partial removal~\cite{DudikLaPe12,dudik2013combinatorial} 

% \cite{Dudik21}

% Negative results
% Combintorial\cite{Chen_CombBet(2007)}

% \cite{10.1145/2492002.2482608}

% - Multi resolution and partition tree
% - General scoring rule and RQRU (the above and quaradtic scoring rule) \url{https://arxiv.org/ftp/arxiv/papers/1206/1206.5252.pdf}\url{https://arxiv.org/pdf/1101.5011.pdf}

% \fang{Table for previous work, approximated query, approximated arbitrage free.}

% \fang{fine grain complexity compared to previous \#}
% range query and range update~\cite{DBLP:journals/corr/abs-2101-02003}

% TOURNAMENTS: \cite{DBLP:conf/stoc/ChenGP08}, approximation \cite{DBLP:conf/ijcai/XiaP11}

\section{Preliminaries}
An \emph{automated market maker} (AMM) is an algorithm that trades securities.  At a high level, a design problem for AMM needs to specify: 1) what securities can be traded, 2) how these securities are traded, and 3) which operations are supported.  The concept of AMMs has been implemented in various applications, including blockchain and prediction markets.   For prediction markets, an AMM trades prediction bets using a cost function and supports price, cost, and buy operations.  For blockchain, an AMM trades bundles of cryptocurrencies using a trading function, and supports swap operations.  We will introduce AMMs for prediction markets here and then AMMs for blockchain in \cref{sec:swap}.

\subsection{AMMs for prediction markets}
% LMSR market for collections combinatorial securities
\label{sec:pre_lmsr}
A prediction market provides securities (prediction bets) to sequentially aggregates trader's prediction on a random variable.  We define the design problem of AMMs for prediction markets in \cref{def:generalcost,def:lmsr}.  Before introducing the problem, we will first introduce combinatorial securities, cost functions, and then price, cost, and buy operations.
\paragraph{Combinatorial securities on prediction markets}
Let $\mathcal{X}$ denote an outcome space with $n$ mutually exclusive possible \emph{outcomes}.  When $n$ is large, it is natural (both computationally and economically) to elicit probabilities on a set of events, denoted as $\mathcal{F}\subseteq 2^\mathcal{X}$ that is a collection of subsets of $\mathcal{X}$ called a \emph{set system}.
Consider a random variable on $\mathcal{X}$, say the value of S\&P500 at 4pm tomorrow, and $\mathcal{F}$ can be the collection of all intervals on $\mathcal{X}$, i.e., $E_{(i,j)} = [i,j]\cap \mathcal{X} = \{x\in \mathcal{X}: i\le x\le j\}$ where $i\le j\in \R$.
% , or $\mathcal{F}$ can be the collection of pairs (combined outcome), i.e., $E_{(i,j)} = \{x\in \mathcal{X}: x_i = i \text{ and } x_j = j\}$, where $i \in \mathcal{X}_i$ and $j \in \mathcal{X}_j$.
\cref{sec:example_sec} present more examples.

Given $\mathcal{F}$, a combinatorial prediction markets provides \emph{combinatorial securities} $\phi_E: \mathcal{X}\to \{0,1\}$ for all $E\in \mathcal{F}$, which is simply the \emph{payoff function} paying out \$1 if event $E$ occurred and \$0 otherwise.  Such collection of securities is known as \emph{combinatorial securities} for a set system $(\mathcal{X},\mathcal{F})$, or $(\mathcal{X}, \mathcal{F})$ securities for short.
A trader trades $s\in \R$ share of security $\phi_E$ with the central AMM (where positive $s$ corresponds to purchases and negative to short sales), and receives a payoff of $s \cdot\phi_E$.

The AMM adjusts the price of each security after trading with each trader so that the prices reflect the consensus predictions among traders. The price of each security can be viewed as the traders’ collective estimation of the probabilities of their associated events. To facilitate such a combinatorial market, an ideal AMM needs to both incentivize trades to incorporate new information and \emph{efficiently compute prices and market states} after each trade.
%and has a bounded loss
% Before introducing our algorithmic results for different combinatorial markets, we first review cost-function-based market making and the classic range query range update (RQRU) problem.

\paragraph{Cost-Function-Based prediction market}
A long line of work~\cite{AbernethyChVa11,chen2012utility} demonstrate how to trade these securities that achieve desirable incentive properties e.g., \emph{no arbitrage} and \emph{bounded loss}.
The \emph{no-arbitrage} property requires that as long as all outcomes $x$ are possible, there be no market transaction with a guaranteed profit for a trader. 
The \emph{bounded-loss} property is defined in terms of the worst-case loss of a market maker, i.e., the largest difference, across all possible trading sequences and outcomes, between the amount that the market maker has to pay the traders (once the outcome is realized) and the amount that the market maker has collected (when securities were traded).
The property requires that this worst-case loss be a priori bounded by a constant.

Following~\cite{AbernethyChVa11} and~\cite{chen2012utility}, we consider that the AMM determines security prices using a potential function $C \colon \R^{\card{\mathcal{X}}}\to\R$, called a \emph{cost function}.  
The \emph{market state} is specified by a vector $\vw\in \R^{\card{\mathcal{X}}}$, where we will use $w_x$ or $w(x)$ to denote the number of shares of security sold by the AMM for the outcome $x\in \mathcal{X}$.\footnote{In general, a cost function is $\tilde{C}:\R^{\card{\mathcal{F}}}\to \R$ where a state $\tilde{\vw}\in \R^{\card{\mathcal{F}}}$ stores the number of share on each possible event $E\in \mathcal{F}$~\cite{AbernethyChVa11}.  
In \cref{sec:lmsr,sec:qmsr,sec:pmsr}, we consider cost function that has $C: \R^{\card{\mathcal{X}}}\to \R$ such that $\tilde{C}(\tilde{\vw}) = C\left(\sum_{E\in \mathcal{F}}\tilde{w}_x\mathbf{1}_E\right)$ for all $\tilde{\vw}\in \R^{\card{\mathcal{F}}}$, and \cref{sec:multi-res} studies the multi-resolution markets of the general form.}  
Below are some popular cost functions that we study in this paper, which all satisfy the properties of no-arbitrage and bounded loss.
\begin{itemize}
    \item The \emph{logarithmic market scoring rule} (LMSR) proposed by ~\citet{Hanson03,Hanson07} is a popular cost-function-based market making mechanism.
It uses the logarithmic scoring rule
to interact with a sequence of traders to trade securities and maintain a probability distribution over an outcome space $\mathcal{X}$ of $n$ points in an online manner. \begin{equation}
\label{eq:LMSR}
C_L(\vw)=b\ln \left(\sum_{x\in\mathcal{X}} e^{w_x / b}\right).
\end{equation}
where $b>0$ is the liquidity parameter.  
The market designer can choose $b$ to  control how fast the price moves in response to trading and limiting the worst-case loss of the market maker to $b\ln\,\card{\mathcal{X}} = b\ln (n)$~\cite{Hanson03}.
    \item The quadratic scoring rule (QMSR) is another popular choice of proper scoring rule with the following cost function \cite{DBLP:journals/corr/abs-1206-5252}\begin{equation}
        \label{eq:QMSR}
        C_Q(\vw)=\frac{1}{n}\sum_{x\in \mathcal{X}}w_x+\frac{1}{4b}\sum_{x\in \mathcal{X}}w_x^2-\frac{1}{4bn}\left(\sum_{x\in \mathcal{X}}w_x\right)^2-\frac{b}{n}
    \end{equation}
    where $b>0$ is the liquidity parameter.
    \item A $\gamma$-power MSR has the following cost function~\cite{dawid2007geometry,jose2008scoring}   \begin{equation}\label{eq:pmsr}
    C_\gamma(\vw) = \max_{p\in \Delta_\mathcal{X}} \sum_{x\in \mathcal{X}} w_xp(x)-b\sum_{x\in \mathcal{X}}p(x)^\gamma
\end{equation}
which includes the above QMSR as a specific case when $\gamma = 2$, as shown in~\cite{DBLP:journals/corr/abs-1206-5252}.
\end{itemize}
More generally, a cost function can be any function that is convex,  differentiable, and $\mathbf{1}$-invariant so that $C(\vw+s\mathbf{1}) = C(\vw)+s$ for all $\vw\in \R^{\card{\mathcal{X}}}$ and $s\in \R$.~\cite{AbernethyChVa11}

\paragraph{Operations on combinatorial securities}

%\xw{do we want to limit to LMSR here for combinatorial securities?}
An AMM needs to support operations for trading on securities in $\mathcal{F}$ defined below.
\begin{Definition}\label{def:generalcost}
    Given a set system $(\mathcal{X}, \mathcal{F})$ and a cost function $C: \R^{|\mathcal{X}|}\to \R$, an AMM for prediction market with cost function $C$ on $(\mathcal{X}, \mathcal{F})$ securities takes an initial state $\vw^{(0)}:\mathcal{X}\to \mathbb{R}$ and offers securities for all $E\in \mathcal{F}$, supports a sequence of $\price$, $\cost$, and $\buy$ operations taking one of the following forms: for any set $E\in \mathcal{F}$, shares $s\in \mathbb{R}$, and state $\vw$,
\begin{itemize}
\item Price operation returns the current price of security for $E$, $\price_C(E; \vw) = \sum_{x\in E}\frac{\partial}{\partial w_x}C(\vw).$
    \item Cost operation returns the current cost of $s$ shares of security for $E$, $
    \cost_C(E, s; \vw) = C(\vw+s\mathbf{1}_E)-C(\vw)$
where $\mathbf{1}_E$ is the indicator vector of set $E$. 
    \item Buy operation, $\buy_C(E, s;\vw)$, updates the state $\vw\gets \vw+s\mathbf{1}_E$. 
\end{itemize}
\end{Definition}
Thus, a trader who wants to buy one share combinatorial security $\phi_E$ in the market state $\vw$ must pay $C(\vw+\mathbf{1}_E)-C(\vw)$ to the market maker, after which the new state becomes $\vw+\mathbf{1}_E$.
The vector of instantaneous prices in the corresponding state $\vw$ is $\nabla C(\vw)$.

In this paper we will focus on AMMs for prediction markets with LMSR which are special cases of \cref{def:generalcost} by taking $C = C_L$ in \cref{eq:LMSR}.
\begin{Definition}[LMSR market]\label{def:lmsr}
    Given a set system $(\mathcal{X}, \mathcal{F})$, an AMM for {a prediction market with LMSR on $(\mathcal{X}, \mathcal{F})$} takes an initial state $\vw^{(0)}:\mathcal{X}\to \mathbb{R}$ and offers securities for all $E\in \mathcal{F}$, supports a sequence of $\price$, $\cost$, and $\buy$ operations taking one of the following forms: for any set $E\in \mathcal{F}$, shares $s\in \mathbb{R}$, and state $\vw$,
\begin{itemize}
\item $\price(E; \vw)$: return the current price of security for $E$, 
\begin{equation}
    \price(E; \vw) = \frac{\sum_{x\in E}e^{w_x/b}}{\sum_{x'\in \mathcal{X}} e^{w_{x'}/b}} = \sum_{x\in E}\frac{\partial}{\partial w_x}C_L(\vw).\label{eq:def_price}
\end{equation}
    \item $\cost(E, s; \vw)$: return the current cost of $s$ shares of security for $E$, \begin{align}
    \cost(E, s; \vw) = C_L(\vw+s\mathbf{1}_E)-C_L(\vw) &= b\ln\left(e^{s/b}\price(E;\vw)+ 1 - \price(E;\vw)\right).
    \label{eq:def_cost}
\end{align} 
    \item $\buy(E, s;\vw)$: update the state $\vw\gets \vw+s\mathbf{1}_E$. 
\end{itemize}
Moreover, when there is no ambiguity, we refer to a such AMM as an LMSR market, LMSR algorithm, or simply LMSR.
\end{Definition}
We use $\vw$ to denote a generic symbol for the number of security and add superscript $t$ to emphasize the state at round $t$, $\vw^{(t)}$.  
We may omit $\vw$ and write $\price(E), \cost(E,s)$, and $\buy(E, s)$ when there is no ambiguity. 

Note that \cref{def:lmsr} is an online algorithm problem.  
Specifically, given a collection of securities $\mathcal{F}$ on $\mathcal{X}$, we aim to prepare a data structure to store auxiliary information that can facilitate responding to a sequence of operations (price, buy, and cost) efficiently.  
We measure the computational complexity by the time spent for each operation in the worst case.  Formally, we say an LMSR market that can support price operation in $T_P(n)$, cost operation in $T_C(n)$, and buy operation in $T_B(n)$, if the time spent on each operation is always upper bounded by those values for any sequence of operations.  
Additionally, the \emph{running time} of an LMSR is $\max\{T_P(n), T_C(n), T_B(n)\}$.  
We note that for LMSR all operations can be trivially done in linear time by exhausting all $n$ outcomes.  However, as the number of outcomes $n$ becomes large, it becomes critical to achieve \emph{sublinear} or even polylogarithmic running time.  Finally, a linear-size data structure and preparation time for \cref{def:lmsr} is inevitable in the worst-case scenario.  However, when the initial condition is uniform, we may speed up the preparation time.

% \begin{remark}
% \fang{Rewrite}Given the crucial role of $n$ in our paper, we briefly discuss its role in various contexts.  
% First, if $\mathcal{X}$ is a collection of $n$ arbitrary points in Euclidean space, $n$ equals the input size.  Thus, algorithms in computational geometry often use $n$ to measure computation complexity.  
% Even in cases where the outcome is infinite, maintaining a finite number, $n$, of equivalent classes induced by the set system suffices if $\mathcal{F}$ is finite.   
% %

% On the other hand, $n$ can be exponential in the input size, e.g., $\mathcal{X}$ is a hypercube. 
% In this context, linear in $n$ is indeed exponential in the input size, and thus we may want sublinear or even polylogarithmic time algorithms in $n$.  
% Along this line, previous work showing \#P hardness does not preclude the existence of a sublinear time algorithm in $n$.  
% This nuance prompts our fine-grain analysis of algorithmic efficiencies.  
% %

% Finally, a linear-size data structure for \cref{def:lmsr} is often unavoidable but manageable in practical applications.  
% For instance, the number of electoral college outcomes in a US presidential election is $2^{51}$, where $2^{51}$ bytes is roughly $2000$ TB.  
% Similarly, the potential combinations of NBA playoff teams are less than $\binom{15}{8}^2 = 41,409,225$, which is approximately $41$ million. 
% \xw{discuss? Initialize (e.g., w0) in an online fashion?}
% \end{remark}

\subsection{Examples of combinatorial securities}
\label{sec:example_sec}
We first introduce several examples of combinatorial securities for set systems, and define metric (VC dimension and dual shattering dimension) to measure the complexity of set systems.
\begin{example}[Interval security~\cite{Dudik21}]\label{ex:one} 
The outcome space is $\mathcal{X} \subset \R$ with $|\mathcal{X}| = n$, and $\mathcal{F}$ is the collection of all intervals $E_{(i,j)} = [i,j]\cap \mathcal{X} = \{x\in \mathcal{X}: i\le x\le j\}$ where $i\le j\in \R$.  
Each interval security corresponds to predictions that the outcome will fall into the specified interval. 
Though $\mathcal{X}$ can be any collection of $n$ real-valued points, and without loss of generality, we can scale any real line, so that $\mathcal{X} = [n] = \{0,1\dots, n-1\}$.

For example, we may construct a prediction market for the S\&P 500 opening price on Oct 18, 2025, by setting $n = 2^{20} = 1048576$ and $\mathcal{X} = \{0, 0.01, \dots, 10485.74, 10485.75\}$, where we cap prices at $10485.75$.  
Then an interval security corresponds to the opening price falling into a specific interval.

% In a continuous market for the S&P 500 opening price, which is represented by a unit interval $[0,1)$, we can decrease computational complexity (by reducing $n$) by organizing analogous classes based on combinatorial securities. There are two possible equivalent classes: one where each unique real number within the interval $[0,1)$ forms its equivalence class, and another where the numbers within the interval $[0,1)$ are grouped into small or large interval equivalent classes, such as $[0,0.1), [0.1,0.2), \dots [0.5,0.9), [0.91,1.0)$.

\end{example}

\begin{example}[Multi-dimensional orthogonal security]\label{ex:orthogonal}
Given a positive integer $d$, the outcome space is $\mathcal{X}\subset \R^d$ with $|\mathcal{X}| = n$. 
Each security is represented as an axis-aligned hyperrectangle, $E_{\vi, \vj} = [i_1, j_1]\times\dots\times [i_d, j_d]\cap \mathcal{X}$, where $\vi = (i_1, \dots, i_d), \vj = (j_1, \dots, j_d)$ and $\vi$ is less than $\vj$ at all coordinates denoted as $\vi\le \vj$, and the $d$-dimensional orthogonal set system is $\{E_{\vi, \vj}: \vi\le \vj\in \R^d\}$.  If $\mathcal{X} = [m]^d$ for some $m$, we call $(\mathcal{X}, \mathcal{F})$ a \emph{regular} orthogonal securities.  A $d$-dimensional orthogonal security is a natural generalization of interval security, and thus we also name them $d$-dimensional interval securities.  Instead of S\&P 500, we may want to predict the opening prices of the top five companies by market cap (MSFT, AAPL, NVDA, AMZN, and GOOGL).  Multi-dimensional orthogonal securities with $d = 5$ allow traders to express predictions of events where each price falls within specific intervals.
% For example, we had a prediction market to forecast the weather with consideration of the following parameters: temperature ($t$), humidity ($h$), and wind speed ($w$). Hence, the overall outcome space $\mathcal{X}$ consists of all possible combinations of $t, h, w$. Hence, if one buys a security $S_{a_t,a_h,a_w,b_t,b_h,b_w}$, where $a_t=25\degree, b_t=28\degree, a_h=40\%, b_h=55\%, a_w=8km/h, b_w=12km/h$,. The collection of sets of outcome $\mathcal{F}_X$ would include all combinations of weather conditions that fall within these specified intervals.
\end{example}

\begin{example}[Hyperplane security~\cite{Guo_CombPred(2009),combo_options,pred_options}]\label{ex:hyperplane} %need to polish this more
Similar to \cref{ex:orthogonal}, the outcome space is $\mathcal{X}\subset \R^d$ with $|\mathcal{X}| = n$, and a hyperplane security is associated with a hyperplane with $\beta\in \R^d, \beta_0\in \R$ where $E_{\beta, \beta_0} = \{x\in \mathcal{X}: \beta^\top x+\beta_0\ge 0\}$.  A hyperplane security represents a linear combination of multi-dimensional outcomes.  In particular, many interval securities on index funds, including S\&P 500, can be represented as a hyperplane security (e.g., the opening price of the S\&P 500 is less than some constant $\beta_0$).
In financial options, investors speculate on whether the underlying security (bundle) will be realized with a greater or smaller value than \textit{the strike price} on the expiration date.
%if and only if a linear combination of the opening prices of MSFT, AAPL, NVDA, AMZN, and GOOGL, among others, is less than $y$.

% , where $\Vec{w}$ is the normal vector to the hyperplane. While $\mathcal{F}_X$ holds all combinations of events that fall within the half-space.

% For instance, prediction market forecasting for the US presidential election is given two features: the percentage of support in the Midwest ($x_1$) and the percentage of support in the South ($x_2$). The hyperplane to predict the outcome of the election based on these features would be $\vec{w}^T \vec{x} + b$, where $\vec{w}=0.7$ is a weight vector, $\vec{x}=0.6$ is the feature vector and $b=-50$. Hence, the half-space that represents the higher likelihood of winning the election is $0.7x_1 + 0.6x_2-50 \geq 0$. As each point in the feature space corresponds to a state, we can predict whether the candidate is likely to win in that state or not, depending on which side of the hyperplane the point falls in. Although this is a simplified example, in reality, the decision boundary may be more complex and involve multiple features.
\end{example}

Although the above three examples have Euclidean outcome space, the combinatorial securities can be designed for abstract set system.  We introduce two additional scenarios where the outcome space can be permutations or hypercubes~\cite{Chen_CombBet(2007),DBLP:journals/corr/abs-0802-1362}.

\begin{example}[Top $L$ candidates]\label{ex:topK}
Top $L$ securities allows traders to bet on the outcome of top $L$ candidates among $K\ge L$ candidates.  Given a positive integer $L\le K$ and a set $H\subset [K]$ with $|H| = L$, a top $L$ security for $H$ is the set of permutations where the top $L$ candidates are from $H$. There are a total of $\binom{K}{L}$ top $L$ securities and $n := K!$ possible outcomes. 
%\xw{this has finite VC-d?}\fang{yes when $L$ is constant}
\end{example}

\begin{example}[Permutations~\cite{Chen_CombBet(2007)}]\label{ex:pairing}
% Consider a permutation market for a running race. The market focuses on race-finish orders and combinatorial securities to represent different positions. The time required for each participant to complete the event is denoted by the continuous x, which ranges from the fastest to the slowest time. Traders have the option to select either position or candidate subset betting \cite{Chen_CombBet(2007)} as equivalent classes. For example, traders might wager on a favored participant to finish in a particular spot in the race or a particular group of opponents to finish in a particular position. % This way, traders can bet on broader categories that cover multiple specific permutations. Reducing the granularity can make the market more manageable while still capturing the complexity of predicting finishing orders in a race.
Pairing securities allows traders to bet on whether one candidate will rank higher than another candidate, where the outcomes are permutations of $K$ candidates. For all $i\neq j\in [K]$, a pairing set
$\tau_{(i,j)}$ is the set of permutations where $i$ ranked higher than $j$.  There are a total of $K(K-1)$ different pairing sets in $\mathcal{F}$ and $n := K!$ possible outcomes. 
\end{example}

\begin{example}[Boolean function\cite{DBLP:journals/corr/abs-0802-1362}]\label{ex:boolean}
    Given $K$, the outcome space is $\mathcal{X} = \{0,1\}^K$, and each securities corresponds a boolean function $\psi:\mathcal{X}\to \{0,1\}$.  A function is $L$-junta if the output only depends on $L$ coordinates of the input.  For instance, $1$-junta functions are $\psi_i(x) = x_i$, for all $x = (x_1, \dots, x_K)$ and $i = 1,\dots, K$.  We further define the $1$-junta set system as $\{E_i: i = 1,\dots, K\}$ so that $E_i = \{x\in \mathcal{X}: \psi_i(x) = 1\}$, and call the corresponding securities as $1$-junta securities.  
 Note that the disjunctions of two coordinate securities in \cite{DBLP:journals/corr/abs-0802-1362} belongs to $2$-junta securities and contains our $1$-junta securities as special case. 
\end{example}

\subsection{Complexity of set systems}
In addition to the above examples, we will leverage the VC dimension and dual shattering dimension to measure the complexity of general set systems and demonstrate how to use these measures to characterize the computational complexity of the AMM design problem.
\begin{Definition}
    \label{def:vc}
    \xw{adapted from the original VC-d paper?}\fang{I am not sure.  I can find a reference for this}
    Given a set system $(\mathcal{X}, \mathcal{F})$ and $\mathcal{X}'\subseteq \mathcal{X}$, let $\Pi_\mathcal{F}(\mathcal{X}') = \{\mathcal{X}'\cap E: E\in \mathcal{F}\}$ be all intersections between $\mathcal{X}'$ and elements of $\mathcal{F}$. 
    %and characterizes all the ways in which $\mathcal{X}'$ can be hit by any E.  
    We say $\mathcal{X}'$ is \emph{shattered} by $\mathcal{F}$ if $\Pi_\mathcal{F}(\mathcal{X}') = 2^{\mathcal{X}'}$, the power set of $\mathcal{X}'$.  
    The \defn{Vapnik-Chervonenkis dimension} of $(\mathcal{X}, \mathcal{F})$, or VC-dimension 
for short, is the size of the largest set $\mathcal{X}'$ that is shattered by $\mathcal{F}$.
\end{Definition}
We say the collection of  securities has a finite (or bounded) VC dimension if the associated set system  has a finite VC dimension.  
It is well-known that the VC dimensions of \cref{ex:one,ex:orthogonal,ex:hyperplane,ex:topK} are all finite when $d$ and $L$ are finite.  We will further show that the VC dimensions of \cref{ex:pairing,ex:boolean} are infinite in \cref{cor:vc}.

Similar to the VC dimension, the dual shattering dimension measures the complexity of a set system. 
\begin{Definition}\label{def:shattering}
    Given a set system $(\mathcal{X}, \mathcal{F})$, let $\mathcal{A}\subseteq \mathcal{F}$ be a subset of $\mathcal{F}$.  Two points $x, y\in \mathcal{X}$ are $\mathcal{A}$-equivalent if $x\in E\Leftrightarrow y\in E$ for any $E\in \mathcal{A}$.  For an integer $m$, the \emph{dual shatter function} $\pi^*_\mathcal{F}(m)$ is the maximum number of $\mathcal{A}$-equivalence classes on all possible $m$-set $\mathcal{A}\subseteq \mathcal{F}$, and the \defn{dual shattering dimension} is the smallest $d$ such that $\pi^*_\mathcal{F}(m) = O(m^d)$.  
\end{Definition}

Noteworthy, a set system has a finite VC dimension if and only if the dual shattering dimension is finite.\footnote{Specifically, if a set system has VC dimension equals $D$, the dual shattering dimension $d$ is bounded by $2^{D+1}$.  
Conversely, if the set system has $d$ dual shattering dimension, the VC dimension $D$ is bounded by $O(d^{O(d)})$~\cite{assouad1983densite,Chazelle1989}.  
Moreover, the dual shattering dimension might be smaller than the VC dimension of the range space.  
Indeed, in the case of spheres and hyperplanes in $\R^d$, the dual shattering dimensions are just $d$, while the VC dimensions are both $d+1$.}  
In geometric settings, bounding the dual shattering dimension is relatively easy, as it depends on the complexity of the arrangement of $m$ ranges of this space.  
For instance, the dual shattering dimension of lines on $\R^2$ is $2$, because the maximum number of distinct regions partitioned by $m$ lines is $\frac{m^2+m+2}{2} = O(m^2)$.  
The dual shattering dimension for hyperplane on $\R^d$ is $d$, because the distinct regions partitioned by $m$ hyperplanes is $\sum_{i = 0}^d\binom{m}{i}$~\cite[Proposition 2.4.]{stanley2004introduction}.  Similarly, the dual shattering dimension of $d$-spheres on $\R^d$ is $d$, because the distinct regions partitioned by $m$ sphere is $\binom{m-1}{d}+\sum_{i = 0}^d\binom{m}{i}$~\cite{2832639}.  

% \begin{example}[Hierarchical sets]\label{ex:hierarchical}
%     \fang{any good example?}
% \end{example}

% Interestingly, we show that the VC dimension of pairing and $1$-junta set system are infinite.

% The following example is combinatorial securities for any set system with a bounded VC dimension where all \cref{ex:one,ex:orthogonal,ex:hyperplane} have bounded VC dimension.

% \begin{example}[Bounded VC dimension]\label{ex:vc}

% \end{example}

\subsection{Range query and range update (RQRU)}\label{sec:rqru}
Range query is a classical problem in computational geometry with many variants~\cite{DBLP:books/lib/Berg97}.  
A typical range-query problem on a set system $(\mathcal{X}, \mathcal{F})$ tries to address questions regarding elements $E$ of $\mathcal{F}$.  For example, one might want to count the number of points within each $E$ or compute the sum of weights associated with all points in $E$.
Here we introduce one variant for LMSR, and we will extend the definition for other market scoring rule settings in \cref{sec:beyondlmsr}, and finally general RQRU in \cref{app:rqrug}.  

Given a set system $(\mathcal{X}, \mathcal{F})$, each point $x\in \mathcal{X}$ is assigned a positive weight $W(x)\in \mathbb{R}_+$, where $\R_+$ is the set of positive real numbers.  For any subset (range) $E\subseteq \mathcal{X}$, let $W(E):=\sum_{x\in E} W(x)$.  Range query problems ask for algorithms that preprocesses a set system $(\mathcal{X}, \mathcal{F})$ into a data structure that computes and updates the weight $W(E)$ efficiently for any range $E\in \mathcal{F}$.  Formally, 
\begin{Definition}\label{def:rqru}
    The \defn{range query with multiplication range update problem}, $(+,\cdot)$-RQRU for short, gives a set system $(\mathcal{X}, \mathcal{F})$ and initial weights $W^{(0)}:\mathcal{X}\to \mathbb{R}_+$.  It requests a sequence of operations, taking one of the following forms: for any $E\in \mathcal{F}$ and $S\in \mathbb{R}_+$:
\begin{itemize}
    \item $\query(E; W)$: compute and return the total weight of range $E$, $W(E) = \sum_{x \in E} W(x)$.
    \item $\update(E, S; W)$: for each $x \in E$, update $W(x) = S\cdot W(x)$, and for each $x'\notin E$, $W(x') = W(x')$.  
\end{itemize}
\end{Definition}
Similarly, we will use $W$ to denote a generic symbol for the weights of each points and add superscript $t$ to emphasize the state at round $t$, $W^{(t)}$.  We may omit $W$ and write $\query(E), \update(E,S)$ when there is no ambiguity.  
We will use RQRU to refer $(+,\cdot)$-RQRU.  In later section, we will generalize it to $(\oplus, \otimes)$-RQRU where the query uses function operator $\oplus$ and update function uses $\otimes$ defined in \cref{app:rqrug}

We measure the performance of a data structure by the time spent for each operation when the size of $\mathcal{X}$ is $n$.  Specifically, let $T_Q(n)$ be the \emph{query time} to support the range query, $T_U(n)$ be \emph{update time} for the updates, $\max\{T_Q(n), T_U(n)\}$ be the \emph{running time}.  Finally, the time for initialized the data structure is called \emph{preprocessing time} $T_I(n)$ which is generally less critical since the data structure is constructed only once.

\section{Algorithmic and hardness results for LMSR}\label{sec:lmsr}

% In this section, we introduce our framework for LMSR algorithms design, which consists of two parts: a reduction from LMSR to an RQRU problem and an adaption of a partition tree to the RQRU problem.  
We first establish equivalence between LMSR and RQRU in \cref{thm:equiv}.  
Then, we delve into the exploration of possibilities and limitations associated with LMSR.  \Cref{sec:alg} introduces a general framework from computational geometry, partition tree, for designing efficient LMSR algorithms.
% In \cref{sec:examples}, we apply the partition tree scheme and provide sublinear query and update time across all \cref{ex:one,ex:orthogonal,ex:hyperplane,ex:topK}, showcasing the versatility and power in practical applications. 
Using \cref{thm:equiv}, we provide some hardness results for LMSR algorithms in \cref{sec:hardness}.

\subsection{Equivalence between LMSR and RQRU}\label{sec:equiv}
One main contribution is establishing an equivalence between LMSR in \cref{def:lmsr} and RQRU in \cref{def:rqru} which enables us to leverage tools from computational geometry to derive algorithmic as well as hardness results for LMSR.

\begin{theorem}\label{thm:equiv}
    For any set system $(\mathcal{X}, \mathcal{F})$ with $|\mathcal{X}| = n$, if there is a $(+,\cdot)$-RQRU algorithm on $(\mathcal{X}, \mathcal{F})$ with $T_Q(n)$ query time and $T_U(n)$ range update time, there exists a LMSR algorithm on $(\mathcal{X}, \mathcal{F})$ that can support price operation in $2T_Q(n)+1$, buy operation in $2T_Q(n)+T_U(n)+2$, and cost operation in $2T_Q(n)+2T_U(n)+7$ using the same order of space. 
    
    Conversely, if there is an LMSR algorithm on $(\mathcal{X}, \mathcal{F})$ that can support price operation in $T_P(n)$ and buy operation in $T_B(n)$, there is a $(+,\cdot)$-RQRU algorithm  on $(\mathcal{X}, \mathcal{F})$ with $T_P(n)+1$ query time and $T_P(n)+T_B(n)+4$ range update time using the same order of space.
\end{theorem}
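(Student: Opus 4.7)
The plan is to exploit the bijection $W(x) = e^{w_x/b}$ between LMSR states and positive RQRU weights. Under this mapping, a buy operation $w_x \mapsto w_x + s$ for $x \in E$ becomes exactly the multiplicative range update $W(x) \mapsto e^{s/b} W(x)$, while the LMSR price formula in \cref{eq:def_price} becomes the ratio $W(E)/W(\mathcal{X})$. Thus each LMSR primitive reduces to a constant number of $(+,\cdot)$-RQRU primitives and vice versa; the stated operation counts will follow by composing these primitives carefully.

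For the forward direction, given a $(+,\cdot)$-RQRU data structure on $(\mathcal{X},\mathcal{F})$, I would preprocess it with $W^{(0)}(x) := e^{w_x^{(0)}/b}$ and then implement the LMSR primitives as follows. The price operation simply issues two range queries, one for $W(E)$ and one for $W(\mathcal{X})$, and returns their ratio, costing $2T_Q(n) + 1$. The buy operation first calls price (two queries to learn the transaction price) and then invokes $\update(E, e^{s/b})$, giving $2T_Q(n) + T_U(n) + 2$. For cost, I would query $W(\mathcal{X})$, apply the multiplicative update with factor $e^{s/b}$, query the new total, undo the update with factor $e^{-s/b}$, and return $b$ times the log-ratio of the two totals; this generic buy-query-undo approach (which avoids relying on the closed form in \cref{eq:def_cost}) costs $2T_Q(n) + 2T_U(n) + O(1)$ and matches the $2T_Q(n) + 2T_U(n) + 7$ bound. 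Only $O(1)$ additional space beyond the RQRU structure is used.

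For the converse, I would preprocess by setting $w_x^{(0)} := b \ln W^{(0)}(x)$, instantiating the LMSR data structure, and additionally maintaining a single scalar $Z := W(\mathcal{X}) = \sum_x W^{(0)}(x)$, computed once in $O(n)$ time during preprocessing. Since price returns only the ratio $W(E)/W(\mathcal{X})$, $Z$ is the missing piece needed to recover $W(E)$. Specifically, $\query(E)$ is implemented as $\price(E) \cdot Z$, costing $T_P(n) + 1$. For $\update(E, S)$, I would first compute $W(E) = \price(E) \cdot Z$ using the \emph{current} $Z$, then call $\buy(E, b \ln S)$, and finally refresh $Z \gets Z + (S-1)\,W(E)$; correctness follows because after a multiplication update on $E$ the new total equals $W(\mathcal{X}) - W(E) + S \cdot W(E)$. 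The total cost is $T_P(n) + T_B(n) + 4$ with $O(1)$ extra space, as claimed.

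The main delicate point is the backward direction: RQRU as defined in \cref{def:rqru} has no primitive exposing $W(\mathcal{X})$, while LMSR only exposes ratios through $\price$ and in-place updates through $\buy$. I resolve this by observing that $W(\mathcal{X})$ is fully determined by the initial state and the update history, and can therefore be tracked by one auxiliary scalar refreshed in $O(1)$ arithmetic per update. The ordering must be respected carefully — $W(E)$ must be read off from $\price(E)$ \emph{before} the buy commits, otherwise the ratio returned by price no longer corresponds to the pre-update weights. Once this ordering and the identification in \cref{eq:LMSR} are in place, all remaining verifications are routine algebraic manipulations of \cref{eq:def_price,eq:def_cost}.
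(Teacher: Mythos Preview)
Your proposal is correct and follows essentially the same reduction as the paper: both directions rest on the bijection $W(x)=e^{w_x/b}$, and your converse construction (maintaining a scalar $Z=W(\mathcal{X})$ alongside the LMSR state and reading $W(E)=Z\cdot\price(E)$ \emph{before} committing the $\buy$) is identical to the paper's. The one cosmetic difference is in the forward direction: the paper caches $M=W(\mathcal{X})$, so $\price$ costs a single query and the two queries charged to $\buy$ are spent refreshing $M$ after the multiplicative update; in your scheme you re-query $\mathcal{X}$ on every $\price$ call, which means your $\buy$ actually needs no queries at all---the two you insert ``to learn the transaction price'' are redundant, since $\buy$ in \cref{def:lmsr} returns nothing.
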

The above reduction applies for all possible set system $(\mathcal{X}, \mathcal{F})$ and has asymptotic tight overhead where the running time of an LMSR can be of the same order as the running time of an RQRU algorithms. 
The key observation is that only the buy operation can alter the state of the market, and we only need to maintain sufficient information to address price and cost operations after each buy operation.  For LMSR, maintaining $\sum_{x\in E} e^{w(x)/b}$ for all $E\in \mathcal{F}$ is sufficient.  Later in \cref{sec:beyondlmsr}, we will show how to extend this idea to prediction markets for other scoring rules.  

\begin{proof}[Proof of Theorem~\ref{thm:equiv}]

  We first show a reduction from a LMSR market to a RQRU algorithm.  Given an initial state (the numbers of outstanding securities) $\vw^{(0)}$ on $(\mathcal{X}, \mathcal{F})$, we run RQRU on initial weight $W^{(0)}$ where $W^{(0)}(x) = e^{bw^{(0)}_x}$ for all $x\in \mathcal{X}$ and store an additional variable $M:=\sum_x W^{(0)}(x)$.  
    \begin{itemize}
        \item For each price operation with $E\in \mathcal{F}$, we return $\query(E)/M$ by calling the range query function from the RQRU algorithm.  
        \item For each buy operation with $E\in \mathcal{F}$ and share $s\in \mathbb{R}$, we run the following four steps: compute $a = \query(E)$, run update with set $E$ and $e^s$, $\update(E, e^{bs})$, $a' = \query(E)$, and $M\gets M-a+a'$.   
        \item Finally, to compute a cost operation with set $E$ and share $s$, we run the following three steps: First, run $\update(E,e^{bs})$ and compute $c' = \ln(\query(\mathcal{X}))$.  Second, run $\update(E, e^{-bs})$ and compute $c = \ln(\query(\mathcal{X}))$. Third, return $c'-c$. 
    \end{itemize}
    Note that a price operation takes one range queries with one arithmetic operation (division), a buy operation takes one update query, two queries and two arithmetic operation (one exponentiation and one multiplication), and a cost operation takes two range queries, two update queries, and seven arithmetic operations (one subtraction, two multiplications, two log, and two exponentiation), which proves the time complexity.  

    To prove the correctness, we first use induction on the sequence of operations to show that the weights in RQRU always equals exponential of the shares in the LMSR market for all round $t$,
    \begin{equation}
        M^{(t)} = \sum_{x\in \mathcal{X}}e^{bw^{(t)}_x} \text{ and } W^{(t)}(x) = e^{bw^{(t)}_x}\text{ for all }x\in \mathcal{X}\label{eq:red_inv}
    \end{equation}  
    where subscript $t$ emphasizes the variable at round $t$. 
    
    The based case holds by initialization.  If we encounter a buy operation $\buy(E, s)$ at round $t+1$, the share of $x\in E$ is updated from $w^{(t)}_x$ to $w^{(t+1)}_x = w^{(t)}_x+s$, and the above reduction also updates $W^{(t)}(x)$ to $W^{(t+1)}(x) = W^{(t)}(x) e^{bs} = e^{bw^{(t)}_x+bs} = e^{bw^{(t+1)}_x}$.  The equality also holds for all $x\notin E$.  Moreover, because $a = \sum_{x\in E} e^{bw^{(t)}_x}$ and $a' = \sum_{x\in E} e^{bw^{(t+1)}_x}$, $M = \sum_{x\in \mathcal{X}} e^{bw^{(t)}_x}-\sum_{x\in E} e^{bw^{(t)}_x}+\sum_{x\in E} e^{bw^{(t+1)}_x} = \sum_{x\in \mathcal{X}} e^{bw^{(t+1)}_x}$.  Therefore, we prove \cref{eq:red_inv} as other two operations do not change the state $W^{(t+1)} = W^{(t)}$ and $w^{(t+1)} = w^{(t)}$.  We then show the reduction answers price and cost queries correctly.  Given a price operation with $E$ at round $t$, the reduction returns 
    $$\frac{\query(E)}{M} = \frac{\sum_{x\in E}W^{(t)}(x)}{\sum_{x\in \mathcal{X}}W^{(t)}(x)} = \frac{\sum_{x\in E}e^{bw^{(t)}_x}}{\sum_{x\in \mathcal{X}}e^{bw^{(t)}_x}}$$ 
    which equals $\price(E;\vw^{(t)})$ in \cref{eq:def_price}.  Given a cost operation with $E$ and $s$ share at round $t$, the reduction computes $c' = \ln\left(\sum_{x\in E}W^{(t-1)}(x)e^s+\sum_{x\notin E}W^{(t-1)}(x)\right)$ and $c = \ln\left(\sum_{x\in E}W^{(t-1)}(x)+\sum_{x\notin E}W^{(t-1)}(x)\right)$.  Because $W^{(t-1)} = \vw^{(t)}$, 
    \begin{align*}
        c'-c = \ln\left(\sum_{x\in E}e^{bw^{(t)}_x}e^{bs}+\sum_{x\notin E}e^{bw^{(t)}_x}\right)-\ln\left(\sum_{x\in E}e^{bw^{(t)}_x}+\sum_{x\notin E}e^{bw^{(t)}_x}\right)
    \end{align*} which equals $\cost(E,s;\vw^{(t)})$ in \cref{eq:def_cost}. 

    For the other direction, if we have a LMSR market for $(\mathcal{X}, \mathcal{F})$, we construct RQRU with the following reduction:  Given an initial weight $W^{(0)}$, we create an additional normalizing variable $M$ with initial value equal to $\sum_x W^{(0)}(x)$ and run LMSR market with initial state $\vw^{(0)}$ and $b = 1$ so that $w^{(0)}_x = \ln W^{(0)}(x)$ for all $x\in \mathcal{X}$.
    \begin{itemize}
        \item For each range query with $E\in \mathcal{F}$, we return $M\cdot\price(E)$ by calling the price operation from the LMSR market algorithm.  
        \item For each update query with set $E\in \mathcal{F}$ and $S\in  \mathbb{R}_{>0}$, we first update the normalizing variable $M$ to $M\left(1+\price(E)(S-1)\right)$, and then run the buy operation with set $E$ and $\ln S$, $\buy(E, \ln S)$, from the LMSR market algorithm.
    \end{itemize} 
    Similar to the first part, the reduction has the following following invariant
    \begin{equation}
        M^{(t)} = \sum_x W^{(t)}(x)\text{ and }w^{(t)}_x = \ln W^{(t)}(x)\text{ for all }x\text{ and }t.\label{eq:red_inv2}
    \end{equation}
    To show the first part of \cref{eq:red_inv2}, we note that given an update query with $E\in \mathcal{F}$ and $S>0$ at round $t+1$, because $w^{(t+1)}_x = Sw^{(t)}_x$ if $x\in E$ and $w^{(t+1)}_x = w^{(t)}_x$ otherwise, $$M^{(t+1)} = \sum_x e^{w^{(t)}_x}\left(1+\frac{\sum_{x\in E}e^{w^{(t)}_x}}{\sum_x e^{w^{(t)}_x}}\right)(S-1) = \sum_{x\notin E}e^{w^{(t)}_x}+\sum_{x\in E}Se^{w^{(t)}_x} = \sum_x e^{w^{(t+1)}_x} = \sum_x W^{(t+1)}(x).$$
    The rest is similar to \cref{eq:red_inv}'s.  With \cref{eq:red_inv2,eq:def_price}, given a range query with $E$ at round $t$, the reduction returns 
    $M^{(t)}\price(E) = \sum_{x\in E} e^{w^{(t)}_x} = \sum_{x\in E} W^{(t)}(x)$ which completes the proof.
\end{proof}

% and update $\sum_{x\in E} e^{w(x)/b}\mapsto e^{s/b}\cdot \sum_{x\in E} e^{w(x)/b}$ for all $E\in \mathcal{F}$ and $s\in \R$ which is exactly addition query and multiplication updates on weights $e^{w(x)/b}$.

\subsection{Partition tree scheme for LMSR}\label{sec:alg}
Now, we introduce the partition tree scheme, which has been extensively used in computational geometry, and design a lazy propagation algorithm on partition trees that supports $(+,\cdot)$-RQRU and thus LMSR. 
We introduce necessary notions for the partition tree scheme in \cref{sec:pre_partition} and define our lazy propagation algorithms for RQRU in \cref{sec:lazy}.  Then, we demonstrate the efficacy of our algorithms for LMSR summarized in \cref{tab:results}.  

\subsubsection{Partition tree scheme}\label{sec:pre_partition}
Partition tree scheme is a fundamental data structure for range query problem that contain one-dimensional segment tree (\cref{fig:tree}) and $k$-d trees as special cases.~\cite{willard1982polygon,Chazelle1989}  A partition tree utilizes the idea of recursively subdividing space into regions with nice properties so that it can support range query by a depth-first search on the partitioned space.  Here we outline a general scheme for a partition tree which is mostly based on the seminal work by \citet{Chazelle1989}.  
Readers may refer to \cref{fig:tree} for intuition. Those already familiar with the partition tree may skip the discussion following~\cref{def:partitiontree}.

\begin{Definition}[Partition tree scheme]\label{def:partitiontree}
    Given a set of $n$ points $\mathcal{X}$, we preprocess a family of canonical subsets of $\mathcal{X}$ denoted as $\mathcal{N}\subset 2^\mathcal{X}$ and store the weights of those sets in a rooted tree $\mathcal{T} = (\mathcal{V},\mathcal{E})$.  Each node $v\in \mathcal{V}$ of $\mathcal{T}$ is associated with a canonical subset $N(v)\in \mathcal{N}$ called a \emph{node-set} and a list of its children $\mathcal{C}(v)\subset \mathcal{V}$.  
    For any internal node $v$, its children's node-sets form a partition of its node-set so that $\cup_{u\in \mathcal{C}(v)} N(u) = N(v)$ and $N(u)\cap N(u') = \emptyset$ for all $u\neq u'\in \mathcal{C}(v)$.
    The node-set of the root is the universe $N(root) = \mathcal{X}$, and node-sets of leaves are singletons.  
    In additional to the weights $W(N(v))$, each node can store additional auxiliary information, e.g., an encoding of the node-set $N(v)$.\looseness=-1
\end{Definition}

To avoid redundancy, we prohibit any node from having exactly one child, and thus the number of nodes in a partition tree is linear in $n$.  Additionally, common node-set can be encoded succinctly, e.g., the boundary of an interval, and the resulting partition trees are linear-sized data structure. 

Now we illustrate how to use a partition tree to support range query problem and potential issues for range update.  Given a range query with $E\in \mathcal{F}$, we can performs a depth-first search on a partition tree
$\mathcal{T} = (\mathcal{V},\mathcal{E})$ starting from its root with $ans = 0$.  At each node $v\in \mathcal{V}$, we update $ans$ according to following three cases between sets $N(v)$ and $E$.  
%\xw{I'd like to discuss the part below. Shall we formulate from RQRU perspective, or LMSR functions, or both?}

% $\query(r)$ and $\update(r, S)$ functions of RQRU algorithm, each node-set is assigned a weight $w(N(v))$. Where if $N(v) = \{x\}$ then $w(N(v)) = w(x)$, otherwise $w(N(v)) = \sum_{x\in N(v)} w(x)$. If the set $r$ is a subset of $N(root)$, then a recursive procedure is carried out across each $v$ in the $T$ until, all $N(v)=\{x\} : x\in r$ are found. Once the input is verified, the $\query(r)$ function follows a general structure that can be outlined as shown below. 
\begin{enumerate}
    \item If $E$ contains $N(v)$, we add weight of $N(v)$ to the answer $ans \gets ans + W(N(v))$ and return.
    \item If $E$ and $N(v)$ are disjoint, then return.
    \item If $E$ \emph{crosses} node-set $N(v)$ so that $E$ intersects but does not contain $N(v)$, we recursively call this procedure on all children of $v$.
\end{enumerate}
The above procedure partitions range $E$ into a collection of canonical subset from the first case and returns the sum of the weights.  The query time depends on how many nodes are visited and how long it takes to decide the relationship between $N(v)$ and $E$.  We focus exclusively on the number of nodes visited when answering a query and will sweep the latter under the rug.\footnote{This is known as arithmetic model of computation, where attention is focused on the number of arithmetic operations needed to answer a query and not on the number of steps taken by the algorithm.  Contrarily, deciding the relationship of two subsets $E$ and $E'$ can be computationally hard.  Consider two Turing machines: let $E$ represent the set of inputs with length at most $\log n$ where the first Turing machine halts, and $E'$ represent the inputs where the second Turing machine halts.}

The number of nodes visited in a query depends on the complexity of $\mathcal{F}$ (and $\mathcal{N}$).  As the third case, we say $E\subseteq \mathcal{X}$ crosses $A\subseteq \mathcal{X}$ if $E$ intersects but does not contain $A$, and a range query with $E$ \emph{visits} node $v$ if $v$ is the root or $E$ crosses the canonical set of $v$'s parent and the \defn{visiting number} of a partition tree on set system $(\mathcal{X}, \mathcal{F})$ is the maximum number of nodes visited by any single query in $\mathcal{F}$.  The visiting number amounts to the query time and depends on the complexity of $\mathcal{F}$.  For instance, if $\mathcal{F}$ is the power set of $\mathcal{X}$ which consists of all possible subsets of $\mathcal{X}$, the visiting number can be linear in the size of the partition tree by \citet{Chazelle1989} which is of order $|\mathcal{X}| = n$.  On the other hand, we may design optimal partition trees minimizing the visiting number as long as the set system does not allow queries to cross $\mathcal{X}$ in a fairly arbitrary manner.  

However, the update operation can be more expensive when the range $E$ is large.  For instance, if $E = \mathcal{X}$, the update affects all canonical sets in the partition tree, which is at least $n$.  One of our contributions is to design a lazy propagation algorithm so that the update time is similar to the above query time with little overhead.

\subsubsection{Lazy propagation on partition trees}\label{sec:lazy}
\Cref{alg:lazyUpdate,alg:lazyquery} present our lazy propagation algorithm for weight update and query respectively.  With \cref{thm:equiv}, our partition-tree-based algorithm can support LMSR for any set system.  \Cref{thm:visiting} shows that not only query time but also update time are big $O$ of the visiting number of the partition tree on the set system.  We discuss the construction of partition trees with small visiting numbers in \cref{sec:examples}.  

\begin{theorem}\label{thm:visiting}
    Given a set system $(\mathcal{X}, \mathcal{F})$ and a partition tree $\mathcal{T}$, the query time $T_Q(n)$ of \cref{alg:lazyquery} and the update $T_U(n)$ of \cref{alg:lazyUpdate} on $\mathcal{T}$ are big $O$ of the visiting number of $\mathcal{T}$ on $(\mathcal{X}, \mathcal{F})$. 

    Moreover, with \cref{thm:equiv}, $\mathcal{T}$ can support price, cost, and buy operations for LMSR with running time big $O$ of the visiting number of $\mathcal{T}$ on $(\mathcal{X}, \mathcal{F})$.
\end{theorem}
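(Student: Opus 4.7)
The plan is to prove the two algorithmic claims (query and update) in parallel, observing that they traverse the same nodes of $\mathcal{T}$, and then derive the LMSR statement as an immediate corollary of \cref{thm:equiv}. My starting point is to specify the lazy-propagation invariant. At each node $v\in\mathcal{V}$ I would store a cached weight $\hat{W}(v)$, maintained so as to always equal the current true $W(N(v))$, together with a pending multiplier $\mu(v)\in\mathbb{R}_+$ (initially $1$) with the following semantics: the cached weight $\hat{W}(c)$ of any child $c\in\mathcal{C}(v)$ does \emph{not} yet reflect the factor $\mu(v)$, and analogously for deeper descendants along paths whose pending multipliers have not been flushed. The atomic \emph{push-down} at $v$ sets $\hat{W}(c)\gets \hat{W}(c)\cdot\mu(v)$ and $\mu(c)\gets \mu(c)\cdot\mu(v)$ for every $c\in\mathcal{C}(v)$, then resets $\mu(v)\gets 1$; this clearly preserves $\hat{W}(c)=W(N(c))$ because every point of $N(c)$ was multiplied by the same factor $\mu(v)$.

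Next I would describe the two procedures from \cref{alg:lazyquery,alg:lazyUpdate} in terms of the standard three-case DFS in \cref{sec:pre_partition}. For a query with range $E\in\mathcal{F}$, the DFS at $v$ returns $0$ if $E\cap N(v)=\emptyset$, returns $\hat{W}(v)$ if $N(v)\subseteq E$, and otherwise first performs a push-down at $v$ and then returns $\sum_{c\in\mathcal{C}(v)}\mathsf{query}(c)$. For an update with range $E$ and multiplier $S$, the DFS at $v$ does nothing if disjoint, sets $\hat{W}(v)\gets S\cdot \hat{W}(v)$ and $\mu(v)\gets S\cdot\mu(v)$ if contained, and otherwise pushes down, recurses on all children, then restores $\hat{W}(v)\gets \sum_{c\in\mathcal{C}(v)}\hat{W}(c)$ on the way back up. A straightforward induction on the tree, with base case at leaves (where the singleton node-set makes correctness trivial), shows that both procedures maintain the invariant and return the correct answer.

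For the running time, I would observe that both DFS traversals visit exactly the nodes $v$ such that either $v$ is the root or $E$ crosses $N(\mathrm{par}(v))$, i.e., precisely the \emph{visited} nodes of \cref{sec:pre_partition}. The only non-$O(1)$ step per visit is the push-down at a crossing node, which touches every child of that node; but by the very definition of a visited node, each such child is itself visited. Hence the total push-down work across a single query or update is bounded above by the number of visited nodes, and the overall cost of each operation is $O(V_\mathcal{T}(\mathcal{X},\mathcal{F}))$ where $V_\mathcal{T}$ denotes the visiting number. The main obstacle I anticipate is stating the invariant in a form that is both strong enough to let the push-down re-establish correctness after every recursion and loose enough to avoid eagerly flushing multipliers down paths the DFS never descends; the chosen formulation above threads that needle by attaching the pending-factor debt to the child boundary rather than to individual points.

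Finally, the LMSR half is immediate from \cref{thm:equiv}: since the reductions in both directions incur only an $O(1)$ multiplicative overhead and an $O(1)$ number of $\query$/$\update$ calls per $\price$, $\cost$, or $\buy$, an RQRU scheme with query time and update time both $O(V_\mathcal{T}(\mathcal{X},\mathcal{F}))$ yields an LMSR market on $(\mathcal{X},\mathcal{F})$ whose price, cost, and buy operations each run in $O(V_\mathcal{T}(\mathcal{X},\mathcal{F}))$ time, as claimed.
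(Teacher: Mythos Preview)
Your approach is the same as the paper's: lazy propagation on the partition tree, with the DFS stopping at boundary nodes and pushing pending multipliers down only at crossing nodes. The paper organises correctness through three auxiliary claims (visited nodes finish with $\pend=1$; the ``contained'' boundary nodes partition $E$; and the product invariant $\val(v)\cdot\prod_{u:\,N(v)\subseteq N(u)}\pend(u)=W(N(v))$), whereas you sketch a direct tree induction; your charging argument for the push-down cost (each push-down touches only children of a crossing node, and such children are themselves visited) is in fact a bit more explicit than the paper's one-line appeal to the number of recursions.

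The one place to tighten is your statement of the invariant, which as written is self-contradictory. You say $\hat{W}(v)$ is ``maintained so as to always equal the current true $W(N(v))$,'' and in the very next clause that a child's cached weight ``does not yet reflect the factor $\mu(v)$''; whenever $\mu(v)\neq 1$ these cannot both hold at the child. The invariant your procedures actually preserve is
\[
\hat{W}(v)\cdot\!\!\prod_{u\text{ strict ancestor of }v}\!\!\mu(u)\;=\;W\bigl(N(v)\bigr),
\]
which is precisely the paper's \cref{lem:lazy1} with the pending factor stored at the parent rather than, as in \cref{alg:lazyUpdate}, pushed immediately to the children. Correspondingly, your push-down does not ``preserve $\hat{W}(c)=W(N(c))$'' (that equality was false before the push-down); rather, it absorbs one factor of the ancestor product into $\hat{W}(c)$ while keeping the product invariant intact. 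With the invariant stated this way, the induction you outline and the remainder of the argument, including the reduction to \cref{thm:equiv}, go through as you describe.
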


% \begin{theorem}\label{thm:rqru}
%     Let $\mathcal{X}$ be any set of $n$ point, and $\mathcal{F}$ be a collection of subsets on $\mathcal{X}$.  If $\mathcal{F}$ has a dual shatter function $\pi^*_\mathcal{F}(m) = O(m^d)$ for some constant $d$, there exists a linear-sized balanced binary partition tree for RQRU in \cref{def:rqru} so that both \cref{alg:lazyquery}'s query time $T_Q(n)$ and \cref{alg:lazyUpdate}'s update time $T_U(n)$ are both in $O(n^{1-1/d}\log^2 n)$ when $d>1$ and $O(\log^3 n)$ when $d = 1$. 
% \end{theorem}
% With the reduction from \cref{thm:equiv}, we have the following results for LMSR market maker.
% \begin{theorem}\label{thm:lmsr}
%     Let $\mathcal{X}$ be any set of $n$ point, and $\mathcal{F}$ be a collection of subsets on $\mathcal{X}$.  If $\mathcal{F}$ has a dual shatter function $\pi^*_\mathcal{F}(m) = O(m^d)$ for some constant $d$, there exists a linear-sized balanced binary partition tree that can support price query, buy operation, and cost query for LMSR in $O(n^{1-1/d}\log^2 n)$ when $d>1$ and $O(\log^3 n)$ when $d = 1$. 
% \end{theorem}

We defer the proof and formal algorithm statement (\cref{alg:lazyquery,alg:lazyUpdate}) to the appendix.  We illustrate the main idea of lazy propagation.  Instead of performing the update operation immediately, the lazy propagation technique does the update on demand.  Recall that a node in a partition tree stores or represents the results of a query for the node-set.  If the node-set is contained by the update operation range $E$, then all descendants of the node must also be updated, which results in an undesirable update time.   With lazy propagation idea, in the update algorithm (\cref{alg:lazyUpdate}) we stop our update once the node-set is contained by $E$ and postpone updates to its children by storing this update information in a new variable $\pend$ called lazy value.  A value one in $\pend(v)$ indicates that there are no pending updates on node $v$.  A non-identity value means that all descendants need to be multiplied by this amount before making any query to the node.  Since we postpone some updates, we also need to modify our query algorithm (\cref{alg:lazyquery}).  Our algorithm first updates the node if there is a pending update and pushes the lazy value to its children.  Once it makes sure that the pending update is done, it works the same as the original query function.

In \cref{app:rqrug}, we extend those algorithms to  general RQRU with general query and update functions.

\subsubsection{Applications of partition-tree-based algorithms}\label{sec:examples}
We outline various approaches to construct a partition tree with small visiting numbers and summarize our results in \cref{tab:results}.  Many of these outcomes stem from leveraging existing research in computational geometry, with additional examples available in surveys~\cite{har2011geometric,toth2017handbook}.  
First, several set systems already have optimal partition trees, including intervals and orthogonal set systems (\cref{ex:one,ex:orthogonal}).
Second, the dual shattering dimensions of set systems (\cref{def:shattering}) provide tight bounds on the optimal visiting numbers, and admit algorithms to construct near-optimal partition trees for any set system.  Lastly, we show that our algorithm achieves sublinear running time for LMSR on any finite VC dimensional set systems with polynomial construction time.

% ~\cite[TABLE 47.1.1]{toth2017handbook}\fang{\url{https://www.csun.edu/~ctoth/Handbook/HDCG3.html}}

For one dimensional interval securities (\cref{ex:one}), the simple balanced binary trees (\cref{fig:tree}) have visiting numbers in $O(\log n)$, and thus support $O(\log n)$ running time using \cref{thm:visiting}.  
\begin{corollary}\label{prop:one}
    \Cref{alg:lazyquery,alg:lazyUpdate} with the partition tree in \cref{fig:tree} can support LMSR on one dimensional intervals (\cref{ex:one}) in $O(\log n)$.  Additionally, the partition tree uses linear space and can be constructed in $O(n)$.  
\end{corollary}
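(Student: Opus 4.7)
The plan is to verify the three claims in the corollary---$O(\log n)$ per LMSR operation, linear space, and $O(n)$ construction time---by combining the two engines already established in the paper: Theorem~\ref{thm:visiting}, which bounds the query and update times of Algorithms~\ref{alg:lazyquery} and~\ref{alg:lazyUpdate} by the visiting number of the underlying partition tree, and Theorem~\ref{thm:equiv}, which reduces each LMSR price, cost, and buy operation to a constant number of $(+,\cdot)$-RQRU queries and updates. Hence the entire proof boils down to (i) checking the size/construction cost of the balanced binary tree in Figure~\ref{fig:tree}, and (ii) showing its visiting number on the interval set system of Example~\ref{ex:one} is $O(\log n)$.

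For size and preprocessing, I would describe the standard balanced binary segment tree over $\mathcal{X}=\{0,1,\dots,n-1\}$: leaves correspond to singletons, and each internal node's node-set is the union of its two children's node-sets, yielding a canonical family of at most $2n-1$ intervals. Each node stores only its interval endpoints, the aggregate weight $W(N(v))$, and the lazy pending value $\pend(v)$ initialized to $1$, so the tree uses $O(n)$ space. A bottom-up pass that sets leaf weights and propagates sums to parents runs in $O(n)$ time, establishing the ``linear space and $O(n)$ construction'' part of the claim.

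The key step is the $O(\log n)$ visiting-number bound on interval queries. For a query range $E=[i,j]\cap\mathcal{X}$, at each depth of the tree the node-sets form a partition of $\mathcal{X}$ into consecutive sub-intervals. Using the paper's definition, an interval node-set $N(v)$ crosses $E$ iff it partially overlaps $E$, which, since both $N(v)$ and $E$ are intervals, is equivalent to one of the two boundaries of $E$ lying strictly inside $N(v)$. Because the node-sets at a fixed depth are pairwise disjoint, each boundary lies in a unique such node-set, so at most two nodes per depth are crossed by $E$. Since every visited node (other than the root) is a child of a crossed ancestor, and the tree is binary, at each depth at most $4$ nodes are visited. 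Summing over the $\lceil\log_2 n\rceil$ depths yields a visiting number of $O(\log n)$.

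Combining the pieces, Theorem~\ref{thm:visiting} then gives $T_Q(n)=T_U(n)=O(\log n)$ for Algorithms~\ref{alg:lazyquery} and \ref{alg:lazyUpdate} on this tree, and Theorem~\ref{thm:equiv} lifts this to $O(\log n)$ price, cost, and buy operations for LMSR on one-dimensional intervals. I anticipate no substantive obstacle; the only point requiring care is the clean separation between \emph{crossed} and \emph{visited} nodes at each level, which the endpoint-containment argument handles directly.
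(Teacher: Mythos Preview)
Your proposal is correct and follows the same approach as the paper: both bound the visiting number by observing that an interval $E_{(i,j)}$ crosses a node-set $N(v)$ only if $i\in N(v)$ or $j\in N(v)$, so at most two nodes per level are crossed, giving $O(\log n)$ visited nodes, and then invoke Theorem~\ref{thm:visiting}. Your write-up is in fact slightly more careful than the paper's in distinguishing crossed nodes (at most two per level) from visited nodes (at most four per level) and in spelling out the linear-space, $O(n)$-time construction.
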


\begin{figure}
    \centering
    \includegraphics[width = \textwidth]{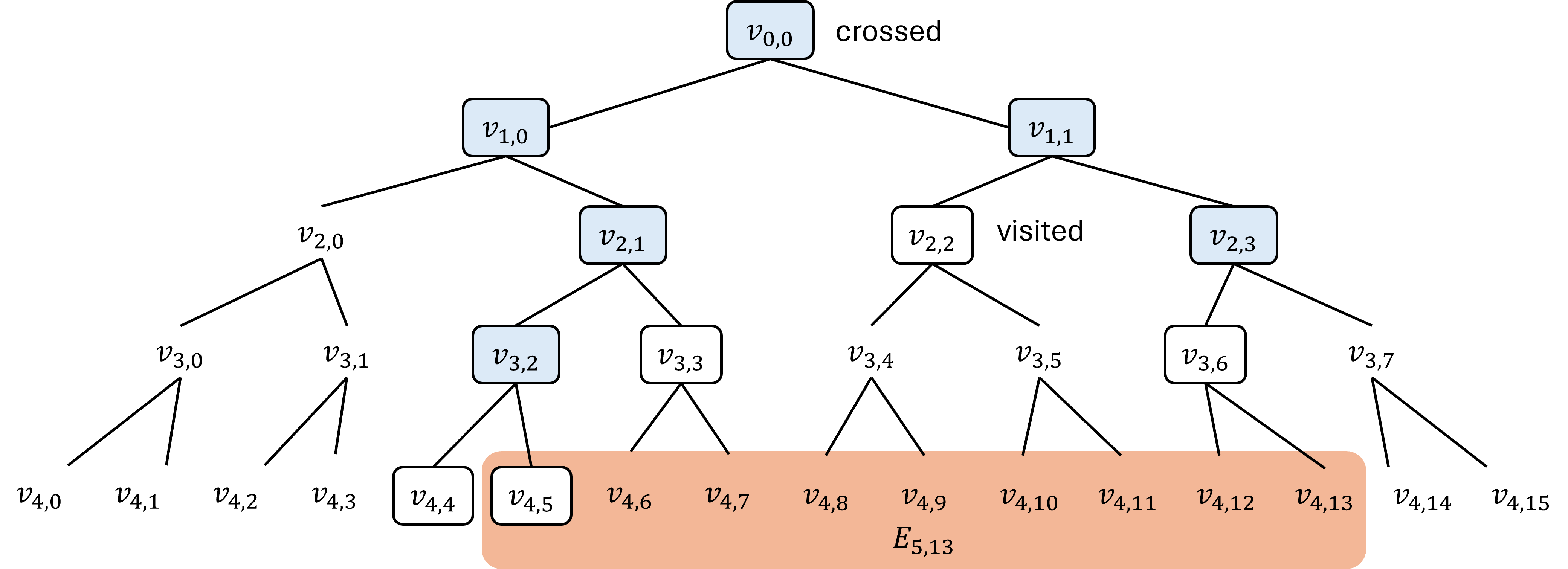}
    \caption{A partition tree for \cref{ex:one} with $n = 16$.  In the figure, we consider a range query with $E_{5,13} = \{5,6,\dots, 13\}$, the squared nodes are visited by the query and blue ones has node set crossed by $E_{(5,13)}$.  More generally, given $n$ points $\mathcal{X} = \{0, \dots, n-1\}$, let $K= \lceil \log_2(n)\rceil$, the height of partition tree is $K$ with node $\mathcal{V} = \{v_{k,l}: k = 0,\dots, \lceil \log_2(n)\rceil, l = 0,\dots, 2^k-1\}$ where $v_{k,l}$ is the $l$-th node at the $k$-th level with node-set $\mathcal{N}(v_{k,l}) = \{l2^{K-k}, l2^{K-k}+1, \dots, (l+1)2^{K-k}-1\}\cap \mathcal{X}$.}
    \label{fig:tree}
\end{figure}
For $d$-dimensional orthogonal securities (\cref{ex:orthogonal}), the $k$-d trees~\cite{DBLP:books/lib/Berg97} have visiting numbers in $O(n^{1-1/d})$. 
\begin{corollary}\label{prop:orthogonal}
    For all $d\ge 2$, \cref{alg:lazyquery,alg:lazyUpdate} with the $k$-d  trees can support LMSR for $d$-dimensional orthogonal set system (\cref{ex:orthogonal}) in $O(n^{1-1/d})$.
    Moreover, the $k$-d trees use linear space and can be constructed in $O(n\log n)$. 
\end{corollary}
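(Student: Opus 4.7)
The plan is to apply the partition-tree framework to the $k$-d tree on $\mathcal{X}\subset \R^d$: build the tree in $O(n\log n)$ time, bound its visiting number on axis-aligned hyperrectangle queries by $O(n^{1-1/d})$, invoke \cref{thm:visiting} to lift this bound to the query and update times of \cref{alg:lazyquery,alg:lazyUpdate}, and finish with \cref{thm:equiv} to obtain the LMSR algorithm at the same asymptotic cost.

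First I would recall the standard $k$-d tree construction: at depth $k$, split the current point set along the median of coordinate $(k\bmod d)+1$ and recurse, terminating at singleton leaves. Each internal node stores its splitting coordinate and value, so node-sets are succinctly encoded as axis-aligned boxes in the sense of \cref{def:partitiontree}. The tree has $2n-1$ nodes and hence linear size, and by presorting along each axis (or $O(n)$-time median selection at each level) the total construction time is $O(n\log n)$.

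The main step is bounding the visiting number on the $d$-dimensional orthogonal set system. A node is visited only if the query hyperrectangle $E_{\vi,\vj}$ crosses its parent's cell, and this forces at least one of the $2d$ coordinate hyperplanes bounding $E_{\vi,\vj}$ to cut through the interior of that cell. It therefore suffices to bound, for a single axis-aligned hyperplane $H$ orthogonal to some coordinate $i$, the number $Q(n)$ of cells crossed in a $k$-d tree over $n$ points. Tracing one full cycle of $d$ consecutive levels, exactly one level splits along coordinate $i$ (so $H$ descends into only one child of a crossed cell) while the other $d-1$ levels split along other coordinates (so $H$ descends into both children); point counts drop by a factor $2^d$ over the cycle, giving
\begin{equation*}
Q(n) \;\le\; 2^{d-1}\, Q\!\left(n/2^d\right) + O\!\left(2^{d}\right),
\end{equation*}
which solves to $Q(n) = O\!\left(n^{(d-1)/d}\right) = O\!\left(n^{1-1/d}\right)$ by the Master theorem. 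Summing over the $2d$ bounding hyperplanes of $E_{\vi,\vj}$ keeps the visiting number at $O(n^{1-1/d})$ for constant $d$.

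Plugging this into \cref{thm:visiting} gives query and update time $O(n^{1-1/d})$ for the $(+,\cdot)$-RQRU algorithm that \cref{alg:lazyquery,alg:lazyUpdate} define on this $k$-d tree, and \cref{thm:equiv} then converts it into an LMSR algorithm on $d$-dimensional orthogonal securities whose price, cost, and buy operations all run in $O(n^{1-1/d})$, within linear space (each of the $O(n)$ nodes carries only the weight, the lazy value $\pend$, and constant-size splitting info). The main obstacle is the recurrence analysis for $Q(n)$, the classical range-searching bound for $k$-d trees; the remaining ingredients are direct applications of results already proved in the excerpt.
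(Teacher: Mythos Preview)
Your proposal is correct and matches the paper's approach: the paper simply asserts that $k$-d trees have visiting number $O(n^{1-1/d})$ for axis-aligned boxes, citing the standard computational geometry reference, and then invokes \cref{thm:visiting}. You have filled in exactly the textbook recurrence argument the citation points to; the only redundancy is that you invoke \cref{thm:equiv} separately, whereas the second paragraph of \cref{thm:visiting} already bundles that reduction in.
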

For hyperplanes (\cref{ex:hyperplane}), the optimal partition trees by \citet{chan2010optimal} have visiting numbers in $O(n^{1-1/d})$.  
\begin{corollary}\label{prop:hyperplance}
For all $d\ge 2$, \cref{alg:lazyquery,alg:lazyUpdate} with the partition trees by \citet{chan2010optimal} can support LMSR for $d$-dimensional hyperplane set system (\cref{ex:hyperplane}) in $O(n^{1-1/d})$.  Moreover, the partition tree trees use linear space and can be constructed in $O(n\log n)$.
\end{corollary}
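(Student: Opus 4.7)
The plan is to instantiate the partition-tree framework of \cref{sec:alg} with the optimal partition trees of \citet{chan2010optimal} and then invoke \cref{thm:visiting,thm:equiv} to transfer the geometric bound into an LMSR running time bound. The work splits into two essentially orthogonal pieces: (i) verifying that \cref{def:partitiontree} applies to Chan's construction on the hyperplane set system of \cref{ex:hyperplane}, and (ii) bounding the visiting number of any half-space query.

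First, I would recall that a hyperplane security $E_{\beta,\beta_0}=\{x\in\mathcal{X}:\beta^\top x+\beta_0\ge 0\}$ is exactly a half-space range in computational-geometry parlance, and the hyperplane set system on $\mathcal{X}\subset\R^d$ with $|\mathcal{X}|=n$ is precisely the half-space range space. Chan's construction produces a hierarchical partition of $\mathcal{X}$ into a tree whose canonical subsets $N(v)\subseteq \mathcal{X}$ satisfy the requirements of \cref{def:partitiontree}: children partition the parent's node-set, the root corresponds to $\mathcal{X}$, and leaves are singletons (internal fan-out is bounded by a constant depending on $d$, so single-child nodes can be suppressed). The construction uses linear space and runs in $O(n\log n)$ preprocessing time, as stated in \citet{chan2010optimal}; I would cite these properties directly rather than re-deriving them.

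Next, I would argue the visiting-number bound. By the defining guarantee of Chan's optimal partition tree for half-space range searching, any query hyperplane crosses at most $O(n^{1-1/d})$ of the canonical cells visited in a root-to-leaf depth-first traversal. Translating this to our language of \cref{sec:pre_partition}: a node $v$ is visited by the query $E_{\beta,\beta_0}$ only if $E_{\beta,\beta_0}$ crosses the canonical set of its parent, and Chan's bound is exactly that the number of such crossed canonical sets is $O(n^{1-1/d})$. Hence the visiting number of this partition tree on the $d$-dimensional hyperplane set system is $O(n^{1-1/d})$.

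Finally, combine the two ingredients with the theorems already established. \Cref{thm:visiting} yields that \cref{alg:lazyquery,alg:lazyUpdate}, run on Chan's tree, answer $(+,\cdot)$-RQRU queries and range updates in $O(n^{1-1/d})$ time each. Applying \cref{thm:equiv} with this $T_Q(n)=T_U(n)=O(n^{1-1/d})$ gives an LMSR market whose price, cost, and buy operations all run in $O(n^{1-1/d})$ time, with space and preprocessing dominated by the underlying tree, i.e., $O(n)$ and $O(n\log n)$. The only real obstacle is the first step, namely checking that Chan's construction is genuinely presented as a partition tree in the sense of \cref{def:partitiontree} (and that its crossing/visiting guarantee is phrased in our language); this is essentially bookkeeping, since Chan's trees are explicitly hierarchical simplicial partitions, but care should be taken with the constant-size fan-out and with degenerate cells to match \cref{def:partitiontree} exactly.
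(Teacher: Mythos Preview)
Your proposal is correct and follows exactly the approach the paper takes: the paper does not give an explicit proof of this corollary but simply remarks that Chan's optimal partition trees have visiting number $O(n^{1-1/d})$ for half-space ranges and then invokes \cref{thm:visiting} (which already packages the application of \cref{thm:equiv}). Your write-up is more detailed than the paper's one-line justification, but the logical structure---cite Chan for the partition-tree construction and crossing bound, then apply \cref{thm:visiting}---is identical.
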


More generally, the optimal visiting number can be bounded by the {dual shattering dimension} (\cref{def:shattering}) of the set system~\cite{haussler1986epsilon,Chazelle1989}.
The following result combines \citet{Chazelle1989}'s reduction from low crossing spanning trees to small visiting number partition trees and \citet{DBLP:conf/compgeom/CsikosM21}'s randomized algorithm for low crossing spanning trees.
\begin{theorem}[\cite{Chazelle1989,DBLP:conf/compgeom/CsikosM21}]\label{thm:construct}
Given a set system $(\mathcal{X}, \mathcal{F})$ on $n$ points with dual shattering dimension $d\ge 1$, we can construct a linear-sized binary balanced partition tree so that the expected visiting number is in $O(n^{1-1/d}\log n+ \log|\mathcal{F}|(\log n)^2)$ with an expected $\tilde{O}(|\mathcal{F}|n^{2/d}+n^{2+2/d})$ calls to the membership oracle of $(\mathcal{X}, \mathcal{F})$ that decides whether a given point is a range.

Moreover, there does not exists a partition tree with visiting number of $o(n^{1-1/d})$.
\end{theorem}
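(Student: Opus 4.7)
The plan is to derive the theorem by assembling two existing building blocks: Chazelle's reduction from low-crossing spanning paths to partition trees with bounded visiting number, and the Csikos--Mustafa randomized construction of such spanning paths for set systems of bounded dual shattering dimension.

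First, I would recall the key auxiliary object: a \emph{spanning path} $\pi$ on $\mathcal{X}$ is said to have crossing number $\kappa$ with respect to $(\mathcal{X},\mathcal{F})$ if every $E \in \mathcal{F}$ \emph{splits} at most $\kappa$ edges of $\pi$, i.e.\ for at most $\kappa$ consecutive pairs $(x_i, x_{i+1})$ does exactly one endpoint lie in $E$. The classical result of Chazelle--Welzl (sharpened to the dual-shattering setting) guarantees the existence of such a path with $\kappa = O(n^{1-1/d})$ whenever $(\mathcal{X},\mathcal{F})$ has dual shattering dimension $d$; Csikos--Mustafa give a randomized algorithm that, using only the membership oracle, constructs such a path in expected time $\tilde{O}(|\mathcal{F}|n^{2/d} + n^{2+2/d})$, with expected crossing number matching the bound up to polylogarithmic factors.

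Step one is therefore to invoke the Csikos--Mustafa procedure to obtain such a spanning path $\pi = (x_{\sigma(1)}, \dots, x_{\sigma(n)})$. Step two is the reduction: build a balanced binary partition tree $\mathcal{T}$ by recursively bisecting $\pi$, so that each node $v$ corresponds to a contiguous subsequence of $\pi$ and $N(v)$ is the underlying point set. Since we always split evenly and leaves are singletons, $\mathcal{T}$ is linear-sized and balanced of height $O(\log n)$. Step three is to bound the visiting number. A node $v$ is visited on query $E$ only if $N(v)$ is crossed by $E$, which forces at least one edge of $\pi$ internal to $N(v)$ to be split by $E$. Each split edge can be charged to at most $O(\log n)$ ancestor nodes (one per level), giving visiting number $O(\kappa \log n) = O(n^{1-1/d}\log n)$ in expectation for a single query. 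To convert this into a uniform bound over all of $\mathcal{F}$, I would take a union bound over the $|\mathcal{F}|$ ranges, yielding the extra additive $\log|\mathcal{F}|(\log n)^2$ term that appears in the statement (coming from concentration of the crossing count around its expectation).

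For the lower bound, I would appeal to Chazelle's classical $\Omega(n^{1-1/d})$ lower bound on simplex range searching in the arithmetic model: for canonical set systems of dual shattering dimension $d$ (e.g.\ halfspaces in $\R^d$), any partition of $\mathcal{X}$ into canonical subsets as in \cref{def:partitiontree} must force some query to touch $\Omega(n^{1-1/d})$ of them, simply because otherwise one could answer range-counting faster than the known lower bound permits. The main obstacle I foresee is the concentration step: bounding the crossing count uniformly over all $E \in \mathcal{F}$ rather than in expectation requires carefully invoking the relationship between the dual shatter function and $\epsilon$-nets, which is where the $\log|\mathcal{F}|(\log n)^2$ overhead enters and must be tracked through the Csikos--Mustafa analysis.
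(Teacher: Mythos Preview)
Your proposal is correct and matches the paper's approach: the paper does not give its own proof of this theorem but simply attributes it to the combination of Chazelle's reduction from low-crossing spanning trees/paths to small-visiting-number partition trees and the Csikos--Mustafa randomized construction of such spanning structures, which is exactly the two-step assembly you outline. Your sketch in fact supplies more detail than the paper does (the recursive bisection of the path, the charging argument for the $\log n$ factor, and the source of the lower bound), all of which are the standard ingredients from the cited works.
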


The lower bounds on the visiting numbers in \cref{thm:construct} not only show the algorithmic results in \cref{thm:construct} is tight up to poly log factor, but also show the visiting numbers in \cref{prop:one,prop:hyperplance,prop:orthogonal} are optimal.  Since the dual shattering dimension is bounded if and only if the VC dimension is bounded, combining \cref{thm:visiting,thm:construct}, we can have a sublinear time LMSR algorithm on any finite VC dimensional set system. 
\begin{corollary}\label{prop:abstract}
    Given a set system $(\mathcal{X}, \mathcal{F})$ with $|\mathcal{X}| = n$, if the set system has a finite VC dimension $D$ and a membership oracle, there exists a constant $\epsilon_D>0$ so that \cref{alg:lazyquery,alg:lazyUpdate} with \cref{thm:equiv} can support LMSR on $(\mathcal{X}, \mathcal{F})$ with running time in $O(n^{1-\epsilon_D})$ and an expected ${O}(poly(n))$ calls to the membership oracle.
\end{corollary}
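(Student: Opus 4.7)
The plan is to chain together Theorem~\ref{thm:equiv}, Theorem~\ref{thm:visiting}, and Theorem~\ref{thm:construct}, with the VC--to--dual-shattering conversion supplying the bridge from the hypothesis to the applicable hypothesis of Theorem~\ref{thm:construct}. Concretely, I would proceed in four short steps.

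First, I would invoke the relation cited in the footnote after Definition~\ref{def:shattering}: a finite VC dimension $D$ implies a finite dual shattering dimension $d = d(D) \le 2^{D+1}$. This lets us apply Theorem~\ref{thm:construct} to $(\mathcal{X},\mathcal{F})$.

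Second, I would handle the $\log|\mathcal{F}|$ factor appearing in the visiting-number bound of Theorem~\ref{thm:construct}. Since the notion of range query only cares about $\mathcal{F}$ restricted to the $n$ points of $\mathcal{X}$, we may without loss of generality replace $\mathcal{F}$ by $\Pi_\mathcal{F}(\mathcal{X})$, and by the Sauer--Shelah lemma $|\Pi_\mathcal{F}(\mathcal{X})| \le \sum_{i=0}^{D}\binom{n}{i} = O(n^{D})$. Hence $\log|\mathcal{F}| = O(D\log n)$, so the expected visiting number is
\[
O\!\bigl(n^{1-1/d}\log n + D(\log n)^3\bigr) = O(n^{1-1/d}\log n).
\]
The same replacement makes the oracle-call bound from Theorem~\ref{thm:construct} equal to $\tilde O(n^{D+2/d}+n^{2+2/d}) = \mathrm{poly}(n)$.

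Third, I would pick any $\epsilon_D \in (0,1/d)$, e.g.\ $\epsilon_D = 1/(2d(D))$, so that $n^{1-1/d}\log n = O(n^{1-\epsilon_D})$. Applying Theorem~\ref{thm:visiting} to the partition tree produced by Theorem~\ref{thm:construct} gives $T_Q(n), T_U(n) = O(n^{1-\epsilon_D})$ for the $(+,\cdot)$-RQRU problem. Finally, Theorem~\ref{thm:equiv} converts this into an LMSR algorithm for $(\mathcal{X},\mathcal{F})$ whose price, cost, and buy operations each run in time $O(T_Q(n)+T_U(n)) = O(n^{1-\epsilon_D})$, with the membership oracle only being consulted during the (one-off) preprocessing guaranteed by Theorem~\ref{thm:construct}.

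There is no real obstacle beyond bookkeeping: every ingredient is already a black box stated in the excerpt. The only mildly subtle point is ensuring that $\log|\mathcal{F}|$ does not secretly blow up the visiting number; this is why the Sauer--Shelah step is needed. Once $|\mathcal{F}|$ is polynomial in $n$, both the visiting number and the expected oracle-call count collapse to the claimed bounds, and the corollary follows.
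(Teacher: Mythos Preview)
Your proposal is correct and follows essentially the same approach as the paper, which simply observes that finite VC dimension implies finite dual shattering dimension and then combines Theorems~\ref{thm:visiting} and~\ref{thm:construct} (together with Theorem~\ref{thm:equiv}). Your explicit invocation of the Sauer--Shelah lemma to control $\log|\mathcal{F}|$ and the oracle-call count fills in a bookkeeping detail that the paper leaves implicit.
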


% \fang{can be removed: Note that the above construction time depends on the time complexity of membership oracle.  Though the oracle may be computationally challenging for some pathological examples, since deciding the intersection of two sets is already NP-hard, efficient membership oracle is often easily accessible for most set systems (including all examples in \cref{sec:pre_lmsr}).  Additionally, the construction time only impacts the preprocessing phase,  a less critical aspect, considering the construction process is a one-time requirement.  }

\subsection{Hardness results for LMSR}\label{sec:hardness}
The equivalence in \cref{thm:equiv} not only enables efficient LMSR algorithms as discussed in the previous section but also provides a venue for hardness results.  Below, we list several hardness results by reducing existing classical problems to LMSR algorithm problems.

First, we can use a classical hardness result on \emph{dynamical partial sum problem}~\cite{patrascu2004tight} to show that there is no $o(\log n)$ time LMSR algorithm on the one-dimensional intervals (\cref{ex:one}) which implies that the LMSR algorithms in \citet{Dudik21} and \cref{prop:one} are optimal.  We defer the proof to the appendix.
\begin{corollary}\label{prop:hard1d}
The running time of any LMSR algorithm on one dimensional  intervals with $\mathcal{X} = [n]$ (\cref{ex:one}) is in $\Omega(\log n)$.     
\end{corollary}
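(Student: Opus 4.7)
The plan is to prove the lower bound by reducing the dynamic partial-sums problem, which admits an $\Omega(\log n)$ cell-probe lower bound due to P\u{a}tra\c{s}cu and Demaine, to LMSR on one-dimensional intervals via the equivalence established in \cref{thm:equiv}. By the LMSR-to-RQRU direction of \cref{thm:equiv}, any LMSR algorithm on $(\mathcal{X},\mathcal{F})=([n],\text{intervals})$ with price time $T_P(n)$ and buy time $T_B(n)$ yields a $(+,\cdot)$-RQRU algorithm with query time $T_P(n)+O(1)$ and update time $T_P(n)+T_B(n)+O(1)$. Hence it suffices to prove an $\Omega(\log n)$ lower bound for $(+,\cdot)$-RQRU on one-dimensional intervals.

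Next, I would reduce the dynamic partial-sums problem on an integer array $A\colon[n]\to\mathbb{Z}$ to $(+,\cdot)$-RQRU on intervals. Fix an offset $C=\mathrm{poly}(n)$ larger than the maximum intermediate $|A[k]|$ reached in the hard instance and initialize $W(k)=A[k]+C>0$. A prefix query $\sum_{j=1}^{i}A[j]$ is answered as $\query([1,i])-iC$, using one RQRU query. An additive update $A[k]\leftarrow A[k]+\Delta$ is implemented by first reading $W(k)=\query(\{k\})$ (singletons are intervals) and then calling $\update(\{k\},(W(k)+\Delta)/W(k))$, using $O(1)$ RQRU operations. Thus any $T(n)$-time $(+,\cdot)$-RQRU algorithm yields an $O(T(n))$-time partial-sums algorithm, and invoking the P\u{a}tra\c{s}cu-Demaine $\Omega(\log n)$ amortized cell-probe lower bound for dynamic partial sums gives an $\Omega(\log n)$ lower bound for $(+,\cdot)$-RQRU on intervals, which by the first step propagates to LMSR.

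The main obstacle is bridging the additive-to-multiplicative gap between the partial-sums problem and $(+,\cdot)$-RQRU; the offset encoding above resolves it by translating each additive update into a single multiplicative update whose factor is computed from one auxiliary singleton query. A secondary subtlety is the computational model: the P\u{a}tra\c{s}cu-Demaine bound is stated in the cell-probe model with $\Theta(\log n)$-bit cells, and the encoded integers $W(k)\in[1,2C]$ together with the rational multipliers of equal bit-length fit in $O(1)$ cells, so the reduction respects that model and the lower bound propagates to any stronger RAM-type model in which LMSR algorithms are typically analyzed.
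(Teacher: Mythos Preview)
Your proposal is correct and follows essentially the same approach as the paper: invoke the LMSR-to-RQRU direction of \cref{thm:equiv}, then reduce dynamic partial sums to $(+,\cdot)$-RQRU on intervals and appeal to the P\u{a}tra\c{s}cu--Demaine lower bound. The only cosmetic difference is in how the single-point update is simulated: you query the singleton $\{k\}$ and update it once, whereas the paper recovers $W(k)$ as $\query([1{:}k])-\query([1{:}k{-}1])$ and realizes the point update via two prefix-interval updates $\update([1{:}k],\delta)$ and $\update([1{:}k{-}1],1/\delta)$; both are $O(1)$ RQRU calls and neither buys anything over the other.
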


Second, we can reduce matrix multiplications to LMSR algorithm on $d$-dimensional interval securities when $d\ge 2$ (\cref{ex:orthogonal}).  Consequently, a sub-polynomial time\footnote{$T_P(n), T_B(n)$ are in $o(n^c)$ for all $c>0$.} LMSR algorithm for $d$-dimensional orthogonal set system will solve matrix multiplication in near quadratic time. This connection underscores a significant challenge, considering the current leading algorithm for $m$-by-$m$ matrix multiplication requires $O(m^{2.371552})$.~\cite{williams2023new}  We defer the proof to the appendix.
\begin{proposition}\label{prop:hard2d}
If an LMSR algorithm on $2$-dimensional regular orthogonal set system with $\mathcal{X} = \{(i,j): i, j\in [m]\}$ in \cref{ex:orthogonal} can support price operation in $T_P(m^2)$ and buy operation in $T_B(m^2)$ with $O(m^2)$ preprocessing time, we can solve matrix multiplication in $O(m^2 (T_P(m^2)+T_B(m^2)))$.     
\end{proposition}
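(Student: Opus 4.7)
The plan is to reduce $m\times m$ matrix multiplication to $O(m^2)$ price and buy operations of the hypothetical fast LMSR on the two-dimensional regular orthogonal set system with $\mathcal{X} = [m]^2$, $n = m^2$. By \cref{thm:equiv}, such an LMSR yields a $(+,\cdot)$-RQRU algorithm on the same set system whose range queries run in $O(T_P(m^2))$ time and range updates in $O(T_P(m^2)+T_B(m^2))$ time. Hence it suffices to compute the product $C = AB$ using $O(m^2)$ RQRU operations on top of the assumed $O(m^2)$ preprocessing.

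Given $A, B \in \R_{>0}^{m\times m}$ with strictly positive entries (the general case reduces to this one by splitting $A = A^+ - A^-$, $B = B^+ - B^-$ with constant overhead, and by a shift $B \leftarrow B + c\,\mathbf{1}\mathbf{1}^\top$ to remove zeros, correcting the answer by the rank-one matrix $c(A\mathbf{1})\mathbf{1}^\top$ computable in $O(m^2)$ time), I initialize the weights as $W(i,k) = A_{ik}$ for all $(i,k)\in[m]^2$. Every column $[m]\times\{k\}$ and every row $\{i\}\times[m]$ is a (degenerate) axis-aligned rectangle, hence a legal range in the two-dimensional orthogonal set system. I then iterate $j = 1,\dots,m$:
\begin{enumerate}
    \item For each $k = 1,\dots,m$, apply $\update([m]\times\{k\}, B_{kj}/B_{k,j-1})$, using the convention $B_{k,0} = 1$. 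After these $m$ column-rescalings the invariant $W(i,k) = A_{ik}B_{kj}$ holds for all $i,k$.
    \item For each $i = 1,\dots,m$, call $\query(\{i\}\times[m])$; the returned value equals $\sum_{k=1}^{m} A_{ik}B_{kj} = C_{ij}$, yielding the entire $j$-th column of $C$.
\end{enumerate}
Each outer iteration performs $m$ updates and $m$ queries, and there are $m$ iterations, so the reduction issues $O(m^2)$ RQRU operations. Translating back through \cref{thm:equiv}, the total running time is $O(m^2)\cdot O(T_P(m^2)+T_B(m^2)) = O\bigl(m^2(T_P(m^2)+T_B(m^2))\bigr)$, as claimed.

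The main obstacle is the positivity constraint of $(+,\cdot)$-RQRU: every weight and every update scalar must lie in $\R_+$, so one cannot directly feed arbitrary real (or zero-containing) matrices. The sign-decomposition and additive-shift preprocessing sketched above handle this at an $O(m^2)$ arithmetic cost, which is absorbed into the stated bound. The remaining tasks---checking that each column-rectangle and row-rectangle is a member of the two-dimensional orthogonal set system, and verifying that the incremental column-rescaling $B_{kj}/B_{k,j-1}$ correctly transitions $W$ between the successive configurations needed to read off columns of $C$---are routine bookkeeping and do not present conceptual difficulties.
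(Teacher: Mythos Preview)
Your proposal is correct and follows essentially the same reduction as the paper: pass to $(+,\cdot)$-RQRU via \cref{thm:equiv}, load $A$ into the weight array, then for each output column $j$ rescale the $m$ columns of $W$ by the entries of $B_{\cdot j}$ and read off the $m$ row sums. The paper multiplies by $B_{kj}$ and then undoes with $1/B_{kj}$, whereas you use the incremental ratio $B_{kj}/B_{k,j-1}$; both yield $O(m^2)$ updates and $O(m^2)$ queries. You are also more careful than the paper about the positivity constraint of $(+,\cdot)$-RQRU (the paper's proof silently assumes nonzero entries); note only that the same additive shift you apply to $B$ is needed for $A$ as well, since the initial weights must also lie in $\R_+$.
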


Finally, \citet{Chazelle1989} show that if the VC-dimension of $(\mathcal{X}, \mathcal{F})$ is infinite, there is no sublinear time algorithm for range query using linear space.  We can again apply \cref{thm:equiv} and have the following. 
\begin{proposition}\label{prop:limit_vc}
    If the VC-dimension of $(\mathcal{X}, \mathcal{F})$ is infinite, there is no sublinear time LMSR algorithm on $(\mathcal{X}, \mathcal{F})$ using linear space. 
\end{proposition}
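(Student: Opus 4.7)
The plan is to reduce from a hypothetical sublinear LMSR algorithm to a sublinear static range query algorithm that uses linear space, and then invoke the classical Chazelle~\cite{Chazelle1989} lower bound, which is already cited as forbidding such a data structure whenever the VC dimension is infinite. Note that the difficult direction of the equivalence, \cref{thm:equiv} (LMSR $\Rightarrow$ RQRU), together with the fact that RQRU trivially subsumes the static range query problem, already does essentially all the work.

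Concretely, I would argue by contradiction. Suppose an LMSR algorithm on $(\mathcal{X},\mathcal{F})$ uses $O(n)$ space and supports the price operation in $T_P(n)$ and the buy operation in $T_B(n)$, both in $o(n)$. By the converse direction of \cref{thm:equiv}, this yields a $(+,\cdot)$-RQRU algorithm on the same set system with query time $T_P(n)+1$ and update time $T_P(n)+T_B(n)+4$, using the same order of space. In particular the query time is $o(n)$ and the space is $O(n)$.

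Next I would point out that a $(+,\cdot)$-RQRU algorithm already solves the static weighted range query problem: given any weight assignment $W^{(0)}:\mathcal{X}\to\R_+$, simply initialize with $W^{(0)}$ and answer $\query(E;W^{(0)})$ with no intervening updates. Hence we obtain a linear-space data structure supporting weighted range queries on $(\mathcal{X},\mathcal{F})$ in $o(n)$ time per query. By \citet{Chazelle1989}, when the VC dimension of $(\mathcal{X},\mathcal{F})$ is infinite no such data structure can exist, yielding the desired contradiction.

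The main obstacle, such as it is, is only bookkeeping: one must verify that the converse direction of \cref{thm:equiv} does preserve linear space (which the statement of the theorem explicitly guarantees by the phrase ``using the same order of space''), and that the Chazelle lower bound is stated in the same arithmetic model of computation that the paper adopts throughout \cref{sec:alg}. Once these two points are checked, no further computation is required.
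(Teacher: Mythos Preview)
Your proposal is correct and matches the paper's own argument: the paper proves \cref{prop:limit_vc} in one sentence by invoking \citet{Chazelle1989}'s lower bound for range query on infinite-VC set systems and then applying the LMSR $\Rightarrow$ RQRU direction of \cref{thm:equiv}, exactly as you outline. The bookkeeping checks you flag (space preservation and the arithmetic model) are the only things to verify, and both are handled by the statement of \cref{thm:equiv} and the modeling conventions in \cref{sec:alg}.
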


Using the above results, we can show that the pairing and $1$-junta securities in \cref{ex:pairing,ex:boolean} cannot have a sublinear time LMSR algorithm by showing the VC dimensions are infinite.

\begin{corollary}\label{cor:vc}
    Given a positive integer $K$, there is no LMSR algorithm on pairing securities that has running time in $o(K!)$ and uses $O(K!)$ space.  Similarly, there is no LMSR on $1$-junta securities that has running time in $o(2^K)$ and uses $O(2^K)$ space.
\end{corollary}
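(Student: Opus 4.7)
The plan is to reduce both statements to Proposition~\ref{prop:limit_vc}, which rules out sublinear-time LMSR algorithms using $O(n)$ space whenever the VC dimension of the set system grows without bound. It therefore suffices to show that the 1-junta set system (\cref{ex:boolean}) and the pairing set system (\cref{ex:pairing}) both have VC dimension tending to infinity with $K$. Since the outcome spaces have size $n = 2^K$ and $n = K!$ respectively, an $\Omega(n)$ lower bound translates directly into the $\Omega(2^K)$ and $\Omega(K!)$ bounds in the corollary.

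For the 1-junta case, I would exhibit an explicit shattered set of size $m = \lfloor \log_2 K \rfloor$. Identify $[2^m] \subseteq [K]$ with binary strings in $\{0,1\}^m$, and define $x_1,\dots,x_m \in \{0,1\}^K$ by letting $(x_t)_i$ equal the $t$-th bit of the binary representation of $i$. For any $S \subseteq [m]$, let $i_S$ be the coordinate whose binary representation is the indicator of $S$; then $x_t \in E_{i_S}$ iff the $t$-th bit of $i_S$ is one, iff $t \in S$. So $\{x_1,\dots,x_m\}$ is shattered, giving VC dimension at least $\log_2 K \to \infty$.

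For the pairing case, I would build an analogous shattered family of permutations. Label $2^m$ of the $K$ candidates by binary strings in $\{0,1\}^m$ and, for each $t \in [m]$, let $\pi_t$ be any permutation that ranks every candidate whose $t$-th label bit is $0$ above every candidate whose $t$-th label bit is $1$, breaking ties arbitrarily. For $S \subseteq [m]$, choose the pair $(i_S, j_S)$ whose labels are bitwise complements, with $j_S$'s label equal to the indicator of $S$. Since the two labels disagree in every coordinate, $\pi_t(i_S) < \pi_t(j_S)$ holds precisely when $(i_S)_t = 0$, i.e.\ when $t \in S$. Therefore $\tau_{(i_S, j_S)} \cap \{\pi_1,\dots,\pi_m\} = \{\pi_t : t \in S\}$, and the permutations are shattered, giving VC dimension at least $\log_2 K$.

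The main subtlety is in the pairing argument: not every subset of $\{\pi_1,\dots,\pi_m\}$ can be cut out by an arbitrary pair, since a single comparison only probes one pairwise ordering per permutation. The trick is to pick complementary-label pairs, which ensures the two candidates disagree on every bit and hence sidesteps the tiebreaking of each $\pi_t$, letting us independently prescribe each coordinate of the resulting signature. Once the two shattering constructions are in place, applying Proposition~\ref{prop:limit_vc} yields the claimed linear lower bounds in $n$.
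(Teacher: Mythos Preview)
Your approach matches the paper's: both reduce to Proposition~\ref{prop:limit_vc} by exhibiting explicit shattered sets that force the VC dimension to grow with $K$. Your $1$-junta construction is essentially identical to the paper's (Claim~\ref{lem:junta}). For pairing, however, the paper uses a different shattered family (Claim~\ref{lem:pairing}): it builds $D=\lfloor\log_2(K/2)\rfloor$ permutations by starting from the identity and, for each permutation, independently swapping some of the disjoint adjacent pairs $(2j,2j{+}1)$; then the security $\tau_{(2j+1,2j)}$ picks out exactly those permutations in which pair $j$ was swapped. Your bit-sorted permutations paired via complementary labels are a clean alternative and even give the slightly sharper bound $\lfloor\log_2 K\rfloor$; the paper's local-swap construction trades that extra bit for simplicity, since the disjoint transpositions make it immediate that different pairing securities act independently on the shattered set without any tiebreaking considerations.
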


Moreover, as the disjunctions of two coordinate securities in \citet{DBLP:journals/corr/abs-0802-1362} contains $1$-junta securities as special cases, the lower bound in \cref{cor:vc} applies.  Specifically,  there is no $o(2^K)$ time LMSR using linear space. It's worth noting that because \#P is in EXP.  If \#P $\subsetneq$ EXP, this result is stronger than the \#P hard result in \citet{DBLP:journals/corr/abs-0802-1362}.  We defer the proof to the appendix.
\section{Beyond LMSR}\label{sec:beyondlmsr}
So far, we've explored the reduction of the LMSR  problem to range query problems and employed the partition tree scheme to solve the LMSR.  In this section, we extends our framework to design market makers for other scoring rules by mapping them to specific range query problems and leveraging the partition tree framework.  

First, note that not all cost-function-based market makers admit to efficient price, cost, and buy operations.  In \cref{app:hardness}, we construct a cost function that is convex, differentiable, and $\mathbf{1}$-invariant but is NP-hard to compute.  Therefore, we cannot answer price, cost, and buy operations in polynomial time unless NP = P.  Below however, we demonstrate that our framework applies to several commonly used cost-function-based market maker.

We design market maker algorithms for the quadratic scoring rule (QMSR), another widely-used proper scoring rule.  While LMSR can be formulated as a range query with \emph{multiplication range updates} (\cref{def:rqru}), interestingly, in \cref{sec:qmsr}, we show that QMSR can be formulated as a range query with \emph{addition range updates}.  This observation enables us to apply our lazy update algorithm on the partition tree to solve QMSR with the same computational complexity as \cref{thm:visiting}.  Moreover, unlike the hardness result for subpolynomial time LMSR algorithms on orthogonal set systems in \cref{prop:hard2d}, we present a polylogarithmic time algorithm for QMSR using recent advancements in range query with addition range updates~\cite{Ibtehaz_2021}.

To show the generality of our framework, we further study market makers for power scoring rules~\cite{dawid2007geometry,jose2008scoring}, and show our lazy update algorithm remains applicable when using a power scoring rule with degree $\frac32$.  The key observation is to reduce the market maker problem as a range query with \emph{group action range updates}, which contain multiplication and addition updates as special cases.

Finally, we show that the multi-resolution market design can be naturally integrated into the partition-tree scheme in  \cref{sec:multi-res}.  We demonstrate that with efficient and local weight updates, such multi-resolution design will not affect our characterization of market complexity.
\subsection{Quadratic scoring rule market maker}\label{sec:qmsr}
By \cref{def:generalcost} and \cref{eq:QMSR}, the AMMs for prediction markets with QMSR are defined as the following.

\begin{Definition}\label{def:qmsr}
    Given a set system $(\mathcal{X}, \mathcal{F})$, a {QMSR on $(\mathcal{X}, \mathcal{F})$} taking an initial state $\vw_0:\mathcal{X}\to \mathbb{R}$ and offering securities for all $E\in \mathcal{F}$, supports a sequence of operations taking one of the following forms: for any set $E\in \mathcal{F}$, shares $s\in \mathbb{R}$, and state $\vw$,
\begin{itemize}
\item $\price_Q(E; \vw)$: return the current price of security for $E$ where $$
    \price_Q(E;\vw) := \sum_{x\in E} \frac{\partial C_Q(\vw)}{\partial w_x} = \frac{1}{n}|E|+\frac{1}{2b}\sum_{x\in E}w_x-\frac{|E|}{2bn}\left(\sum_{x\in \mathcal{X}}w_x\right).$$
    \item $\cost_Q(E, s; \vw)$: return the current cost of $s$ shares of $x$,         $C_Q(w+s\mathbf{1}_E)-C_Q(w)$
    \item $\buy_Q(E, s;\vw)$: update the state $w(x)\gets w(x)+s$ for all $x\in E$ and $w(x') \gets w(x')$ for $x'\notin E$. 
\end{itemize}
\end{Definition}
Similar to LMSR, we can reduce the QMSR as a new RQRU problem (\cref{def:rqruq}) which replaces multiplication updates in \cref{def:rqru} with addition updates.  
\begin{Definition}\label{def:rqruq}
    Given a positive integer $l$, a set system $(\mathcal{X}, \mathcal{F})$ and initial weights $Z_0:\mathcal{X}\to \mathbb{R}^l$, the \defn{range query with addition range update}, $(+,+)$-RQRU ro short, requests a sequence of operations, taking one of the following forms: for any $E\in \mathcal{F}$ and $S\in \mathbb{R}^l$:
\begin{itemize}
    \item $\query_Q(E; Z)$: compute and return the sum of weights in range $E$, $Z(E) = \sum_{x \in E} Z(x)\in \R^l$.
    \item $\update_Q(E, S; Z)$: update $Z(x) \gets S+Z(x), \forall x\in E$, and $Z(x') \gets Z(x'), \forall x'\notin E$.  
\end{itemize}
\end{Definition}    
In contrast to \cref{def:rqru} and \cref{def:lmsr}, \cref{def:rqruq} maintains vector-valued weights and does not necessary have a straightforward relation to the state in \cref{def:qmsr}.  Thus, we use $Z$ instead of $W$ to highlight this distinction.
Similar to the first part of \cref{thm:equiv}, we show QMSR market in \cref{def:qmsr} can be reduced to $(+,+)$-RQRU problem in \cref{def:rqruq}.  We defer the proof to the appendix.

\begin{lemma}\label{lem:equivq}
Given a set system $(\mathcal{X}, \mathcal{F})$, if there is a $(+,+)$-RQRU algorithm on $(\mathcal{X}, \mathcal{F})$ defined in \cref{def:rqruq} with $T_Q(n)$ query time and $T_U(n)$ range update time, there exists a QMSR market on $(\mathcal{X}, \mathcal{F})$ that can support price operation in $O(T_Q(n))$, buy operation in $O(T_U(n)+T_Q(n))$, and cost operation in $O(T_Q(n))$. 
\end{lemma}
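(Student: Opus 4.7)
The plan is to mirror the reduction strategy of \cref{thm:equiv} but exploit the \emph{polynomial} (rather than exponential) structure of $C_Q$. The key algebraic observation is that both the QMSR price and the QMSR cost only depend on three quantities of the current state $\vw$: the count $|E|$, the partial sum $\sum_{x\in E} w_x$, and the global sum $M := \sum_{x\in\mathcal{X}} w_x$. In particular, expanding $\cost_Q(E,s;\vw) = C_Q(\vw+s\mathbf{1}_E)-C_Q(\vw)$ with $\sum_x(w_x+s\mathbf{1}_E)^2 = \sum_x w_x^2 + 2s\sum_{x\in E}w_x + s^2|E|$ and $(\sum_x w_x + s|E|)^2 = (\sum_x w_x)^2 + 2s|E|M + s^2|E|^2$, the quadratic terms $\sum_x w_x^2$ cancel, leaving a closed-form expression in $|E|$, $\sum_{x\in E} w_x$, and $M$ alone.

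Given this, I would invoke the $(+,+)$-RQRU algorithm with $l=2$ on the vector weights $Z(x) := (1, w_x)$ so that a single $\query_Q(E; Z)$ returns the pair $(|E|, \sum_{x\in E} w_x)$ in time $O(T_Q(n))$. In addition, I would maintain the scalar $M$ explicitly alongside the data structure. Initialization sets $Z^{(0)}(x) = (1, w^{(0)}_x)$ and $M^{(0)} = \sum_x w^{(0)}_x$.

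The three operations are then implemented as follows. For $\price_Q(E;\vw)$, perform one query to retrieve $(|E|, \sum_{x\in E} w_x)$ and plug into the formula in \cref{def:qmsr} together with the stored $M$; total cost $O(T_Q(n))$. For $\cost_Q(E,s;\vw)$, do the same single query and plug into the expanded formula derived above; again $O(T_Q(n))$. For $\buy_Q(E,s;\vw)$, first call $\query_Q$ to learn $|E|$, then call $\update_Q(E, (0,s); Z)$ to perform the addition update on the second coordinate of $Z$ while leaving the first coordinate untouched, and finally update $M \mathrel{+}= s|E|$; total cost $O(T_Q(n)+T_U(n))$.

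The remaining work is a straightforward invariant argument, analogous to \cref{eq:red_inv}: I would show by induction on the operation sequence that at every round $t$ the stored state satisfies $Z^{(t)}(x) = (1, w^{(t)}_x)$ and $M^{(t)} = \sum_x w^{(t)}_x$, using the fact that only $\buy_Q$ changes $\vw$ and the above bookkeeping. Given this invariant, correctness of the three operations follows directly from the algebraic identities stated in the first paragraph. I expect no substantive obstacle here; the only care needed is in the cost expansion, where one must verify that the $\sum_x w_x^2$ term cancels so that no second-moment information needs to be tracked by the data structure.
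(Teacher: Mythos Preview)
Your proposal is correct and matches the paper's proof essentially line for line: the paper also instantiates the $(+,+)$-RQRU with $l=2$ and $Z^{(0)}(x)=(1,w^{(0)}_x)$, maintains the scalar $M=\sum_x w_x$ on the side, implements $\price_Q$ and $\cost_Q$ via a single query plus the closed-form formulas (using the same cancellation of $\sum_x w_x^2$ you noted), and implements $\buy_Q$ via one update plus an extra query to obtain $|E|$ for the $M$ bookkeeping. The invariant you state is exactly the paper's \cref{eq:invq}.
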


Thus, the partition tree scheme in \cref{alg:lazyquery,alg:lazyUpdate} adapts to $(+,+)$-RQRU, and thus shares the same complexity as \cref{thm:visiting}.
\begin{theorem}\label{thm:qmsr}
Given a set system $(\mathcal{X}, \mathcal{F})$ and a partition tree $\mathcal{T}$, a lazy propagation algorithm on $\mathcal{T}$ can support price, buy, and cost operation for QMSR with running time big $O$ of the visiting number of $\mathcal{T}$ on $(\mathcal{X}, \mathcal{F})$.
\end{theorem}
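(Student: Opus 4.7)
The plan is to prove \cref{thm:qmsr} by composing the reduction of \cref{lem:equivq} with a lazy-propagation algorithm for $(+,+)$-RQRU on the given partition tree $\mathcal{T}$, mirroring the structure of the proof of \cref{thm:visiting}. Since \cref{lem:equivq} already reduces QMSR to $(+,+)$-RQRU with only a constant blow-up, it suffices to exhibit a $(+,+)$-RQRU algorithm on $\mathcal{T}$ whose query time and update time are both $O(\text{visiting number of }\mathcal{T}\text{ on }(\mathcal{X},\mathcal{F}))$.

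First I would describe the data stored at each node $v\in\mathcal{V}$ of $\mathcal{T}$. In addition to the aggregate weight $Z(N(v))\in\mathbb{R}^l$ of its node-set, I would store the cardinality $|N(v)|$ (computed once during preprocessing in linear time via a postorder traversal) and a lazy value $\pend(v)\in\mathbb{R}^l$ initialized to the zero vector. The invariant is that the stored $Z(N(v))$ is the true current sum for $v$, but for any strict descendant $u$ of $v$ the true sum differs from the stored value by $\sum_{v'\in\mathrm{anc}(u)\setminus\{u\}}\pend(v')\cdot |N(u)|$, i.e.\ pending additive updates accumulated on ancestors have not yet been pushed down.

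Next I would adapt \cref{alg:lazyUpdate,alg:lazyquery} to the additive semigroup. For an update $\update_Q(E,S)$, the depth-first search is identical to the multiplicative case, with three modifications: (i) before descending into children of any visited node I invoke a \emph{push-down} step that, for each child $u\in\mathcal{C}(v)$, performs $\pend(u)\mathrel{+}=\pend(v)$ and $Z(N(u))\mathrel{+}=\pend(v)\cdot |N(u)|$, then resets $\pend(v)=0$; (ii) when $E$ contains $N(v)$, instead of multiplying I set $Z(N(v))\mathrel{+}=S\cdot|N(v)|$ and $\pend(v)\mathrel{+}=S$ and return; (iii) when $E$ crosses $N(v)$ and I recurse, after the recursion I refresh $Z(N(v))=\sum_{u\in\mathcal{C}(v)}Z(N(u))$. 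The query $\query_Q(E)$ uses the same three-case traversal; before recursing I perform the same push-down so that the stored $Z$ values on visited nodes reflect all pending updates. The key algebraic fact I would verify is that additive pending values compose correctly and distribute over the disjoint union of children, which is immediate because $\pend(v)\cdot|N(v)|=\sum_{u\in\mathcal{C}(v)}\pend(v)\cdot|N(u)|$ by the partition property in \cref{def:partitiontree}.

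Finally I would argue the running-time bound. Both the adapted query and update procedures visit exactly the set of nodes that the original $(+,\cdot)$-RQRU DFS would visit on input $E$, because the recursion structure depends only on the containment/crossing relation between $E$ and the node-sets $N(v)$, which is purely geometric and independent of the semigroup. Each visited node is charged $O(1)$ work for the push-down (applied to a constant-bounded number of children, or $O(\deg(v))$ absorbed into the visiting count since each child visit is itself counted) and $O(1)$ for the local update or accumulation in $\R^l$ (treating $l$ as a constant). Hence query and update times are both $O(\text{visiting number})$. Chaining this with \cref{lem:equivq} yields the same bound for $\price_Q$, $\cost_Q$, and $\buy_Q$, completing the proof. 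The main obstacle I foresee is verifying the correctness of the push-down, in particular making sure the $|N(u)|$ factor is handled correctly when a pending additive update is partially applied before further updates stack on top; this is why storing $|N(v)|$ at preprocessing and requiring push-down \emph{before} any read or write of $Z(N(v))$ at descendants is essential.
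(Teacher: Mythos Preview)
Your proposal is correct and follows essentially the same approach as the paper: reduce QMSR to $(+,+)$-RQRU via \cref{lem:equivq}, then solve $(+,+)$-RQRU by lazy propagation on $\mathcal{T}$ with the same visiting-number bound. The only presentational difference is that the paper routes this through the generalized RQRU framework in \cref{app:rqrug} (\cref{def:rqrug,thm:rqrug}), whereas you spell out the additive adaptation directly; both arguments rest on the same observation that tracking $|N(v)|$ (equivalently, carrying the constant first coordinate of $Z$) is what makes additive pending values compose and distribute correctly over the partition of node-sets.
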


Moreover, recent paper~\cite{Ibtehaz_2021} proposes multi-dimensional segment tree for addition range updates on regular $d$-dimensional orthogonal set with $\mathcal{X} = [m]^d$ (\cref{ex:orthogonal}).  By \cref{lem:equivq}, we can consequently achieve a QMSR algorithm with polylogarithmic time complexity, in contrast to the subpolynomial time hardness results for LMSR presented in \cref{prop:hard2d}.  
\begin{corollary}\label{cor:orthogonalp}
For all $d\ge 2$, $m$, and $n = m^d$, there is multi dimensional segment tree that support QMSR for regular $d$-dimensional orthogonal set system with $\mathcal{X} = [m]^d$ in $O(\log^d(n))$.
\end{corollary}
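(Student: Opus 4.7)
The plan is to chain together two results we already have at our disposal: Lemma~\ref{lem:equivq}, which reduces QMSR on $(\mathcal{X}, \mathcal{F})$ to the $(+,+)$-RQRU problem on the same set system with only constant-factor overhead, and the multi-dimensional segment tree of~\cite{Ibtehaz_2021}, which solves $(+,+)$-RQRU on a regular $d$-dimensional orthogonal set $\mathcal{X} = [m]^d$ in polylogarithmic time per operation. Concretely, first I would invoke Lemma~\ref{lem:equivq} to turn any QMSR price, cost, or buy operation on $(\mathcal{X}, \mathcal{F})$ into a constant number of $\query_Q$ and $\update_Q$ calls of a $(+,+)$-RQRU data structure, so the QMSR running time is within a constant factor of the RQRU running time.

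Next, I would instantiate that RQRU data structure with the multi-dimensional segment tree of~\cite{Ibtehaz_2021} specialized to $\mathcal{X} = [m]^d$ and $\mathcal{F}$ the axis-aligned hyperrectangles of \cref{ex:orthogonal}. That construction nests $d$ one-dimensional segment trees with lazy propagation of pending additions, which supports both $\query_Q$ and $\update_Q$ on an axis-aligned box in $O(\log^d m)$ time; the lazy-tag mechanism is exactly the additive analogue of the scheme in \cref{alg:lazyUpdate,alg:lazyquery}, which is why correctness is immediate once one verifies that the query in Lemma~\ref{lem:equivq}'s reduction only inspects vector-valued sums over rectangles.

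Combining the two, each QMSR operation on $\mathcal{X} = [m]^d$ runs in $O(\log^d m)$. Finally, since $n = m^d$ gives $\log m = (1/d)\log n$, we get $O(\log^d m) = O(d^{-d}\log^d n) = O(\log^d n)$ for any fixed $d$, which is the bound in the statement.

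The only genuine step is confirming that the multi-dimensional segment tree of~\cite{Ibtehaz_2021} indeed answers both the range query and the range update in $O(\log^d m)$ under lazy addition tags on $d$-dimensional axis-aligned rectangles, and that the vector-valued weights used by Lemma~\ref{lem:equivq} can be stored coordinate-wise without changing the asymptotics; both are standard, since the weights in \cref{def:rqruq} live in $\R^l$ for a fixed $l$ depending only on QMSR (not on $n$), so each coordinate can be maintained by an independent copy of the tree. This is the main technical verification, but it is routine rather than obstructing.
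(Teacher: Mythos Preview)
Your proposal is correct and follows exactly the paper's approach: invoke Lemma~\ref{lem:equivq} to reduce QMSR to $(+,+)$-RQRU, then apply the multi-dimensional segment tree of~\cite{Ibtehaz_2021} to obtain $O(\log^d m)=O(\log^d n)$ per operation. The paper states the corollary without a formal proof, relying on precisely this two-step combination; your added remarks about handling the fixed-dimensional vector-valued weights coordinate-wise are a harmless elaboration.
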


% \begin{remark}
% RQRU with addition updates (\cref{def:rqruq}) is indeed more common range query problem in computational geometry than RQRU with multiplication updates (\cref{def:rqru}).  In particular, \cite{Ibtehaz_2021} propose multi dimensional segment tree that support polylog query and update time $O(\log^d n)$ when $(\mathcal{X}, \mathcal{F})$ is a $d$-dimensional orthogonal set system with $\mathcal{X} = [m]^d$ (\cref{ex:orthogonal}).  
% \end{remark}
% Define $\mathcal{W} = \mathbb{R}^3$, 
% $$w\oplus w' = (w_1+w'_1, w_2+w'_2, w_3+w'_3)$$
% and 
% $$s\otimes w = (w_1, w_2+sw_1, w_3+2sw_2+s^2w_1)$$
% Given a set of points, $E$, we want to maintain $w(E) = (|E|, \sum_{x\in E} w(x), \sum_{x\in E} w(x)^2$.  Then if $E$ and $E'$ are disjoint, we have $w(E\cup E') = w(E)\oplus w(E')$.  Moreover, 
% $$s\otimes w(E) = (|E|, \sum (w(x)+s), \sum w(x)^2+2s\sum w(x)+s^2 |E|) = (|E|, \sum (w(x)+s), \sum (w(x)+s)^2)$$

\subsection{Power scoring rule market maker}\label{sec:pmsr}
Finally, we study automated market maker mechanisms for $\gamma$-power scoring rules with the cost function defined in~\cref{eq:pmsr}.
\begin{Definition}\label{def:pmsr}
    Given a set system $(\mathcal{X}, \mathcal{F})$, a {$\gamma$-power MSR on $(\mathcal{X}, \mathcal{F})$} taking an initial state $\vw_0:\mathcal{X}\to \mathbb{R}$ and offering securities for all $E\in \mathcal{F}$, supports a sequence of operations taking one of the following forms: for any set $E\in \mathcal{F}$, shares $s\in \mathbb{R}$, and state $\vw$,
\begin{itemize}
\item $\price_\gamma(E; \vw)$: return the current price of security for $E$, $\sum_{x\in E} \frac{\partial C_\gamma(\vw)}{\partial w_x}$. 
    \item $\cost_\gamma(E, s; \vw)$: return the current cost of $s$ shares of security for $E$,   $C_\gamma(\vw+s\mathbf{1}_E)-C_\gamma(\vw)$.
    \item $\buy_\gamma(E, s;\vw)$: update the state $w(x)\gets w(x)+s$ for all $x\in E$ and $w(x') \gets w(x')$ for $x'\notin E$. 
\end{itemize}
\end{Definition}

We provide AMMs for $\gamma = \frac32$ which admits the following closed-form solution, and we defer the proof to the appendix.
%\xw{double check for $\gamma = 3/2$ for the following section}

\begin{lemma}\label{prop:pmsr_closedform}
    Given a state $\vw:\mathcal{X}\to \R$, let $M_1 = \sum_{x\in \mathcal{X}} w_x$, $M_2 = \sum_{x\in \mathcal{X}} w_x^2$, $M_3 = \sum_{x\in \mathcal{X}} w_x^3$, and $\mu = \sqrt{M_1^2-n\left(M_2-\frac{9b^2}{4}\right)}$.  The cost function of $\frac32$-power MSR is $$C_{\frac32}(\vw) = \max_{p\in \Delta_\mathcal{X}} \sum_{x\in \mathcal{X}} w_xp(x)-b\sum_{x\in \mathcal{X}}p(x)^{\frac32} = \frac{4}{27b^2}\left( M_3-\frac{1}{n^2}\left(M_1^3-3M_1^2\mu+2\mu^3 \right)\right)$$
    and price function is
    $$\frac{\partial}{\partial w_x}C_{\frac32}(\vw) = \frac{1}{n}+\frac{4}{9}\left(w_x^2+\frac{2}{n}(\mu-M_1) w_x-\frac{1}{n}M_2+\frac{2}{n^2}M_1^2-\frac{2}{n^2}M_1\mu\right).$$
\end{lemma}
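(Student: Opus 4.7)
The plan is to reduce the optimization to a one-dimensional problem via Lagrangian duality, identify the Lagrange multiplier explicitly with the quadratic formula, and then expand everything in terms of the power sums $M_1, M_2, M_3$.

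\emph{Step 1: Solve the inner maximization.} The objective $p \mapsto \sum_x w_x p(x) - b\sum_x p(x)^{3/2}$ is strictly concave on the simplex $\Delta_\mathcal{X}$, so a unique maximizer exists. Form the Lagrangian with multiplier $\lambda$ for $\sum_x p(x) = 1$ and (temporarily) ignore the nonnegativity constraints. The stationarity condition gives $w_x - \tfrac{3b}{2}\sqrt{p(x)} - \lambda = 0$, hence
\begin{equation*}
    p^\ast(x) = \frac{4(w_x - \lambda)^2}{9b^2},\qquad \text{with } w_x \ge \lambda \text{ for all } x,
\end{equation*}
which is automatically nonnegative, so nonnegativity is not binding provided we choose the correct root.

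\emph{Step 2: Pin down $\lambda$.} Plugging $p^\ast$ into $\sum_x p^\ast(x) = 1$ yields $\sum_x (w_x - \lambda)^2 = \tfrac{9b^2}{4}$, i.e.
\begin{equation*}
    n\lambda^2 - 2M_1 \lambda + \Parens{M_2 - \tfrac{9b^2}{4}} = 0,
\end{equation*}
whose two roots are $\lambda_\pm = (M_1 \pm \mu)/n$ with $\mu$ as defined in the lemma. I will argue that the admissible root is $\lambda = (M_1 - \mu)/n$: since we need $\lambda \le \min_x w_x$ to guarantee $\sqrt{p^\ast(x)} \ge 0$ in the first-order condition, and the mean of $\{w_x\}$ equals $M_1/n \ge \lambda$, only the ``$-$'' branch is consistent.

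\emph{Step 3: Closed form for $C_{3/2}$.} Writing $u_x := w_x - \lambda$, the identity $\sum_x u_x^2 = \tfrac{9b^2}{4}$ lets me reduce the objective to
\begin{equation*}
    C_{3/2}(\vw) \;=\; \tfrac{4}{9b^2}\sum_x w_x u_x^2 - \tfrac{8}{27b^2}\sum_x u_x^3 \;=\; \tfrac{4}{27b^2}\sum_x u_x^3 + \lambda,
\end{equation*}
where in the last step I substituted $w_x = u_x + \lambda$ and used $\sum_x u_x^2 = 9b^2/4$. Then I expand $\sum_x u_x^3 = M_3 - 3\lambda M_2 + 3\lambda^2 M_1 - n\lambda^3$, use $M_2 = (M_1^2 - \mu^2)/n + 9b^2/4$ to eliminate $M_2$, and collect powers of $(M_1 - \mu)/n$. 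The $\lambda$ term in the decomposition exactly cancels the $-27 b^2(M_1 - \mu)/(4n)$ contribution from $-3\lambda M_2$, leaving the stated formula (up to reorganizing the symmetric polynomial in $M_1,\mu$).

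\emph{Step 4: Closed form for the price.} By the envelope theorem applied to a maximum over a convex compact set with a smooth parameter dependence, $\partial C_{3/2}/\partial w_x = p^\ast(x)$. Substituting the value of $\lambda$ into $p^\ast(x) = \tfrac{4}{9b^2}(w_x - \lambda)^2$ and expanding the square, together with the relation $\mu^2/n^2 = M_1^2/n^2 - M_2/n + 9b^2/(4n)$ derived from the definition of $\mu$, produces the stated expression; the free $9b^2/(4n)$ piece becomes the additive $1/n$ term.

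The main obstacle is bookkeeping the cancellations in Step~3: choosing variables $u_x = w_x - \lambda$ and using the simplex identity early is what keeps the algebra tractable, but extra care is required to match the exact symmetric polynomial in $M_1$ and $\mu$ printed in the statement.
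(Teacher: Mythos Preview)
Your approach is essentially the same as the paper's: both derive $p^\ast(x) = \tfrac{4}{9b^2}(w_x-\lambda)^2$ from the KKT/Lagrangian stationarity condition, solve the resulting quadratic $n\lambda^2 - 2M_1\lambda + (M_2 - \tfrac{9b^2}{4}) = 0$ to pin down $\lambda = (M_1-\mu)/n$, and then expand in the power sums $M_1,M_2,M_3$. The only minor variation is in Step~4, where you invoke the envelope theorem to get $\partial C_{3/2}/\partial w_x = p^\ast(x)$ directly, whereas the paper differentiates the closed-form expression for $C_{3/2}$ (computing $\partial\mu/\partial w_x = (M_1 - nw_x)/\mu$ along the way); both routes arrive at the same formula.
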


With the above closed form, we reduce $\frac32$-power MSR problem as a range query and range update problem.

\begin{Definition}\label{def:rqrup}
    Given a set system $(\mathcal{X}, \mathcal{F})$ and initial weights $Z_0:\mathcal{X}\to \mathbb{R}^4$, the range query range update problem for $\frac32$-power MSR, called $(+, \alpha)$-RQRU for short,  requests a sequence of operations, taking one of the following forms: for any $E\in \mathcal{F}$ and $S\in \mathbb{R}$:
\begin{itemize}
    \item $\query_{\frac32}(E; Z)$: compute and return the sum of weights in range $E$, $Z(E) = \sum_{x \in E} Z(x)\in \R^d$.
    \item $\update_{\frac32}(E, S; Z)$: for each $x \in E$, update $Z(x) \gets \alpha_S(Z(x))$, and for each $x'\notin E$, $Z(x') \gets Z(x')$ where 
    $$\alpha_S\left(\begin{bmatrix} \Gamma_0\\ \Gamma_1\\\Gamma_2\\\Gamma_3\end{bmatrix}\right) = \begin{bmatrix} \Gamma_0\\ \Gamma_1+S\\\Gamma_2+2S\Gamma_1+S^2\\\Gamma_3+3S\Gamma_2+3S^2\Gamma_1+S^3\end{bmatrix}\text{for all}\begin{bmatrix} \Gamma_0\\ \Gamma_1\\\Gamma_2\\\Gamma_3\end{bmatrix}\in \R^4.$$
\end{itemize}
\end{Definition}    
Despite the complicated appearance, $(+, \alpha)$-RQRU problem in \cref{def:rqrup} essentially maintains the sum of powers from degree $0$ to $3$  ($M_0, M_1, M_2$ and $M_3$ in \cref{prop:pmsr_closedform}) which is sufficient to compute the price and cost function of $\frac32$-power MSR.  This connection is formally outlined in \cref{lem:equivp}.  We defer the proof to the appendix.

% Similar to LMSR and QMSR, the key observation is that to answer price and cost of a set $E$ it is sufficient solve a RQRU problem defined below.

\begin{lemma}\label{lem:equivp}
Given a set system $(\mathcal{X}, \mathcal{F})$, if there is a $(+,\alpha)$-RQRU algorithm on $(\mathcal{X}, \mathcal{F})$ defined in \cref{def:rqrup} with $T_Q(n)$ query time and $T_U(n)$ range update time, there exists a $\frac32$-power MSR market on $(\mathcal{X}, \mathcal{F})$ that can support price operation in $O(T_Q(n))$, buy operation in $O(T_U(n)+T_Q(n))$, and cost operation in $O(T_U(n)+T_Q(n))$. 
\end{lemma}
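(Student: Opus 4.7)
The plan is to mimic the reductions in \cref{thm:equiv} and \cref{lem:equivq}, but now carrying a four-dimensional weight vector tailored to the closed-form expressions of \cref{prop:pmsr_closedform}. The key observation is that both $C_{\frac32}(\vw)$ and the aggregated price $\sum_{x\in E}\partial C_{\frac32}/\partial w_x$ depend only on $|E|$, on the global power sums $M_1,M_2,M_3$, and on the partial sums $\sum_{x\in E}w_x^k$ for $k=1,2$. All of these are available from a single $(+,\alpha)$-RQRU instance carrying $Z(x)=(1,w_x,w_x^2,w_x^3)$. Moreover, the map $\alpha_s$ of \cref{def:rqrup} is exactly the action induced on the tuple $(1,w,w^2,w^3)$ by the shift $w\mapsto w+s$, as one verifies by binomial expansion of $(w+s)^2$ and $(w+s)^3$.

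First I would initialize a $(+,\alpha)$-RQRU instance with $Z^{(0)}(x)=(1,w^{(0)}_x,(w^{(0)}_x)^2,(w^{(0)}_x)^3)$ and maintain by induction the invariant $Z^{(t)}(x)=(1,w^{(t)}_x,(w^{(t)}_x)^2,(w^{(t)}_x)^3)$. A call to $\buy_{\frac32}(E,s)$ is then simulated by a single $\update_{\frac32}(E,s)$, which preserves the invariant by the binomial identity above. A call to $\price_{\frac32}(E)$ issues $\query_{\frac32}(\mathcal{X})$ to recover $(n,M_1,M_2,M_3)$ and $\query_{\frac32}(E)$ to recover $(|E|,\sum_{x\in E}w_x,\sum_{x\in E}w_x^2,\sum_{x\in E}w_x^3)$; one then computes $\mu=\sqrt{M_1^2-n(M_2-9b^2/4)}$ and plugs everything into the aggregated form of \cref{prop:pmsr_closedform}, which is linear in the partial power sums with coefficients depending only on $n,M_1,M_2,\mu$. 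This totals $O(T_Q(n))$ time plus $O(1)$ arithmetic.

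For $\cost_{\frac32}(E,s)$ there are two natural implementations. The direct one queries $\query_{\frac32}(\mathcal{X})$ to evaluate $C_{\frac32}(\vw)$, performs $\update_{\frac32}(E,s)$, queries $\query_{\frac32}(\mathcal{X})$ again to evaluate $C_{\frac32}(\vw+s\mathbf{1}_E)$, and finally calls $\update_{\frac32}(E,-s)$ to restore the state, relying on the group identity $\alpha_{-s}\circ\alpha_s=\mathrm{id}$. A cleaner alternative avoids the undo entirely by computing the post-update global sums analytically from the pre-update global sums and the partial sums over $E$, via
\begin{align*}
M_1'&=M_1+s|E|,\\
M_2'&=M_2+2s\sum_{x\in E}w_x+s^2|E|,\\
M_3'&=M_3+3s\sum_{x\in E}w_x^2+3s^2\sum_{x\in E}w_x+s^3|E|.
\end{align*}
Either implementation runs in $O(T_Q(n)+T_U(n))$ time, matching the claim.

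Correctness reduces to two verifications I would carry out in order: that $\alpha$ is a group action on $\R^4$ with $\alpha_s\circ\alpha_t=\alpha_{s+t}$, so that the invariant is preserved by $\buy$ and the undo step really restores the state, and that the expression of \cref{prop:pmsr_closedform} aggregates correctly from the queried partial power sums. Both amount to routine binomial manipulations, so I do not expect a substantive obstacle. The real conceptual step --- identifying the right $4$-tuple $(1,w,w^2,w^3)$ and recognizing $\alpha_s$ as the natural action of the shift $w\mapsto w+s$ on it --- is already baked into \cref{def:rqrup} and \cref{prop:pmsr_closedform}, so what remains is essentially bookkeeping.
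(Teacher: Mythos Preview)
Your proposal is correct and follows essentially the same reduction as the paper: initialize $Z(x)=(1,w_x,w_x^2,w_x^3)$, use $\alpha_s$ to simulate $\buy$, and read off the closed forms of \cref{prop:pmsr_closedform} from the queried power sums. The only cosmetic difference is that the paper caches the global power sums $(M_0,M_1,M_2,M_3)$ in an auxiliary variable updated at each $\buy$ (hence the $T_Q$ term there), whereas you recover them by an extra $\query_{\frac32}(\mathcal{X})$ at each $\price$/$\cost$; both variants meet the stated bounds, and your do/undo implementation of $\cost$ is exactly the paper's.
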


Finally, the partition tree scheme in \cref{alg:lazyquery,alg:lazyUpdate} adapts to this new RQRU problem and shares the same complexity as \cref{thm:visiting,thm:qmsr}.

\begin{theorem}\label{thm:pmsr}
Given a set system $(\mathcal{X}, \mathcal{F})$ and a partition tree $\mathcal{T}$, a lazy propagation algorithm on $\mathcal{T}$ can support price, buy, and cost operation for $\frac32$-power MSR (\cref{def:pmsr}) with running time big $O$ of the visiting number of $\mathcal{T}$ on $(\mathcal{X}, \mathcal{F})$.
\end{theorem}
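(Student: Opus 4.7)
The plan is to mirror the proof of \cref{thm:visiting} (and its QMSR analogue \cref{thm:qmsr}), by first invoking \cref{lem:equivp} to reduce the $\frac{3}{2}$-power MSR problem to $(+,\alpha)$-RQRU, and then showing that the lazy-propagation partition-tree scheme of \cref{alg:lazyquery,alg:lazyUpdate} adapts to this new update family with only constant overhead per visited node. By \cref{lem:equivp}, it suffices to exhibit a $(+,\alpha)$-RQRU algorithm on $\mathcal{T}$ with query time and update time each $O(\kappa)$, where $\kappa$ is the visiting number of $\mathcal{T}$ on $(\mathcal{X},\mathcal{F})$.

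The key algebraic observations that enable lazy propagation are two properties of the update family $\{\alpha_S\}_{S\in\R}$. First, the updates form a one-parameter group: a direct calculation (most cleanly seen by interpreting $(\Gamma_0,\Gamma_1,\Gamma_2,\Gamma_3)=(1,w,w^2,w^3)$ and noting that $\alpha_S$ sends this to $(1,w+S,(w+S)^2,(w+S)^3)$) shows $\alpha_{S_1}\circ\alpha_{S_2}=\alpha_{S_1+S_2}$. This lets us collapse two pending updates into one. Second, there is a natural lifted action $\tilde\alpha_S$ on aggregated vectors that commutes with summation: for any finite $A\subseteq \mathcal{X}$,
\[
\sum_{x\in A}\alpha_S\bigl(Z(x)\bigr) \;=\; \tilde\alpha_S\Bigl(\sum_{x\in A}Z(x)\Bigr),
\]
where $\tilde\alpha_S$ on an aggregated vector $(\Gamma_0,\Gamma_1,\Gamma_2,\Gamma_3)$ is given by $(\Gamma_0,\ \Gamma_1+S\Gamma_0,\ \Gamma_2+2S\Gamma_1+S^2\Gamma_0,\ \Gamma_3+3S\Gamma_2+3S^2\Gamma_1+S^3\Gamma_0)$; the same expansion of $(w+S)^k$ shows this lifted action also satisfies $\tilde\alpha_{S_1}\circ\tilde\alpha_{S_2}=\tilde\alpha_{S_1+S_2}$.

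With these two properties in hand, I would adapt \cref{alg:lazyquery,alg:lazyUpdate} verbatim: each node $v$ stores the aggregated vector $Z(N(v))$ and a pending scalar $\pend(v)\in\R$ (initially $0$). When a range update $(E,S)$ reaches a node whose node-set is contained in $E$, we apply $\tilde\alpha_S$ to the stored aggregate in $O(1)$ and add $S$ to $\pend(v)$. When descending into or querying a node with a nonzero $\pend(v)$, we push: apply $\tilde\alpha_{\pend(v)}$ to each child's aggregate, add $\pend(v)$ to each child's pending scalar (using group closure), and reset $\pend(v)=0$. Since exactly the same recursion tree is traversed as in the $(+,\cdot)$ and $(+,+)$ cases, the number of nodes touched on any update or query is bounded by the visiting number $\kappa$ of $\mathcal{T}$ on $(\mathcal{X},\mathcal{F})$, giving $T_Q(n), T_U(n)=O(\kappa)$; combined with \cref{lem:equivp} this yields the stated running time for price, buy, and cost.

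The main obstacle is getting the bookkeeping between the per-point action $\alpha_S$ of \cref{def:rqrup} and its summation-compatible lift $\tilde\alpha_S$ right. A superficial reading of $\alpha_S$ might suggest that one can store aggregates and apply $\alpha_S$ directly at an internal node, but this fails because the constant term $S^k$ should instead scale with the cardinality $\Gamma_0=|N(v)|$ of the node-set; identifying the correct lift $\tilde\alpha_S$ (and checking that it still forms a one-parameter group) is what makes the lazy scheme go through. Once this is done, together with the closed-form for $C_{\frac32}$ and its gradient from \cref{prop:pmsr_closedform}, the remainder of the argument is routine and parallels the proof of \cref{thm:visiting}.
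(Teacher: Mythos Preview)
Your proposal is correct and follows essentially the same approach as the paper: the paper invokes \cref{lem:equivp} and then appeals to its generalized RQRU framework (\cref{def:rqrug} and \cref{thm:rqrug}), which packages exactly the two properties you isolate---closure of the updates under composition (group action) and compatibility with summation (distributivity). Your explicit identification of the lifted action $\tilde\alpha_S$ (replacing the constant terms $S^k$ by $S^k\Gamma_0$) is in fact a useful refinement: the $\alpha_S$ literally written in \cref{def:rqrup} is only additive on vectors with $\Gamma_0=1$, so your $\tilde\alpha_S$ is precisely the linear extension needed to verify that $(+,\alpha)$-RQRU instantiates \cref{def:rqrug}.
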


% \fang{where should I put it?}We first define generalized RQRU problem that accommodate general query and update operations where the weight of each point takes values in a semigroup with $\oplus$ and update operations are group actions that distribute with $\oplus$. Then, we show the partition tree scheme in \cref{alg:lazyquery,alg:lazyUpdate} seamlessly adapts to this generalized RQRU problem without incurring any additional overhead. With above components, we prove \cref{thm:qmsr} by showing that the QMSR problem is special case of this extended RQRU.
% \fang{generalizes to power scoring rule where $H(p) = \sum p_i^\beta$.  }

\begin{Remark}
Though \cref{thm:qmsr,thm:pmsr} only apply to $\gamma$-power MSR with $\gamma = 2$ and $\frac32$, we believe the framework can be expanded to more general power MSR.  The main challenge is deriving a closed form for the cost function from \cref{eq:pmsr} which is constrained convex conjugate of the scoring rule.  However, it is well-known that the unconstrained convex conjugate of a polynomial with degree $\gamma$ is a polynomial of degree $\frac{\gamma}{\gamma-1}$.  Such relationship is may still hold in \cref{eq:pmsr} as the cost function of quadratic scoring rule (\cref{eq:QMSR}) depends on a polynomials of degree $\frac{2}{2-1} = 2$, and the cost function of $\frac32$ power scoring rule (\cref{prop:pmsr_closedform}) depends on polynomials of degree $\frac{\frac32}{\frac32-1} = 3$.  Our partition tree method can easily maintain the sum of the power of a higher degree and thus has the potential to solve more general $\gamma$-power MSR.  Finally, using Taylor approximation, we should be able to provide approximated market maker for general scoring rules.
\end{Remark}
In \cref{app:rqrug} we show that multiplication (\cref{def:rqru}) and the above $\alpha_S$ function update (\cref{def:rqrup}) are special cases of group action updates where the our partition tree method remains applicable. 
\subsection{Partition tree and multi-resolution market}
\label{sec:multi-res}
\if 0
There are two goals: arbitrage free and efficient query algorithm.

It seems we only need to maintain linear constraints on canonical sets to ensure arbitrage free.  Efficient algorithms?

The following result implies arbitrage free, and can be proven using simple linear algebra.
\begin{conjecture}
    Given a convex function $C$ on $\R^n$ and a linear subspace $V\subseteq \R^n$, 
    $$\tilde{C}(x) = \inf_{v\in V} C(x+v)$$ is also a convex function.
\end{conjecture}
\fi

In this section, we demonstrate that \emph{a multi-resolution market} can be naturally combined with the partition-tree scheme.
The multi-resolution design grants the flexibility to adopt distinct scoring rules, when aggregating information at different resolutions.
We show that with \emph{efficient and local weight updates}, such multi-resolution design will not affect our characterization of market complexity for all operations, including the removal of arbitrage opportunities that arise from AMMs using different market scoring rules for combinatorial securities associated with information at different granularity.
%\xw{Specifically, we generalize previous work based on a static binary tree~\cite{Dudik21} to a more general refinement with our proposed partition-tree scheme.}

\begin{Definition}[An independent multi-resolution market]
    \label{def:multiresolution_market}
    A multi-resolution market with $(\mathcal{N}_k, C_k)_{k = 0,\dots, K}$ consists of $K$ submarkets on $\mathcal{X}$.  Each submarket $k =0,\dots, K$ uses a cost function $C_k$ and offers combinatorial securities associated with a set system $\mathcal{N}_k$, where $\mathcal{N}_0 = \{\mathcal{X}\}$ and $\mathcal{N}_k$ forms a partition of $\mathcal{X}$ that is finer than $\mathcal{N}_{k-1}$. 
    
    Moreover, a multi-resolution market is a \emph{consistent multi-resolution market} if there exists a cost function $C: \mathcal{X}\to \R$ and a sequence of liquidity parameters $\vb = (b_0, \dots, b_K)$ such that $C_k(w) = \frac{1}{b_k}C\left(\frac{w}{b_k}\right)$ for all $k$ and $w$. 
\end{Definition}

Note that we can reduce a multi-resolution market to a partition tree as the following.%
\footnote{Conversely, given a partition tree of depth $K$ where every leaves are at the same level, we can define a multi-resolution market where submarket $k$ offers combinatorial securities for the node-sets at level $k$.}  
Let canonical subsets be $\mathcal{N} = \cup_{k = 0}^K \mathcal{N}_k$.  
We construct a partition tree $\mathcal{T} = (\mathcal{V},\mathcal{E})$ of depth $K$ on $|\mathcal{N}|$ nodes: 
let $\mathcal{V}_k$ for $k \in \{0,1,\dots,K\}$ be the set of nodes at each level of $\mathcal{T}$.  Each node $v\in \mathcal{V}_k$ is associated with a node-set $N(v)$ in $\mathcal{N}_k$, and has a list of children $\mathcal{C}(v)$ whose associated node-set $N(u)\in \mathcal{N}_{k+1}$ and $N(u)\subset N(v)$ for all $u\in \mathcal{C}(v)$.  Therefore, we can reuse the notion of visiting number in \cref{sec:pre_partition} to measure the complexity of a multi-resolution market.  Formally, the \emph{visiting number of a multi-resolution market} is the visiting number of the constructed partition tree on the set system $(\mathcal{X}, \cup_{k = 0}^K \mathcal{N}_k)$.

% \fang{previous:}
% \begin{Definition}[A multi-resolution market]
%     \label{def:multiresolution_market}
%     A multi-resolution market can be represented by a constructed partition tree $\mathcal{T} = (\mathcal{V},\mathcal{E})$.%
%     \footnote{We note that based on a partition-tree constructed for $(\mathcal{X}, \mathcal{F})$, we can naturally design a multi-resolution market, whereas the converse may require extra preprocessing.\xw{double check}}
%     Each node $v\in \mathcal{V}$ of $\mathcal{T}$ is associated with a canonical subset $N(v)\in \mathcal{N}$, where the node-set $N(root)=\mathcal{X}$. 
%     Let $\{\mathcal{N}_k\}$ for $k=\{0,1,\dots,K\}$ be a sequence of sets of canonical subsets that form increasingly finer partitions of $\mathcal{X}$.
%     We use $\mathcal{V}_k$ for $k\in\{0,1,\dots,K\}$ to denote the set of nodes at each level of $\mathcal{T}$, with each node associated with a canonical subset $N(v) \in \mathcal{N}_k$ for $v \in \mathcal{V}_k$.
%     $\mathcal{V}_0$ contains the root, and each consecutive level contains the children of nodes from the previous level, which partition their corresponding parent.
    
%     We define \emph{a submarket} $k$ as a market that offers combinatorial securities associated with all canonical subsets $N(v) \in \mathcal{N}_k$, and denote the market scoring rule used in each submarket as $C_k$.
% \end{Definition}

\begin{example}[A multi-resolution Gates Hillman prediction market]
\label{ex:multiresolution}
The Gates Hillman prediction market (GHPM) was designed to predict the opening day of the Gates and Hillman Centers at Carnegie Mellon University~\cite{Othman10}.
A multi-resolution variation of such a market can contain its \emph{quarter submarket} (trading securities to bet on during which quarter the center will open), \emph{month submarket}, \emph{week submarket}, and \emph{day submarket}, with each having their distinct market scoring rule to facilitate aggregating information at different granularity. %\xw{maybe remove?}
\end{example}

This additional flexibility in designing each submarket enable the designer to allocate budget and choose $C_k$ that reflects the ``granularity'' of a security (e.g., smaller liquidity parameters for submarkets with more complex or fine-grained securities), in effect facilitating information elicitation and price convergence~\cite{Dudik21}.
%
% set up linear constraint matrix
However, under such multi-resolution construction, as any canonical set $N(v)$ in a coarser market (e.g., quarter market) can be also expressed in a finer one (e.g., month market), running submarkets independently may lead to incoherent prices and introduce arbitrage opportunities.
%For example, in the multi-resolution GHPM, after buying some shares of security to bet on January opening in the month submarket, the prices of the quarter securities will become different between the month submarket and all other submarkets.

To maintain price coherence, we follow \citeauthor{Dudik21} in designing a \emph{linearly constrained market maker} (multi-resolution LCMM)~\cite{DudikLaPe12,Dudik21}.
It imposes linear constraints to tie market prices among different submarkets and to remove any arbitrage opportunity. 
Let $\M$ denotes a \emph{coherent price space}.
For the multi-resolution market constructed on $\mathcal{T}$, we use constraint matrix $\A$ to specify a set of \emph{homogeneous linear equalities} that describe a superset of~$\M$:
\begin{equation}
\label{eq:Amu=0}
\M\subseteq\{\vmu\in\R^{\card{\mathcal{V}}}:\:\A^\top\vmu=\vzero\}.
\end{equation}
Arbitrage opportunities arise whenever prices fall outside the set of coherent prices $\M$~\cite{AbernethyChVa11}.\fang{It seems that we do not need to specify $\mA$.  Instead we require $\mu_u = \sum_{v\in\V_k:\:v\subset u} \mu_v, 
\text{ for any $l<k\leq K$.}$}
We generalize to the partition-tree scheme and define the constraint matrix $\A$ to ensure that $\mu(N(v)) = \sum_{u \in \mathcal{C}(v)} \mu(N(u))$, for any node $v$. %\fang{why pair?}
Let $\U=\V \backslash \V_K$ be the set of inner nodes of $\mathcal{T}$, the matrix $\A\in\R^{\card{\V}\times\card{\U}}$ can be defined as:
\begin{equation}
\label{eq:A_gen}
%\small
A_{vu}=
\begin{cases}
1 & \text{if $v=u$,}
\\
-1 & \text{if $v\in \mathcal{C}(u)$,%
%\fang{$v\in \mathcal{C}(u)$?}\xw{should be descendants}}
}
\\
0 & \text{otherwise.}
\end{cases}
\end{equation}
We verify that $\A$ enforces price coherence in the proof of \cref{prop:multi-resolution-cost}, i.e., for each inner node $u$ at level $\ell = \level(u)$, we have
$
\mu_u = \sum_{v\in\V_k:\:v\subset u} \mu_v, 
\text{ for any $l<k\leq K$.}
$
%\fang{$N(v)\subset N(u)$?}
\footnote{For simplicity, we here abuse notation and write $v \subset u$ to denote that $v$ is a strict descendant of $u$, i.e., $N(v)\subset N(u)$.}
We note that \cref{eq:A_gen} is just one form of constraint matrices, and the design of $\A$ can be adapted to facilitate local weight updates to remove arbitrage (shown later in \cref{ex:multiresolution_lmsr} and \cref{def:local_update}).
%\fang{and \cref{def:local_update}?}
%\xw{This is just one form of $\A$}

% LCMM cost function
We leverage the defined linear constraints in matrix $\A$ to remove arbitrage. 
We denote the state for each submarket $k$ as $w_k:\mathcal{N}_k \to \mathbb{R}$, and they form the block of coordinates for the overall multi-resolution market state $w$ (i.e., $w$ is the concatenation of $(w_0, w_1, \dots, w_K)$). 
%$w = \biguplus_{k \leq K} w_k$\fang{concatenate  or $w_0\circ w_1\dots$? $\biguplus$ seems like disjoint union}).
Given a cost function for each submarket $C_k(w_k)$, we have the direct-sum cost $\Cx(w)=\sum_{k\le K} C_k(w_k)$.
The multi-resolution LCMM is then described by the following cost function:
\begin{equation}
\label{eq:lcmm_cost}
C(w) = \inf_{\eta\in\R^{\card{\U}}}\Cx(w+\A\eta).
\end{equation}
To implement the above cost function, we keep track of the state $\tilde{w} = w+\A\eta$ in the direct-sum market $\Cx$.
Specifically, with a trader purchasing $\delta$ that introduces arbitrage opportunities, we update $w$ to $w'=w+\delta$, seek the lowest cost for the trader by buying the corresponding bundles $\A\delta_{arb}$ on the traders behalf to remove arbitrage, and then update $\eta'=\eta+\delta_{arb}$.%
\footnote{We note that the purchase of bundle $\A\delta_{arb}$ has no effect on the trader's payoff, given coherent prices as in \cref{eq:Amu=0}.}
The resulting cost for the trader then is $\Cx(w'+\A\eta')-\Cx(w+\A\eta)$.
\begin{lemma}
    \label{prop:multi-resolution-cost}
    The constraint matrix $\A$ (\cref{eq:A_gen}) enforces price coherence across all submarkets.
    The cost function of the multi-resolution LCMM (\cref{eq:lcmm_cost}) removes any arbitrage opportunity that violates linear constraints specified in matrix $\A$.
\end{lemma}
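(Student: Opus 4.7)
The proof plan has two parts matching the two claims.

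For the first claim (price coherence), I would compute $\A^{\!\top}\vmu$ entrywise directly from the definition in \cref{eq:A_gen}. For each inner node $u\in\U$, summing $A_{vu}\mu_v$ over $v$ picks out $+\mu_u$ (from $v=u$) and $-\mu_v$ for each $v\in\mathcal{C}(u)$, so the condition $\A^{\!\top}\vmu=\vzero$ is equivalent to the one-step local coherence $\mu_u=\sum_{v\in\mathcal{C}(u)}\mu_v$ at every inner node. The multi-level statement $\mu_u=\sum_{v\in\V_k,\,v\subset u}\mu_v$ for any $k>\level(u)$ then follows by a short induction on $k-\level(u)$: the base case is the local identity, and the inductive step replaces each child's price by the sum of its grandchildren's prices, using the fact that the children's node-sets partition $N(u)$ and that descendants of different children are disjoint.

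For the second claim, I would use the first-order condition for the inner infimum in \cref{eq:lcmm_cost}. Assuming $\Cx$ is convex and differentiable (which holds for each $C_k$ in our setting), the optimizer $\veta^{\star}(w)$ satisfies $\A^{\!\top}\nabla\Cx\bigl(w+\A\veta^{\star}(w)\bigr)=\vzero$. By the envelope theorem applied to \cref{eq:lcmm_cost}, $\nabla C(w)=\nabla\Cx(w+\A\veta^{\star})$, and therefore the instantaneous price vector $\vmu=\nabla C(w)$ automatically lies in $\{\vmu:\A^{\!\top}\vmu=\vzero\}$, i.e., in the coherent price space specified by \cref{eq:Amu=0}. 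Combining this with the characterization of arbitrage from \citet{AbernethyChVa11}, which identifies arbitrage with violations of homogeneous linear price constraints, gives that no trade at state $w$ admits a guaranteed profit relative to the constraints encoded by $\A$.

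Finally, I would package the two parts together by verifying that the operational description given before the lemma (buying the bundle $\A\vdelta_{\mathrm{arb}}$ on the trader's behalf and updating $\veta$) indeed implements \cref{eq:lcmm_cost}: because $\A\vdelta_{\mathrm{arb}}$ lies in the kernel of payoffs under any coherent price vector (the columns of $\A$ correspond to linear relations that hold identically on outcomes after coherence is imposed), purchasing it does not change the trader's realized payoff but does shift the state into the coherent subspace; minimizing $\Cx$ over this shift is exactly the infimum in \cref{eq:lcmm_cost}.

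The main obstacle I anticipate is the second part: one has to be careful that the infimum in \cref{eq:lcmm_cost} is actually attained (so that the envelope/first-order argument is valid) and that the resulting $C$ is itself convex, differentiable, and $\vone$-invariant so that \cref{def:generalcost} still applies. For each of the cost functions considered in the paper (LMSR, QMSR, and $\frac{3}{2}$-power MSR) this follows from strict convexity of $\Cx$ on its effective domain and standard properties of partial minimization of convex functions over a linear subspace, but the argument should be stated explicitly rather than glossed over.
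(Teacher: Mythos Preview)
Your proposal is correct and matches the paper's proof essentially step for step: the paper also reads off the local identity $\mu_u=\sum_{v\in\mathcal{C}(u)}\mu_v$ from the columns of $\A$ in \cref{eq:A_gen} and then argues coherence across all levels, and for the second claim it likewise invokes the first-order optimality condition $\A^{\top}\nabla\Cx(w+\A\eta^{\star})=\vzero$ together with $\nabla C(w)=\nabla\Cx(w+\A\eta^{\star})$ to conclude $\A^{\top}\vp(w)=\vzero$. Your added caveats about attainment of the infimum and convexity/differentiability of the resulting $C$ are more careful than the paper (which simply assumes the minimizer exists), but they do not change the route.
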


Below we give an example of one specific form of multi-resolution markets, referred to as \emph{a multi-resolution LMSR market}~\cite{Dudik21} with a variant of the constraint matrix to support local updates.
\begin{example}[A multi-resolution LMSR market]
\label{ex:multiresolution_lmsr}
In a multi-resolution LMSR market, each submarket $k$ adopts the LMSR cost function $C_k$ with a \emph{separate} liquidity parameter $b_k > 0$, i.e., $C_k(\tilde{w}_k)=b_k\ln \big(\sum_{v\in\mathcal{V}_k} e^{\tilde{w}_k(v)/b_k}\big)$.
We can equivalently define the following constraint matrix
$\A^{\text{LMSR}}\in\R^{\card{\V}\times\card{\U}}$ as:
\begin{equation}
\label{eq:A}
%\small
A^{\text{LMSR}}_{vu}=
\begin{cases}
B_{\level(v)} & \text{if $v=u$,}
\\
-b_{\level(v)} & \text{if $v \subset u$,}
\\
0 & \text{otherwise,}
\end{cases}
\end{equation}
where $B_{
\level(v)} = \sum_{k=\level(v)+1}^K b_k$.
To verify that $\A^{\text{LMSR}}$ enforces price coherence, for each inner node $u$ at level $\ell = \level(u)$, by induction, we have
$
\big(\sum_{k>\ell} b_{k}\big) \mu_u = 
\sum_{k>\ell} \big(b_k\sum_{v\in\V_k:\: v \subset u} \mu_v\big)$ if and only if $ 
\mu_u = \sum_{v\in\V_k:\:v\subset u} \mu_v 
$, for any $l<k\leq K$.
% \begin{equation}
%     \label{eq:level_cost_function}
%     C_k(\tilde{w}_k)=b_k\ln \Big(\sum_{v\in\mathcal{V}_k} e^{(w(v)+a_v\eta(v))/b_k}\Big),
% \end{equation}
\end{example}

Given price coherence constraints, the challenge left is to have an efficient, local weight update to remove arbitrage across submarkets, which we formally define below.

\begin{Definition} [Efficient and local arbitrage removal in multi-resolution markets]
    \label{def:local_update}
    Given a designed constraint matrix $\A^*$ that spans the same subspace as $\A$ in \cref{eq:A_gen}, fix a submarket at level $\ell < K$. 
    Let $\tilde{w}$ be the market state in $\tilde{C}$, where the prices $\tilde{p}(\tilde{w})$ are coherent among all finer submarkets at levels $k>\ell$.
    A \emph{local update} satisfies that for any $x \in \mathbb{R}$ and for any node $u$ with $\level(u)\leq \ell$, the prices after buying $x$ shares of bundle $a^*_u$ (the $u$'s column of $\A^*$), i.e., $\tilde{p}(\tilde{w}+xa^*_u)$, remain coherent among all finer submarkets at levels $k>\ell$.
    
    An \emph{efficient and local arbitrage removal} satisfies that there is a closed-form solution of $x^*$, such that the prices after buying $x^*$ shares of $a^*_u$, i.e., $\tilde{p}(\tilde{w}+x^*a^*_u)$, remain coherent among all submarkets $k\geq\ell$, i.e., any arbitrage between the submarket $\ell$ and all submarkets with $k > \ell$ is removed.
\end{Definition}

To leverage efficient and local arbitrage removal, we start with a market state $\tilde{w}$ (e.g., $w=\vzero$ and $\eta=\vzero$), where all submarkets are coherent.
When some shares of security associated with $N(u)$ is traded, the submarket $\ell=\level(u)$ loses price coherence with others.
By buying a closed-form amount $x^*$ of $a_u$ (i.e., $\eta(u)\gets\eta(u)+x^*$), it is possible to restore coherence between $\ell$ and $\ell+1$, and \emph{local update} then implies that coherence with all finer levels $k>\ell+1$ is not disrupted. 
The process of restoring coherence can then go up to the parent of $u$ and the bundle vector $a^*_{\pt(u)}$.
Based on \cref{ex:multiresolution_lmsr}, we further illustrate efficient and local arbitrage removal in multi-resolution LMSR markets as a result of the constructed $\A^{\text{LMSR}}$. 

\begin{example}[Price, cost, and arbitage removal in multi-resolution LMSR markets]
    \label{ex:multiresolution_lmsr_arb}
    Given a coherent multi-resolution LMSR market with $w$ and $\eta$, to calculate price, let $\tilde{w} = w+\A^{\text{LMSR}}\eta$ be the corresponding state in $\Cx$.
    We consider a node $v\neq root$ with $k:=\level(v)$. 
    We denote the siblings of $v$ as $\sib(v)=\mathcal{C}(\pt(v)) \backslash v$. 
    The price of the security associated with $N(v)$ can be recursively calculated as
    \begin{equation}
    \label{eq:multi_LMSR_price}
    \price(N(v))
    =\frac{\exp\BigParens{\frac{w(v)+B_{k}\eta(v)}{b_k}}}
    {\exp\BigParens{\frac{w(v)+B_{k}\eta(v)}{b_k}}
    	+
    \sum_{u \in \sib(v)} \exp\BigParens{\frac{w(u)+B_{k}\eta(u)}{b_k}}} \cdot \price({\pt(v)}).
    \end{equation}
    Therefore, $\price$ can be calculated along the search path.
    Similar to the vanilla LMSR construction, we can define the weights in RQRU for multi-resolution LMSR as the following,
    \begin{equation}
        W_t(N(v)) = \exp\BigParens{\frac{w_t(N(v))+B_{k}\eta_t(N(v))}{b_k}} \text{ for all } v\in \mathcal{V}_k \text{ and } t = 0,1,\dots .
    \end{equation} 
    Based on $\price$, $\cost$ can be conveniently calculated following \cref{eq:def_cost}.
    
    Given a coherent multi-resolution LMSR market, after a trader buys $s$ shares of security associated with $N(u)$ with $\ell := \level(u)$, it suffices to update $\eta_u$ by a closed-form amount to restore price coherence across finer submarkets (i.e., for all $k\geq l$):
    \begin{equation}
        \label{eq:arb_x}
        x^*_u = \frac{b_\ell}{B_{\ell-1}}\ln\Parens{\frac{1-p_u}{p_u}\cdot\frac{p'_u}{1-p'_u}},
    \end{equation}
    where $p_u$ denotes the price of $N(u)$ after $s$ shares are traded in the submarket $\ell$, and $p'_u$ denotes the price of $N(u)$ in all other finer submarkets (i.e., $k > \ell$) that can express the price of $N(u)$, before arbitrage removal, i.e., $p_u \neq p'_u$ due to the trade.
    We defer calculation details to the appendix.
\end{example}

\begin{lemma}\label{lem:lcmm_q}
If $C_Q$ is the cost for quadratic scoring rule in \cref{eq:QMSR}, any consistent multi-resolution market with $C_Q$ and $(b_k, \mathcal{N}_k)_{k = 0,\dots K}$ has an efficient local arbitrage removal algorithm.
\end{lemma}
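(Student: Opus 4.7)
The plan is to adapt the multi-resolution LMSR construction in \cref{ex:multiresolution_lmsr,ex:multiresolution_lmsr_arb} to QMSR, leveraging the fact that QMSR prices are \emph{affine} in the state $\tilde w$. Concretely, by \cref{def:qmsr} one has $\price_Q^{(k)}(\{v\},\tilde w_k)=\tfrac{1}{n_k}+\tfrac{1}{2b_k}\bigl(\tilde w_k(v)-\tfrac{1}{n_k}\sum_{v'\in\mathcal{N}_k}\tilde w_k(v')\bigr)$, so the change in price induced by buying any bundle is linear in the number of shares. Because the coherence constraints $\A^\top\mu=\vzero$ are also linear, the LCMM optimization $C(w)=\inf_\eta\Cx(w+\A\eta)$ in \cref{eq:lcmm_cost} becomes a strictly convex quadratic program in $\eta$; its unique optimizer satisfies the linear first-order condition $\A^\top\nabla\Cx(w+\A\eta)=\vzero$, hence has a closed-form solution. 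This linearity is the sole structural feature of QMSR I will exploit.

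To obtain an \emph{efficient local} algorithm rather than invoke a generic quadratic-program solver, I would mirror the LMSR template and construct a matrix $\A^{\text{QMSR}}$ with the same column span as $\A$ of \cref{eq:A_gen}, but whose column $a^{\text{QMSR}}_u$ at each inner node $u$ with $\ell=\level(u)$ is supported on $u$ together with its strict descendants, with descendant entries chosen so that \cref{def:local_update}'s locality condition holds. Concretely, I will impose, for every pair of adjacent finer levels $k',k'+1>\ell$ and every $E\in\mathcal{N}_{k'}$, the refinement identity $\Delta\price(E)_{k'}=\sum_{E'\in\mathcal C(E)}\Delta\price(E')_{k'+1}$ induced by buying one share of $a^{\text{QMSR}}_u$. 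Because the price map is affine, this becomes a finite system of linear equations in the descendant entries of $a^{\text{QMSR}}_u$, which I would solve by a bottom-up recursion on the subtree rooted at $u$, yielding an explicit formula for each entry. A short computation analogous to the LMSR one (writing $a^{\text{QMSR}}_u=\sum_{v\in\{u\}\cup\{v'\subsetneq u\}}c_v a_v$) then shows $\A^{\text{QMSR}}$ spans the same subspace as $\A$, so \cref{prop:multi-resolution-cost} still delivers arbitrage freedom.

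For the closed-form share $x^*$, observe that after a trade of $s$ shares of $N(u)$ in submarket $\ell$ the discrepancy in the refinement identity at $u$ between submarkets $\ell$ and $\ell+1$ is an explicit affine function of $s$. Buying $x$ shares of $a^{\text{QMSR}}_u$ shifts that discrepancy by another affine function of $x$, and the locality built in the previous step ensures no other refinement identity on levels $k>\ell$ is violated. Setting the new discrepancy to zero yields a single linear equation whose unique solution $x^*$ is a closed rational expression in $s$, $b_\ell$, $b_{\ell+1}$, and the branching counts at $u$ — the QMSR analog of \cref{eq:arb_x}, rational rather than logarithmic. Iterating this update along the root-to-$u$ path restores coherence at all coarser levels in time $O(\text{depth})$.

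The main technical obstacle is specifying the descendant entries of $a^{\text{QMSR}}_u$ on \emph{non-regular} partition trees, where $\card{\mathcal C(v)}$ varies across nodes at the same level. For LMSR the multiplicative structure of $\exp(\tilde w/b_k)$ permits uniform level-based entries $-b_{\level(v)}$, but QMSR's additive structure forces the coefficients to depend on the local branching factor $\card{\mathcal C(v)}$ and the descendant count $m_{k',u}:=\card{\{v\in\mathcal{N}_{k'}:v\subset u\}}$. I anticipate that the bottom-up recursion produces coefficients whose denominators involve $m_{k',u}(n_{k'}-m_{k',u})$, which are strictly positive at any internal descendant of $u$, so the recursion is well-posed; checking this carefully and verifying that the resulting $\A^{\text{QMSR}}$ indeed spans the same column space as $\A$ is where the bulk of the remaining calculation lives.
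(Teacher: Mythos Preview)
Your outline is sound: QMSR's affine price map does reduce coherence restoration at $u$ to a single linear equation in the share amount, and the strict convexity of $\Cx$ guarantees uniqueness of the LCMM optimizer. But you take a harder road than the paper, and the obstacle you flag at the end does not in fact arise.

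The paper reuses the \emph{same} bundle matrix $\A^{\text{LMSR}}$ from \cref{eq:A} (entry $B_{\level(u)}$ at $u$ and $-b_{\level(v)}$ at each strict descendant $v$) without modification. The crucial observation is that under QMSR, buying $\xi$ shares of $a_u^*$ shifts the price of any descendant $u'$ of $u$ by $-\tfrac{\xi(n-|N(u)|)}{2n}\,|N(u')|$, and the price of any node disjoint from $u$ by $+\tfrac{\xi|N(u)|}{2n}\,|N(u')|$; in either case the shift is \emph{proportional to} $|N(u')|$. Since $|N(u')|=\sum_{v'\in\mathcal C(u')}|N(v')|$, the parent--children discrepancy $p_{u'}-\sum_{v'\in\mathcal C(u')}p_{v'}$ is automatically unchanged at every non-ancestor $u'$, regardless of branching factor. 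Your claim that ``QMSR's additive structure forces the coefficients to depend on the local branching factor'' is therefore mistaken: the uniform level-based entries already deliver locality for QMSR, and the closed-form share $\xi_u^*$ is the explicit rational expression
\[
\xi_u^*=\frac{b_\ell}{\sum_{k\ge\ell}b_k}\cdot\frac{2n\bigl(\sum_{v\in\mathcal C(u)}p_v^0-p_u^0\bigr)}{|N(u)|\,(n-|N(u)|)}.
\]

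Your bottom-up recursion for the descendant entries would of course terminate (indeed $\A^{\text{LMSR}}$ is one solution of your linear system), but it obscures the one-line telescoping that makes the paper's verification short.
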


\begin{Definition}
    \label{def:multiresmm}
    %Given a set system $(\mathcal{X}, \mathcal{F})$,\fang{it seems like $\mathcal{F}$ is determined by LCMM.  Given a multi resolution markets and $\A$} a {multi-resolution LCMM on $(\mathcal{X}, \mathcal{F})$} 
    Given a multi-resolution market that takes a (constructed) partition tree $\mathcal{T}=(\mathcal{V}, \mathcal{E})$, a cost function for each submarket $C_k$ with an initial state $w_k = \vzero$ for $k\leq K$ and an arbitrage state $\eta_k=\vzero$ for $k<K$, a designed constraint matrix $\A \in \mathbb{R}^{\card{\mathcal{V}}\times \card{\mathcal{V} \backslash \mathcal{V}_K}}$, and offers securities for all $E\in \mathcal{F}$. 
    For any set $E\in \mathcal{F}$, shares $s\in \mathbb{R}$, and states $w$ and $\eta$, it supports a sequence of price, cost, and buy operations:
\begin{itemize}
    \item $\price_{MR}(E; w, \eta)$: return the price of security for $E$ in the most fine-grained submarket $K$,
    \[\sum_{v\in \mathcal{V}_K: N(v)\in E} \frac{\partial C_K(\tilde{w}_K)}{\partial \tilde{w}_K(v)}.\] 
    \item $\cost_{MR}(E, s; w, \eta)$: return the current cost of $s$ shares of security for $E$, 
    \[\Cx(w + \A\eta+ s\vone_E +\A s_\textup{arb})-\Cx(w + \A\eta),\]
    where $\vone_E$ has an entry of 1 for nodes that form a partition of $E$, i.e., $v \in Z(E)$.
    \item $\buy_{MR}(E, s;w,\eta)$: update the state $w(u)\gets w(u)+s$ for all $u\in Z(E)$ and $w(u') \gets w(u')$ for $u'\notin Z(E)$, and update the arbitrage state $\eta(v)\gets \eta(v)+s_\textup{arb}(v)$ for all $v\in \{Z(E) \cup \pred(u)\}$ for all $u\in Z(E)$. 
\end{itemize}
\end{Definition}

\begin{theorem}\label{thm:multi-res}
    Given a multi-resolution markets $(\mathcal{N}_k, C_k)$ for $k=0,\cdots, K$, with an efficient and local arbitrage removal (\cref{def:local_update}), we can compute price, buy, cost operation for multi-resolution market (\cref{def:multiresmm}) in time big $O$ of the visiting number of the multi-resolution market.
    % \fang{previous}
    % Given a set of $n$ point $\mathcal{X}$, a collection of subsets on $\mathcal{X}$, $\mathcal{F}$, and a partition tree $\mathcal{T}$, $\mathcal{T}$ can support price, buy, and cost operation for multi-resolution market (\cref{def:multiresmm}) with time big $O$ of the visiting number of the multi-resolution market, given an efficient and local arbitrage removal (\cref{def:local_update}).
\end{theorem}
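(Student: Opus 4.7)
}
My plan is to follow the same recipe that established \Cref{thm:visiting} for single-resolution LMSR, namely run a depth-first traversal on the partition tree $\mathcal{T}$ and show that each visited node contributes $O(1)$ work, so that the total cost of every operation is $O(\tau)$ where $\tau$ is the visiting number of $\mathcal{T}$ on $(\mathcal{X}, \cup_{k=0}^K \mathcal{N}_k)$. The key data-structural maintenance is to store, at each node $v\in\mathcal{V}_k$, the submarket weight $w_k(v)$ and the arbitrage variable $\eta_k(v)$, augmented with whatever aggregate auxiliary values (analogous to the $W(N(v))$ used in \Cref{alg:lazyquery}) are needed by the cost function $C_k$ at that level. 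For each of the three operations I would formalize a DFS that classifies each visited node $v$ into the three standard cases (contained in, disjoint from, or crossing $E$), updating or reading the stored values in $O(1)$ time per node.

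For $\price_{MR}$ and $\cost_{MR}$, I would generalize the recursion behind \Cref{eq:multi_LMSR_price}: any $E\in\mathcal F$ is partitioned by the DFS into its canonical cover $Z(E)\subseteq\cup_k \mathcal{N}_k$, and the price of every $v\in Z(E)$ in the finest submarket can be written as the product (or, for other scoring rules, an additive/group-action accumulation) of the price contributed at each ancestor, computed from the stored $(w_k,\eta_k)$. Because each term along the path of length at most the depth of $v$ can be folded into the parent's contribution when the DFS returns from a child, the total work remains proportional to the number of visited nodes. The cost operation then follows from the closed-form expression $\Cx(w+\mathbf{A}\eta+s\mathbf{1}_E+\mathbf{A}s_{\textup{arb}})-\Cx(w+\mathbf{A}\eta)$ by evaluating $\Cx$ on the canonical cover plus its disturbed ancestors, again in $O(\tau)$ time since $\Cx$ is a direct sum over submarkets and only the visited nodes' contributions change.

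For $\buy_{MR}$, after locating $Z(E)$ by the same DFS and incrementing $w(u)\gets w(u)+s$ for $u\in Z(E)$, coherence is broken only at the submarket $\ell=\level(u)$ of each $u\in Z(E)$, relative to its parent's submarket. The hypothesis of an \emph{efficient and local} arbitrage removal (\Cref{def:local_update}) lets me compute a closed-form $x^\star$ at $u$ in $O(1)$ time that restores coherence with level $\ell-1$ without disturbing any finer submarket; I then walk up to $\pt(u)$ and repeat. Because every ancestor of a node in $Z(E)$ is itself visited by the same DFS (it is the parent of a node whose node-set crosses $E$), the total number of closed-form updates is subsumed by $\tau$. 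I would also incorporate the standard lazy-propagation trick from \Cref{alg:lazyUpdate} to defer updates to untouched subtrees, pushing down lazy values during traversal so that purely internal bookkeeping stays $O(1)$ per visited node.

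The main obstacle, and the place where the proof needs care, is verifying that the arbitrage-removal cascade interacts cleanly with lazy propagation and with the recursive price formula across \emph{arbitrary} consistent cost functions, not only LMSR. Concretely, I need to show that: (i) the closed-form $x^\star$ at a node depends only on the $O(1)$ aggregate values stored at the node and its parent, so that a pushed-down lazy value can be absorbed before computing $x^\star$; and (ii) after updating $\eta$ at an ancestor, the coherence invariant established at all deeper levels by \Cref{prop:multi-resolution-cost} is preserved, which is exactly what \Cref{def:local_update} guarantees. Once these two invariants are in place, induction on the DFS recursion establishes correctness, and the visiting-number bound immediately gives the claimed $O(\tau)$ runtime for all three operations. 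Finally, by the same reduction used in \Cref{thm:equiv}, initialization can be carried out in time linear in $|\mathcal{V}|$, which is asymptotically dominated by $\tau$ per operation across any non-trivial sequence of trades.
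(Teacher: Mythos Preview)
Your approach is essentially the same as the paper's: store $(w,\eta)$ at each node, use the DFS to locate the canonical cover $Z(E)$, walk back up along the search path applying the closed-form arbitrage fix guaranteed by \Cref{def:local_update}, and invoke the visiting-number bound from \Cref{thm:visiting}. The paper's own proof is a brief sketch covering exactly these steps, so your plan is on target and in fact more fleshed out.

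One small over-elaboration: you invoke lazy propagation to ``defer updates to untouched subtrees,'' but in the multi-resolution setting the $\buy_{MR}$ operation already touches only $O(\tau)$ nodes (the members of $Z(E)$ and their ancestors), so there is nothing to defer downward; the locality of \Cref{def:local_update} does the work that lazy propagation did in the single-resolution case. Your worry about the interaction of lazy values with the closed-form $x^\star$ is therefore largely moot, and the paper simply does not introduce lazy propagation here. Also, your closing remark that initialization time is ``asymptotically dominated by $\tau$ per operation'' is not quite right (initialization is $\Theta(|\mathcal{V}|)$, which can far exceed $\tau$), but this does not affect the theorem, which concerns per-operation cost only.
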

Combining ~\cref{ex:multiresolution_lmsr_arb,lem:lcmm_q}, the above theorem implies that the consistent multi-resolution markets with log and quadratic scoring rules have the same computational complexity as LMSR~\cref{def:lmsr} and QMSR~\cref{def:qmsr}.

\if 0
\newpage
% partition tree <--> independent market, multi-resolution LCMM market
By constructing a partition tree to support operations of prediction markets, we naturally get a \emph{multi-resolution market}, where each level of the tree corresponds to a \emph{submarket}, forming a increasingly finer partition of $N(root)=\mathcal{X}$.%
\footnote{\xw{TO-DO: does not necessarily to be a complete partition of $\mathcal{X}$.}}
\xw{TO-DO: maybe use the Gates Hillman prediction market over a year as an example?}
Let $\mathcal{V}_k$ for $k\in\{0,1,\dots,K\}$ denote the set of nodes at each level, with each associated with a canonical subset $N(v)$ for $v \in \mathcal{V}_k$. \fang{We may first define $\mathcal{N}_0,\dots, \mathcal{N}_k,\dots$ as a sequence of refining partitions on $\mathcal{X}$}
$\mathcal{V}_0$ contains the root associated with $N(root) = \mathcal{X}$, and each consecutive level contains the children of nodes from the previous level, which partition their corresponding parent node. 

A multi-resolution market construction is based on the LMSR, but it enables more flexibility by assigning parallel LMSRs with distinct liquidity parameters to orchestrate submarkets of different resolutions, i.e., each submarket has the LMSR cost function $C_k$ with a \emph{separate} liquidity parameter $b_k>0$: %$C_k(\vtheta_k)=b_k\log\Parens{\sum_{z\in\Z_k} e^{\theta_z/b_k}}$.
\begin{equation}
	\label{eq:level_cost_function}
	C_k(w_k)=b_k\ln(\sum_{v\in\mathcal{V}_k} e^{w(N(v))/b_k}).
\end{equation}
% two major motivations: 1) designer wants different liquidity for outcomes at different resolutions 2) arbitrage free market
This additional flexibility in designing liquidity grants two benefits~\cite{Dudik21}.
First, it enable the market designer to allocate budget and choose liquidity parameters that reflect the ``complexity'' of a security (e.g., smaller $b_k$ for submarkets with more complex or fine-grained securities) to facilitate information elicitation and price convergence.
Second, by carefully choosing a sequence of $\{b_k\}_{k=1}^\infty$, the overall market can achieve a constant worst-case loss, regardless of the partition granularity.%
\footnote{This uses the fact that the overall loss is bounded by the sum of losses of submarkets, which are at most $b_k\ln\card{\mathcal{V}_k}$.}

% set up linear constraint matrix
However, under such construction, as any canonical set $N(v)$ expressible in a coarser market can be also expressed in a finer one, running submarkets independently can introduce arbitrage opportunities.
To maintain price coherence, we follow Dudík et al.~\cite{DudikLaPe12,Dudik21} in designing a \emph{multi-resolution linearly constrained market maker} (multi-resolution LCMM), which imposes linear constraints to tie market prices among different submarkets and removes of any arbitrage opportunity. 
Below, we will show that the \emph{arbitrage removal} operation, together with \emph{price}, \emph{cost}, and \emph{buy}, can be efficiently calculated as we construct the partition tree to form a multi-resolution market.\fang{1) arbitrage free market with multi-resolution: we devise xxx market on any collection of hierarchical family of sets   2) efficient algorithm (arbitrage removal algorithms): local trade is sufficient, and close-form trade}

\xw{comment on the construction of a partition tree may or may not overlap with the multi-resolution design intended by the market designer. Maybe we could say form the partition tree construction first?}\fang{Interesting  Can we design a partition tree given a multi resolution constrain?  The converse seems always possible}\fang{A sequential of refining partitions $\mathcal{N}_1,\dots\mathcal{N}_k$}

% establish constraint matrix
Let $\M$ denotes a \emph{coherent price space}.
For the multi-resolution market constructed on $\mathcal{T}$, we use constraint matrix $\A$ to specify a set of \emph{homogeneous linear equalities} that describe a superset of~$\M$:
\begin{equation}
\label{eq:Amu=0}
\M\subseteq\{\vmu\in\R^{\card{\mathcal{V}}}:\:\A^\top\vmu=\vzero\}.
\end{equation}
Arbitrage opportunities arise whenever prices fall outside the set of coherent prices $\M$~\cite{AbernethyChVa11}.

Following \citet{Dudik21}, we generalize to the partition tree construction, and define the constraint matrix $\A$ to ensure that $\price(N(v)) = \sum_{u \in C(v)} \price(N(u))$, for any pair of submarkets.
Let $\U=\V \backslash \V_K$ be the set of inner nodes of $\mathcal{T}$, $\level(v)$ the level of a node~$v \in \V$, and $B_{
\level(v)}$ the sum of liquidity of finer markets, i.e., $B_{
\level(v)} = \sum_{k=\level(v)+1}^K b_k$.
The matrix $\A\in\R^{\card{\V}\times\card{\U}}$ is defined as the following:
\begin{equation}
\label{eq:A}
A_{vu}=
\begin{cases}
B_{\level(v)} & \text{if $v=u$,}
\\
-b_{\level(v)} & \text{if $v \subset u$,}
\\
0 & \text{otherwise.}
\end{cases}
\end{equation}
We can easily verify that $\A$ enforces price coherence, i.e., for each inner node $u$ at level $\ell = \level(u)$, we have
\[
\BigParens{\sum_{k>\ell} b_{k}} \mu_u = 
\sum_{k>\ell} \BigParens{b_k\sum_{v\in\V_k:\: v \subset u} \mu_v}
\quad \iff \quad
\mu_u = \sum_{v\in\V_k:\:v\subset u} \mu_v, 
\text{ for any $l<k\leq K$.}
\]

% LCMM cost function
The LMSR-based, multi-resolution LCMM leverages the defined constraint matrix $\A$ to remove arbitrage. 
Formally, the multi-resolution LCMM is described by the following cost function
\begin{equation}
\label{eq:lcmm}
C(\vw) = \inf_{\veta\in\R^{\card{\U}}}\Cx(\vw+\A\veta).
\end{equation}
It relies on the direct-sum cost $\Cx(\vw)=\sum_{k\le K} C_k(\vw_k)$. 
With each trader's purchase $\vs$ that causes price incoherence, an LCMM automatically seeks the most advantageous cost for the trader by buying a specific bundle $\A \vs_{\text{arb}}$ on the trader's behalf to remove arbitrage.
To support \emph{price}, \emph{buy}, and \emph{cost} operations, the market will bookkeep accumulated trader purchases as $\vw$ and accumulated automatic purchases made by the LCMM as $\A\veta$.

% efficient implementation with partition tree
% arbitrage amount in closed form
To calculate price, let $\tilde{\vw} = \vw+\A\veta$ be the corresponding state in $\Cx$.
We consider a node $v$ that is not the root with $k=\level(v)$, and denote the parent node of $v$ as $\pt(v)$ and the siblings of $v$ as $\sib(v)=C(\pt(v)) \backslash v$. 
We have
\begin{equation}
\label{eq:pxz:2}
p_v = \frac{p_v}{p_v + \sum_{u \in \sib(v)} p_u} \cdot p_{\pt(v)}
=\frac{e^{\tilde{\vw}_v/b_k}}{e^{w_v/b_k}+\sum_{u \in \sib(v)} e^{w_u/b_k}} \cdot p_{\pt(v)}.
\end{equation}
Following the construction of $\A$ in \eqref{eq:A} and expanding $\tilde{\vw}$, we get
\begin{equation}
\label{eq:ttheta}
\tilde{w}_v = w_v + \sum_{u\in\U} A_{vu}\eta_u
= w_v + B_{k}\eta_v - b_{k}\sum_{u \supset v} \eta_u.
\end{equation}
and
\begin{equation}
\label{eq:pxz:3}
p_v
=\frac{\exp\BigParens{\frac{w_v+B_{k}\eta_v}{b_k}}}
{\exp\BigParens{\frac{w_v+B_{k}\eta_v}{b_k}}
	+
\sum_{u \in \sib(v)} \exp\BigParens{\frac{w_u+B_{k}\eta_u}{b_k}}} \cdot p_{\pt(v)}.
\end{equation}
Similar to LMSR construction, we have the weights in RQRU for all round $t$ as the following,
\begin{equation}
    W_t(N(v)) = \exp\BigParens{\frac{w_t(N(v))+B_{k}\eta_t(N(v))}{b_k}} \text{ for all } v\in \mathcal{V}\text{ and } t = 0,1,\dots .
    %\label{eq:red_inv}
\end{equation}  
\xw{TO-DO: repeat algorithm?}

Next, we show that given a price coherent market, after a trader buys $s$ shares of security associated with node $u$, the price incoherence between the submarket at $\ell := \level(u)$ and submarkets at all other levels can be removed efficiently.
Specifically, to restore price coherence, it suffices to update $\eta_u$ by a closed-form amount:
\begin{equation}
    \label{eq:arb_t}
    t = \frac{b_\ell}{B_{\ell-1}}\ln\Parens{\frac{1-p_u}{p_u}\cdot\frac{p_u^*}{1-p_u^*}},
\end{equation}
where we use $p_u^*$ to denote the price of $u$ in all other finer levels.

\xw{move to the appendix?}
Consider two arbitrary levels $k$ and $m$ with $\ell < k < m \leq K$.
Since prices are coherent between levels $k$ and $m$ before buying $t$ shares of $\va_u$, we have, for any $v\in\V_k$,
\[ p_v = \sum_{z \in \V_m:\:z\subset v} p_z.\]
Let $\tilde{\vw}'=\tilde{\vw}+t\va_u$.
Based on matrix $\A$, we have
\[
	\tilde{\vw}'_v=
	\begin{cases}
	\tilde{\vw}_{v}-tb_{k}
	&\text{if $v \subset u$,}
	\\
	\tilde{\vw}_{v}
	&\text{otherwise,}
	\end{cases}
	\qquad\qquad
	\tilde{\vw}'_{z}=
	\begin{cases}
	\tilde{\vw}_{z}-tb_{m}
	&\text{if $z \subset u$,}
	\\
	\tilde{\vw}_{z}
	&\text{otherwise.}
	\end{cases}
\]
We calculate the new price $p'_v$ of any node $v\in\V_{k}$ and show it equals to the price derived from its descendants $z \in \V_{m}$. 
First, if $v \subset u$,
\begin{align*}
	p'_v &=
	\frac{p_v e^{-t}}{p_u^* e^{-t}+1-p_u^*}
	= \frac{\sum_{z\in\V_{m}:\:z \subset v} p_z e^{-t}}{p_u^* e^{-t}+1-p_u^*}
	= \sum_{z\in\V_{m}:\:z \subset v} p'_z.
	\intertext{%
		If $v \not\subset u$, then we similarly have}
	p'_v &=
	\frac{p_v}{p_u^* e^{-t}+1-p_u^*}
	= \frac{\sum_{z\in\V_{m}:\:z \subset v} p_z}{p_u^* e^{-t}+1-p_u^*}
	= \sum_{z\in\V_{m}:\:z \subset v} p'_z.
\end{align*}
Thus, prices remain coherent among all levels $m>k>\ell$.
Next, it remains to show that prices are coherent among levels $\ell$ and $\ell+1$, i.e., $p'_\mu = \sum_{v \in C(\mu)} p'_v$.
Based on matrix $\A$, we have
\[
	\tilde{\vw}'_\mu=
	\begin{cases}
	\tilde{\vw}_{\mu}+tB_{\ell}
	&\text{if $\mu = u$,}
	\\
	\tilde{\vw}_{\mu}
	&\text{otherwise,}
	\end{cases}
	\qquad\qquad
	\tilde{\vw}'_{v}=
	\begin{cases}
	\tilde{\vw}_{v}-tb_{\ell+1}
	&\text{if $z \subset u$,}
	\\
	\tilde{\vw}_{v}
	&\text{otherwise.}
	\end{cases}
\]
To verify \eqref{eq:arb_t}, we have
\begin{align}
    p'_\mu &= \sum_{v \in C(\mu)} p'_v\\
    \frac{p_\mu e^{tB_\ell/b_\ell}}{p_\mu e^{tB_\ell/b_\ell}+1-p_\mu} &= \frac{\sum_{v \in C(\mu)}p_v e^{-t}}{\sum_{v \in C(\mu)}p_v e^{-t}+1-\sum_{v \in C(\mu)}p_v}\\
    t &= \frac{b_\ell}{B_{\ell-1}}\ln\Parens{\frac{1-p_\mu}{p_\mu}\cdot\frac{\sum_{v \in C(\mu)}p_v}{1-\sum_{v \in C(\mu)}p_v}}
\end{align}
Note that $B_{\ell-1} = B_{\ell} + b_{\ell}$.
%
% \begin{Definition}
%     Given a collection of sets of outcomes $\mathcal{F}\subseteq 2^\mathcal{X}$, a \defn{multi-resolution LCMM on $(\mathcal{X},\mathcal{F})$} takes a partition tree $\mathcal{T}$, %=(\mathcal{V, E}), 
%     a liquidity scheme $\{b_k\}_{k=1}^K$, an initial state $w_0:\mathcal{N}\to \mathbb{R}$ and offers securities for all $E\in \mathcal{F}$ that support a sequence of price, cost, and buy operations taking one of the following forms: for any set $E\in \mathcal{F}$ where $Z(E)$ denotes the set of nodes that forms a partition of $E$, shares $s\in \mathbb{R}$, and state $w$,
% \begin{itemize}
%     \item $\price_{MR}(E; \vw, \veta)$: return the current price of security for $E$ defined in Eq.~\ref{eq:pxz:3}, or 
%     \[
%         \price_{MR}(E; w) := \sum_{v \in \V_K: v \in E} \frac{\partial C_K(\tilde{\vw}_K)}{\partial \tilde{\vw}_K(v)} = \frac{\sum_{v\in E}e^{(w(v)+\va_v\veta)/b}}{\sum_{v'\in \V_K} e^{(w(v')+\va_{v'}\veta)/b}}   \]
        
%     \item $\cost(E, s; \vw, \veta)$: return the current cost of $s$ shares of $E$, 
%     \[
%         \Cx(\vw + \A\veta+ s \cdot \vone_E +\A\vs_\textup{arb})-\Cx(\vw + \A\veta),
%     \]
%     where $\vone_E$ has an entry of 1 for nodes that form a partition of $E$, i.e., $u \in Z(E)$.
    
%     \item $\buy(E, s; \vw, \veta)$: update the state $w(u)\gets w(u)+s$ for all $u\in Z(E)$ and $w(u') \gets w(u')$ for $u'\notin Z(E)$; update the state $\eta(v)\gets \eta(v)+t_v$ as in Eq.~\ref{eq:arb_t} for all $v\in \{Z(E) \cup pred(u)\}$.
% \end{itemize}
% \end{Definition}
\fi

\section{AMMs for decentralized finance}\label{sec:swap}
% \subsection{AMMs for decentralized finance}\label{sec:pre_defi}\fang{to later}
A \defn{constant function market maker} (CFMM) for a finite set of $n$ assets $\mathcal{X}$ maintains a \emph{reserve} of available assets $\vw\in \R^\mathcal{X}$ and a \emph{trading function} $\varphi: \R^\mathcal{X}\to \R$ that is concave and increasing so that $\varphi(\vw)> \varphi(\vw')$ if $w_x\ge w'_x$ for all $x$ and $\vw\neq \vw'$.  
Traders propose to trade or exchange one basket of assets $\vr^+$ for another $\vr^-\in \R^\mathcal{X}$, where $\vr^+$ is referred to as the \emph{tender basket} and $\vr^-$ as the \emph{received basket}.  
The CFMM accepts the proposed trade if $\varphi(\vw+\vr^+-\vr^-) =  \varphi(\vw)$ and updates the reserve to $\vw\gets \vw+\vr^+-\vr^-$. 
Some examples used in practice are the following.
\begin{itemize}
    \item Logarithmic trading function~\cite{angeris2023geometry} with parameter $b\in \R_{>0}$ is
    \begin{equation}\label{eq:swap_log}
        \varphi(\vw) = -\sum_{x\in \mathcal{X}} e^{-w_x/b}
    \end{equation}
    \item Constant (weighted) sum market maker is a linear trading function with predetermined, non-negative parameters $\vc = (c_x)_{x\in \mathcal{X}}$: \begin{equation}\label{eq:swap_lin}
        \varphi(\vw) = \sum_{x\in \mathcal{X}} c_x w_x
    \end{equation}
    \item Another choice of trading function is the (weighted) geometric mean with  non-negative parameters $(\gamma_x)_{x\in \mathcal{X}}$: 
    \begin{equation}\label{eq:swap_prod}
        \varphi(\vw) = \prod_{x\in \mathcal{X}} w_x^{\gamma_x}
    \end{equation}
    Examples include Uniswap v2~\cite{zhang2018formal,adamsuniswapv2}, Balancer~\cite{martinelli2019balancer}, and SushiSwap~\cite{zhang2018formal}.  In particular, Uniswap and SushiSwap use $\gamma_x = 1/n$ for all $x$, and are called \textit{constant product market makers}~\cite{angeris2021analysis}.
\end{itemize}
Again, this paper will focus on logarithmic trading function. 

\paragraph{Swap trades for two baskets in CFMM}
One of the most common trade is \emph{swap} that involves only two assets, one that is tendered and one that is received, i.e., $\vr^+$ and $\vr^-$ only have one nonzero entry at $x^+$ and $x^-$ respectively. 
Thus, we have $\vr^+ = s_+\mathbf{1}_{x^+}$ and $\vr^- = s_-\mathbf{1}_{x^-}$, where $s_- \ge 0$ is the quantity of asset $x^-$ the trader wishes to receive in exchange for the quantity $s_+\ge 0$ of asset $x^+$. 
This is referred to as exchanging asset $x^+$ for asset $x^-$.  

A natural generalization of exchanging two assets is exchanging multiples of two baskets where the market maker tenders and receives a multiple of fixed baskets~\cite{angeris2022constant}.
Thus, we have $\vr^+ = s_+ \tilde{\vr}^+$ and $\vr^- = s_- \tilde{\vr}^-$, where $s_+, s_-\ge 0$ scale the fixed baskets $\tilde{\vr}^+$ and $\tilde{\vr}^-$. In this paper, we consider combinatorial baskets where $\tilde{\vr}^+ = \mathbf{1}_{E^+}$ and   $\tilde{\vr}^- = \mathbf{1}_{E^-}$ for some sets $E^+$ and $E^-$.  When $E^+ = \{x^+\}$ and $E^- = \{x^-\}$, this reduces to the above two-asset trade.  Additionally, one may want to support trades on subsets of assets with bounded cardinality, e.g., Balancer can support swap on sets of set up to eight assets. 

%  \xw{are there any combo basket example? Notation: here vector not function?}\fang{yes, I change $w$ to $\vw$.  I do not understand combo basket.  }

Given $\varphi$, reserve $\vw$, and $\tilde{\vr}^+, \tilde{\vr}^-$, the trade acceptance condition is 
\begin{equation}\label{eq:trading2}
    \varphi(\vw+s_+\tilde{\vr}^+-s_-\tilde{\vr}^-) = \varphi(\vw).
\end{equation}
As $\varphi$ is increasing, there is an one-to-one mapping between $s_+$ and $s_-$. 
First, the forward exchange finds the scale of receiving basket $s_-$ for $s_+\tilde{\vr}^+$ that satisfies \cref{eq:trading2}, and the backward exchange finds the scale $s_+$ of rendering basket for $s_-\tilde{\vr}^-$.  

In this section, we ask \emph{when the number of assets $n$ is large and traders exchange assets under the combinatorial basket setting, whether or how can we support the forward and backward exchange function?}
%if the number of asset $n$ is large how can we solve forward and backward exchange function when traders dynamically changes the reserves under the combinatorial basket setting.
\begin{Definition}\label{def:swap}
    Given a set system $(\mathcal{X}, \mathcal{F})$, a {combinatorial swap market maker with $\varphi$} taking an initial reserves $\vw_0$ and swap for all $E^+, E^-\in \mathcal{F}$, supports a sequence of swap operations taking one of the following forms:
\begin{itemize}
\item $\tradef(E^-, E^+, s_+, \vw)$: return $s$ so that $\varphi(\vw+s_+\mathbf{1}_{E^+}-s\mathbf{1}_{E^-}) = \varphi(\vw)$ and update $\vw\gets \vw +s_+\mathbf{1}_{E^+}-s\mathbf{1}_{E^-}$.
\item $\tradeb(E^-, E^+, s_-, \vw)$: return $s$ so that $\varphi(\vw+s\mathbf{1}_{E^+}-s_-\mathbf{1}_{E^-}) = \varphi(\vw)$ and update $\vw\gets \vw +s\mathbf{1}_{E^+}-s_-\mathbf{1}_{E^-}$.
\end{itemize}
For simplicity, we assume that the sequence of trade always has a feasible $s$ that is bounded by some constant $\lambda$.
\end{Definition}

In this section, we show that the swap operation in \cref{def:swap} can be reduced from a dynamic algorithm problem---range update problems which can be seen as a special case of range query range update by only supporting query on the universe $\mathcal{X}$.

\begin{Definition}\label{def:rqru_swap}
    The \defn{range update problem} on $(\mathcal{X}, \mathcal{F})$ with  $\varphi$ and $+$, and initial weights, denoted as $(\varphi,+)$-RU.  It requests a sequence of range update $\update(E, S; W)$ such that for each $x \in E$, update $W(x) \gets S+W(x)$, and for each $x'\notin E$, $W(x') \gets W(x')$ and return $\varphi(w)$. 
\end{Definition}
As we will see $\varphi$ depends on the choice of trading function. A function $\varphi$ is \emph{decomposable} if for any $\vw \in \R^\mathcal{X}$, $x\in \mathcal{X}$, and $w_x' \in \R$, we can compute $\varphi(\vw)$ from  $w_x, w_x'$ and $\varphi(\vw_{-x}, w_x')$ in constant time where $(\vw_{-x}, w_x')$ is the vector in $\R^{\mathcal{X}}$ obtained by replacing the $x$-coordinate of $\vw$ with $w_x'$. All above three trading function examples are decomposable.
%\xw{make it a formal Definition?}

% Because $\varphi$ is monotone, we can solve both forward and backward trade using binary search or Newton's method.  However, the evaluation problem for $\varphi$ may be linear in the number of assets $n$.  Below, we connect the combination swap market maker to the range update problem, which may yield sublinear or faster query time.  The range update problem can be seen as a special case of range query range update by only supporting query on the universe $\mathcal{X}$.  We would like to emphasize that though we do not explicitly restrict the function $\varphi$, as in the above trading function examples, $\varphi$ is decomposable and easy to compute.

% Specifically, given $E^-, E^+, s_+$, and $\vw$, we first compute a large enough $\lambda$ so that $\varphi(\vw+s_+\mathbf{1}_{E^+}-\lambda \mathbf{1}_{E^-})$ is between $\varphi(\vw)$ and $\varphi(\vw+s_+\mathbf{1}_{E^+})$.  Then we can use binary search to find $s$ such that .

\begin{proposition}\label{thm:equiv_swap}
Given a set system $(\mathcal{X}, \mathcal{F})$ with $\card{\mathcal{X}} = n$ and $\{x^*\}\in \mathcal{F}$ for some $x^*\in \mathcal{X}$, let $\varphi: \R^\mathcal{X} \to \R$ be a decomposable trading function.  
If there is a $(\varphi,+)$-RU algorithm on $(\mathcal{X}, \mathcal{F})$ with $T_U(n)$ range update time, there exists a combinatorial swap market maker on $(\mathcal{X}, \mathcal{F})$ that can support swap operations in $\tilde{O}(T_U(n))$ with additional log factor depending on the input size and using the same order of space.
    
Conversely, if there is a combinatorial swap market maker on $(\mathcal{X}, \mathcal{F})$ that supports both swap operations in $T_S(n)$, there is a $(\varphi,+)$-RU algorithm with ${O}(T_S(n))$ range update time using the same order of space.
\end{proposition}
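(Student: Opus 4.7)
The plan is to prove the two directions separately.  The forward implication uses the $(\varphi,+)$-RU data structure as the single source of truth for the reserve $\vw$, and reduces each trade to a bisection over one real parameter.  Because $\varphi$ is concave and coordinate-wise increasing, the one-variable function
$f(s)\;:=\;\varphi\bigParens{\vw+s_+\mathbf{1}_{E^+}-s\mathbf{1}_{E^-}}$
is concave in $s$ with $f'(s)=-\sum_{x\in E^-}\partial_x\varphi<0$, hence strictly decreasing; at $s=0$ we have $f(0)\geq\varphi(\vw)$, and for $s$ large enough (bounded by the assumed $\lambda$) the value drops below $\varphi(\vw)$, so the unique solution lies in $[0,\lambda]$.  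To implement $\tradef(E^-,E^+,s_+;\vw)$ I would first call $\update(E^+,s_+)$, pay $T_U(n)$ to move the state to $\vw+s_+\mathbf{1}_{E^+}$, then bisect on $s$: each probe applies $\update(E^-,-s)$ to read off $f(s)$ and then $\update(E^-,s)$ to roll back, costing $2T_U(n)$.  After $O(\log(\lambda/\epsilon))$ probes the correct $s^*$ is found and committed, and $\tradeb$ is entirely symmetric.  This gives the $\tilde O(T_U(n))$ bound with the stated log overhead from the input precision.

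For the converse the main idea is to use the guaranteed singleton $\{x^*\}\in\mathcal{F}$ as an ``absorbing coordinate'' so that every additive range update can be simulated by a single forward-swap that uses $\{x^*\}$ as the received basket.  I would keep three pieces of auxiliary state: the tracker vector $W$ for the caller of the RU interface; the swap market's internal reserve $\vw^{(\text{mm})}$; and two scalars $\Phi$ and $\vw^{(\text{mm})}_{x^*}$.  Initialize $\vw^{(\text{mm})}\gets W^{(0)}$ and compute $\Phi\gets\varphi(W^{(0)})$ once by $O(n)$ direct evaluation.  The invariant to maintain is
\[
\vw^{(\text{mm})}_x=W_x\text{ for all }x\neq x^*,\qquad \varphi(\vw^{(\text{mm})})=\Phi.
\]
To serve $\update(E,S)$ I would call $\tradef(\{x^*\},E,S;\vw^{(\text{mm})})$, read back the returned quantity $s'$, and then update $W\gets W+S\mathbf{1}_E$ and $\vw^{(\text{mm})}_{x^*}\mathrel{+}=(S\cdot\mathbf{1}[x^*\in E])-s'$.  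A short check shows both invariants still hold: the swap preserves $\Phi$ by definition, and off the $x^*$ coordinate the state simultaneously gains exactly $S\mathbf{1}_E$.  Since $W$ and $\vw^{(\text{mm})}$ now differ only at $x^*$, decomposability of $\varphi$ lets me compute $\varphi(W)$ from $\Phi$, $\vw^{(\text{mm})}_{x^*}$, and $W_{x^*}$ in constant time, and this is the value the RU call returns.  Each update costs one $\tradef$ call plus $O(1)$ bookkeeping, yielding $O(T_S(n))$ per update; space overhead is $O(1)$ on top of the swap market maker, giving the claimed matching asymptotic.

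The chief obstacles I anticipate are bookkeeping and accuracy.  In the forward direction, the subtle point is arguing that the bisection can tolerate the approximation error implicit in the $\tilde O$ notation while still leaving the RU's internal state consistent after commit; this is handled by rolling back every probe with the exact inverse update and committing only the final $s^*$.  In the backward direction, the subtle point is that a black-box swap market maker does not expose $\varphi$ directly, so we must bootstrap $\Phi$ once at initialization and then rely on the fact that swaps conserve it; the ``absorbing coordinate'' construction, together with the decomposability assumption, is precisely what lets us translate between the $\varphi$-preserving swap world and the $\varphi$-reporting RU world without ever re-evaluating $\varphi$ on an $n$-dimensional vector after initialization.  The hypothesis $\{x^*\}\in\mathcal{F}$ is used exactly once but crucially, to form a legal basket for the swap; otherwise we could not apply $\tradef$ with $\{x^*\}$ as the received side.
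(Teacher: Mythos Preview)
Your plan matches the paper's proof in both directions: the forward reduction via bisection on the single scale parameter, and the converse via an ``absorbing coordinate'' $x^*$ whose value is tracked separately so that decomposability recovers $\varphi(W)$ from the conserved $\Phi$.

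One small gap in the converse: you only invoke $\tradef(\{x^*\},E,S)$, but the swap primitives require nonnegative scales $s_+,s_-\ge 0$, so this call is ill-formed when $S<0$. The paper handles the sign explicitly: for $S\ge 0$ it tenders $S\mathbf{1}_E$ and receives on $\{x^*\}$ via $\tradef$, while for $S<0$ it tenders on $\{x^*\}$ and receives $|S|\mathbf{1}_E$ via $\tradeb$. Either way the off-$x^*$ coordinates move by exactly $S\mathbf{1}_E$ and the invariant $\varphi(\vw^{(\text{mm})})=\Phi$ is preserved, so your bookkeeping goes through once you split on the sign of $S$. Also, you do not need the full tracker vector $W$; as the paper does, it suffices to maintain the two scalars $M=W_{x^*}$ and $M'=\vw^{(\text{mm})}_{x^*}$ alongside $\Phi$.
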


The main observation is that if we can compute the trading function $\varphi$, we can determine the scales through a binary search. 
%\xw{the complexity is not dependent on the existence of closed form?}\fang{no}
Conversely, we can use the decomposable property to maintain the value of $\varphi$ on $n-1$ coordinates and recover the true value.

As a simple corollary, we can see that the above examples correspond to several interesting RU problems summarized in \cref{tab:results2}.  As show in the previous section both $(+,\cdot)$ and $(+,+)$ have efficient algorithms that depend on the complexity of set system $(\mathcal{X}, \mathcal{F})$. Unfortunately, it remains unclear how to apply our partition tree scheme to the geometric mean CFMM.

\begin{corollary}
    Given a set system $(\mathcal{X}, \mathcal{F})$ and a partition tree $\mathcal{T}$, a lazy propagation algorithm on $\mathcal{T}$ can support swap operations for logarithmic trading function in \cref{eq:swap_log} with running time big $O$ of the visiting number of $\mathcal{T}$ on $(\mathcal{X}, \mathcal{F})$.
\end{corollary}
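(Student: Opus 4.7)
The plan is to derive this corollary by chaining together the reduction in Proposition~\ref{thm:equiv_swap} with the lazy-propagation partition tree algorithm already developed for LMSR in \cref{thm:visiting}. By Proposition~\ref{thm:equiv_swap}, it suffices to exhibit an efficient $(\varphi,+)$-RU algorithm on $(\mathcal{X},\mathcal{F})$ when $\varphi$ is the logarithmic trading function of \cref{eq:swap_log}.

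First I would perform the key change of variables: define $W(x) := e^{-w_x/b}$, so that $\varphi(\vw) = -\sum_{x\in\mathcal{X}} W(x)$. Under this substitution, an addition range update $w_x \mapsto w_x + S$ for $x \in E$ becomes a multiplication range update $W(x) \mapsto e^{-S/b}\, W(x)$ for $x \in E$, while $W(x')$ stays fixed for $x'\notin E$. Thus the $(\varphi,+)$-RU problem for the logarithmic CFMM reduces, via this one-to-one transformation of states, to an instance of the $(+,\cdot)$-RQRU problem of \cref{def:rqru} whose initial weights are $W^{(0)}(x)= e^{-w^{(0)}_x/b}$, together with the bookkeeping of the running total $M = \sum_x W(x)$ exactly as in the reduction used in the proof of \cref{thm:equiv}.

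Second, I would invoke \cref{thm:visiting}: Algorithms~\ref{alg:lazyquery} and~\ref{alg:lazyUpdate} on the partition tree $\mathcal{T}$ support both the range-update step and the query needed to read off $M$ (which is either maintained incrementally or obtained by a single query on the root node-set $\mathcal{X}$) in time $O(\nu)$, where $\nu$ is the visiting number of $\mathcal{T}$ on $(\mathcal{X},\mathcal{F})$. Hence we obtain a $(\varphi,+)$-RU algorithm with update time $T_U(n) = O(\nu)$, after which evaluating $\varphi(\vw) = -M$ is constant-time.

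Finally, applying Proposition~\ref{thm:equiv_swap}, this yields a combinatorial swap market maker for the logarithmic CFMM supporting $\tradef$ and $\tradeb$ in time $\tilde{O}(T_U(n)) = \tilde{O}(\nu)$, which is the claimed bound of big $O$ of the visiting number (absorbing the binary-search factor into the $\tilde{O}$). The main obstacle is essentially the verification that $\varphi$ is decomposable in the sense required by \cref{thm:equiv_swap} (trivial for sums of univariate exponentials) and that the binary search in the reduction locates $s$ to sufficient precision; both are routine given the boundedness assumption on feasible $s$ built into \cref{def:swap}.
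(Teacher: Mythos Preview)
Your proposal is correct and follows the same route the paper intends: reduce swap to $(\varphi,+)$-RU via Proposition~\ref{thm:equiv_swap}, then observe (as recorded in \cref{tab:results2}) that for the logarithmic trading function the substitution $W(x)=e^{-w_x/b}$ turns this into the $(+,\cdot)$-RQRU problem handled by the lazy-propagation partition tree of \cref{thm:visiting}. The only nuance worth noting is that, as you already observe, the binary search from Proposition~\ref{thm:equiv_swap} introduces a logarithmic factor, so the bound is really $\tilde{O}$ of the visiting number rather than plain $O$.
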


\begin{corollary}
    Given a set system $(\mathcal{X}, \mathcal{F})$ and a partition tree $\mathcal{T}$, a lazy propagation algorithm on $\mathcal{T}$ can support swap operations for linear trading function in \cref{eq:swap_lin} with running time big $O$ of the visiting number of $\mathcal{T}$ on $(\mathcal{X}, \mathcal{F})$.
\end{corollary}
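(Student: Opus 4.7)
The strategy parallels the previous corollary for the logarithmic trading function: first, reduce combinatorial swap to a range update problem via \cref{thm:equiv_swap}, and then implement that range update problem by adapting the lazy-propagation partition tree of \cref{sec:alg}. To invoke \cref{thm:equiv_swap}, I need only check that the linear trading function $\varphi(\vw)=\sum_{x\in\mathcal{X}} c_x w_x$ is decomposable; this is immediate because substituting $w_x'$ for $w_x$ shifts $\varphi$ by $c_x(w_x'-w_x)$, a constant-time update. Thus it suffices to produce a $(\varphi,+)$-RU data structure on $(\mathcal{X},\mathcal{F})$ whose update time is big $O$ of the visiting number of $\mathcal{T}$.

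For the data structure, I would precompute, at each node $v\in\mathcal{V}$, the static aggregate $c(N(v)):=\sum_{x\in N(v)} c_x$ in linear time bottom-up, and maintain at each $v$ a dynamic scalar $\Phi_v := \sum_{x\in N(v)} c_x w_x$ together with a pending addition $\pend(v)\in\R$, initialized to $0$. On an update $\update(E,S)$ I would run the same depth-first traversal as \cref{alg:lazyUpdate}: at a node $v$ with $N(v)\subseteq E$, set $\Phi_v \gets \Phi_v + S\cdot c(N(v))$ and $\pend(v)\gets \pend(v)+S$ and return; at a node whose node-set crosses $E$, first push the pending value to each child $u$ (by setting $\Phi_u \gets \Phi_u + \pend(v)\cdot c(N(u))$ and $\pend(u)\gets \pend(u)+\pend(v)$, then resetting $\pend(v)\gets 0$) and recurse. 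The RU query of $\varphi(\vw)$ simply returns $\Phi_{\mathrm{root}}$. By the definition of visiting number in \cref{sec:pre_partition}, this traversal visits $O(\text{visiting number})$ nodes and spends $O(1)$ work per node, so $T_U(n)$ is big $O$ of the visiting number of $\mathcal{T}$ on $(\mathcal{X},\mathcal{F})$.

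Combining the above two steps with \cref{thm:equiv_swap} yields a combinatorial swap market maker with running time big $O$ of the visiting number. The potentially problematic logarithmic overhead from the binary search in \cref{thm:equiv_swap} is actually avoidable here: linearity collapses the acceptance equation $\varphi(\vw+s_+\mathbf{1}_{E^+}-s_-\mathbf{1}_{E^-})=\varphi(\vw)$ to the closed form $s_+\,c(E^+)=s_-\,c(E^-)$, so each forward or backward trade can be solved by a single partition-tree query that returns $c(E^+)$ and $c(E^-)$.

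The main obstacle, as in the logarithmic case, is verifying that the lazy-propagation invariant $\Phi_v=\sum_{x\in N(v)} c_x w_x$ is preserved under arbitrary interleavings of updates. This reduces to showing that the push-down step is exact: since the children of $v$ partition $N(v)$, linearity of $c$ gives $\sum_{u\in\mathcal{C}(v)} c(N(u))=c(N(v))$, so shifting $\Phi_u$ by $\pend(v)\cdot c(N(u))$ for each child exactly accounts for the deferred additive shift of $\pend(v)$ on every $w_x$ with $x\in N(v)$. Because addition is commutative and associative, pending values aggregate correctly across nested updates, completing the correctness argument; the complexity bound then follows from the same visiting-number accounting as in \cref{thm:visiting}.
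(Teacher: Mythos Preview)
Your proposal is correct and follows essentially the same approach the paper intends: reduce swap to a $(\varphi,+)$-RU problem via \cref{thm:equiv_swap}, then solve that RU problem with the lazy-propagation partition tree. The paper leaves this corollary unproved, simply pointing to \cref{tab:results2} and remarking that the linear CFMM reduces to a $(+,+)$-type problem already handled in \cref{sec:qmsr} (and more generally in \cref{app:rqrug}). Your explicit handling of the coefficients $c_x$ via the precomputed aggregates $c(N(v))$ is exactly an instance of the generalized RQRU framework of \cref{def:rqrug}: take $Z(x)=(c_x,\,c_x w_x)\in\R^2$ with coordinatewise $\oplus$, and let $S\in(\R,+)$ act by $S\otimes(a,b)=(a,\,b+Sa)$; distributivity and the group-action axioms hold, so \cref{thm:rqrug} yields the visiting-number bound without re-deriving the invariant by hand. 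Your closed-form observation $s_+\,c(E^+)=s_-\,c(E^-)$ is a genuine improvement over the black-box use of \cref{thm:equiv_swap}, since it removes the binary-search log factor and delivers the stated $O(\cdot)$ rather than $\tilde O(\cdot)$.
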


\section{Open problems and conclusion}
Based on computational geometry,  we present a unified framework for both analyzing the computational complexity and designing efficient algorithms for LMSR.   There are several directions for further exploration.  Firstly, beyond LMSR, we extend our framework to other scoring rules and show computational complexity distinctions between different scoring rules.  Further investigations into how different scoring rules impact the computational complexity of combinatorial prediction markets would be interesting.
Secondly, while our focus has been on exact arbitrage-free combinatorial prediction markets, exploring faster approximation algorithms by utilizing approximation range queries~\cite{10.1016/j.comgeo.2007.05.008} could be a promising direction.  Our multi-resolution market design offers a systematic approach to integrate multiple independent markets while upholding computational efficiency.  Exploring additional sufficient conditions for efficient and localized arbitrage removal could yield valuable insights.  
Finally, our preliminary investigation into CFMMs highlights a range of interesting variants of the query range update problem that could motivate further exploration and development.

\bibliography{main}
\bibliographystyle{plainnat}
\clearpage
\appendix

\section{Proofs in Section~\ref{sec:lmsr}}\label{app:lmsr}
% \subsection{Proof of Theorem~\ref{thm:equiv}}

\subsection{Proofs in Section~\ref{sec:alg}}

\begin{algorithm}
    \begin{algorithmic}
    \Require A range $E\subseteq \mathcal{X}$, value to update $S\in \R_+$, and a partition tree $\mathcal{T}$ on $\mathcal{X}$ where each node $v$ stores the encoding of associated node-set $N(v)\subseteq \mathcal{X}$, the list of children $\mathcal{C}(v)\subset \mathcal{V}$, weight $\val(v)$ and pending update $\pend(v)$ which are initially one, $\val(v) = \pend(v) = 1$.
        \Function{range\_update}{$E, S$}
            \State \Call{range\_update}{$E,S,root$}
        \EndFunction
        \Function{range\_update}{$E, S, v$}
        \If{$\pend(v)\neq 1$}\Comment{Check if there are pending updates for the current node}\label{line:updatecheck}
            \State $\val(v) \gets \pend(v)\cdot \val(v)$
            \For{$u\in \mathcal{C}(v)$}
                \State $\pend(u)\gets \pend(v)\cdot \pend(u)$
            \EndFor
            \State $\pend(v) \gets 1$
        \EndIf
        
        \If{$N(v)\subseteq E$}\Comment{$E$ contains $N(v)$}\label{line:update1}
            \State $\val(v) \gets S\cdot \val(v)$
            \For{$u\in \mathcal{C}(v)$}
                \State $\pend(u)\gets S\cdot \pend(u)$
            \EndFor
            \State \Return
        \ElsIf{$N(v)\cap E = \emptyset$}\Comment{$E$ and $N(v)$ are disjoint}\label{line:update2}
            \State \Return
        \Else                       \Comment{$E$ crosses the node set $N(v)$}\label{line:update3}
            \State $ans \gets 0$
            \For{$u\in \mathcal{C}(v)$}\Comment{Recursive call to all the children of $v$ node}
                \State \Call{range\_update}{$E\cap N(u), S, u$}
                \State $ans \gets ans+\val(u)$
            \EndFor
            \State $\val(v)\gets ans$
            \State \Return
        \EndIf
        \EndFunction
    \end{algorithmic}
\caption{Range update on partition trees}\label{alg:lazyUpdate}
\end{algorithm}

% \begin{algorithm}[]
% \caption{Range query on partition trees}\label{alg:lazyquery}
% \KwIn{A range $E\subseteq \mathcal{X}$, and a partition tree $\mathcal{T}$ on $\mathcal{X}$ as \cref{alg:lazyUpdate}}
% \KwOut{Optimal aggregator $f^*\in\mathcal{F}$}

% \SetKwFunction{Fquery}{query}
% \SetKwFunction{Fqueryp}{query}
% \SetKwProg{Fn}{Function}{:}{}
% \Fn{\Fquery{$E$}}{
%     \KwRet \Fquery{$E,root$}\;
% }
% \Fn{\Fqueryp{$E, v$}}{
%     \If(\tcp*[h]{Check if there are pending updates for the current node}\label{line:querycheck}){$\pend(v)\neq 1$}
%     {
%     $\val(v) \gets \pend(v)\cdot \val(v)$\;
%     \For{$u\in \mathcal{C}(v)$}{
%         $\pend(u)\gets \pend(v)\cdot \pend(u)$\;}
%     $\pend(v) \gets 1$\;
%     }
%     $ans \gets 0$\;
%     \uIf(\tcp*[h]{$E$ contains $N(v)$~\label{line:query1}}){$N(v)\subseteq E$}{
%         $ans\gets \val(v)$\;}
%     \uElseIf(\tcp*[h]{$E$ and $N(v)$ are disjoint\label{line:query2}}){$N(v)\cap E = \emptyset$}{
%         $ans\gets 0$\;}
%     \Else(\tcp*[h]{$E$ crosses the node set $N(v)$\label{line:query3}}){
%         \For(\tcp*[h]{Recursive call to all the children of $v$ node}){$u\in \mathcal{C}(v)$}{
%         $ans \gets ans+\Fqueryp{$E\cap N(u), u$}$\;
%         }
%     }
%     \KwRet $ans$\;
%   }
% \end{algorithm}

\begin{algorithm}\caption{Range query on partition trees}\label{alg:lazyquery}
    \begin{algorithmic}
    \Require A range $E\subseteq \mathcal{X}$, value to update $S\in \R_+$, and a partition tree $\mathcal{T}$ on $\mathcal{X}$ as \cref{alg:lazyUpdate}
        \Function{range\_query}{$E$}
            \State \Call{range\_query}{$E,root$}
        \EndFunction

        \Function{range\_query}{$E, v$}
        \If{$\pend(v)\neq 1$}\Comment{Check if there are pending updates for the current node}\label{line:querycheck}
            \State $\val(v) \gets \pend(v)\cdot \val(v)$
            \For{$u\in \mathcal{C}(v)$}
                \State $\pend(u)\gets \pend(v)\cdot \pend(u)$
            \EndFor
            \State $\pend(v) \gets 1$
        \EndIf
        \State $ans \gets 0$
        \If{$N(v)\subseteq E$}\Comment{$E$ contains $N(v)$}
            \State $ans\gets \val(v)$
        \ElsIf{$N(v)\cap E = \emptyset$}\Comment{$E$ and $N(v)$ are disjoint}
            \State $ans\gets 0$
        \Else       \Comment{$E$ crosses the node set $N(v)$}
            \For{$u\in \mathcal{C}(v)$}
                \State $ans \gets ans+$\Call{range\_query}{$E\cap N(u), u$}
            \EndFor
        \EndIf
        \State \Return $ans$
        \EndFunction
    \end{algorithmic}
\end{algorithm}
\begin{proof}[Proof of \cref{thm:visiting}]
The time complexity guarantee holds because the number of recursion of both update and range queries are exactly the visiting number of the partition tree.  We now show the correctness of the algorithm so that for any range query with $E$, the output value is correct.  

We first need to introduce some notions.  
Given a range update or range query operation with range $E$, let $U(E)\subseteq \mathcal{V}$ be the set of visited nodes.  
We further classify those nodes into three disjoint types according to lines~\ref{line:update1} to \ref{line:update3}: $U_1(E) = \{v\in U(E): N(v)\subseteq E\}$, $U_2(E) = \{v\in U(E): N(v)\cap E = \emptyset \}$ and $U_3(E) = U(E)\setminus (U_1(E)\cup U_2(E))$ be the set of first, second and third case nodes respectively.   
As the recursion stops if and only if the node is in $U_1(E)$ or $U_2(E)$, we further call first two types the boundary of those visited nodes which does not have any visited descendent nodes, and the third type non-boundary. 

Now we state three claims but defer the proofs below. 

\begin{claim}\label{lem:lazy3}
In a range update or range query operation with $E$, each visited node $v\in {U}(E)$ has unit lazy value $\pend(u) = 1$ at the end of the round.  
\end{claim}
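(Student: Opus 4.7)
The plan is to prove \cref{lem:lazy3} by direct inspection of \cref{alg:lazyUpdate,alg:lazyquery}, showing that the lazy value of any visited node is overwritten with $1$ at the start of the visit and is not modified afterwards during the same top-level operation.

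First I would observe that the very first action both algorithms perform upon entering a node $v$ (lines~\ref{line:updatecheck} and~\ref{line:querycheck}) is to test whether $\pend(v)\neq 1$; if so, the pending factor is absorbed into $\val(v)$, pushed into each child's lazy slot, and then $\pend(v)$ is explicitly set to $1$. Hence immediately after this check, $\pend(v)=1$ for every visited $v$. I would then walk through the three branches at $v$ and confirm that none of them writes to $\pend(v)$ itself. The contain branch (line~\ref{line:update1}) rescales $\val(v)$ and multiplies each child's $\pend$ by $S$; the disjoint branch (line~\ref{line:update2}) returns; the cross branch (line~\ref{line:update3}) recurses on the children and recomputes $\val(v)$ from the returned child values. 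The range query routine similarly only reads $\val(v)$ and recurses. In no case does $\pend(v)$ change after the opening reset, so $\pend(v)=1$ holds when the call on $v$ returns.

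The remaining concern is whether some other part of the same top-level operation could modify $\pend(v)$ after the call on $v$ has returned. Direct writes to $\pend(v)$ in either algorithm occur only inside a call at $v$'s parent $u$: both the push-down block and the contain branch write to $\pend$ of $u$'s direct children, and nowhere else in the code is $\pend(v)$ touched. Because $v\in U(E)$, the parent $u$ must have fallen into the cross branch — the contain and disjoint branches return without recursing, so they cannot produce a visited child. Within the cross branch at $u$, after triggering the recursive call on $v$, $u$ never writes to $\pend(v)$ again; the loop only aggregates child $\val$ values. Combined with the opening reset at $v$'s visit, this yields $\pend(v)=1$ at the end of the operation. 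The only step requiring real care is the ``no later writes'' observation, but it is a one-line sweep through the code rather than a substantive argument, so I would present it as an inspection lemma rather than a proof by induction.
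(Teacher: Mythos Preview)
Your argument is correct and follows the same approach as the paper, namely a direct inspection of \cref{alg:lazyUpdate,alg:lazyquery}; the paper itself dispatches this claim in a single sentence (``we can easily check that the lazy value of all visited nodes is reset to $1$''), and your proposal simply unpacks that sentence into the two observations it implicitly relies on: the opening push-down resets $\pend(v)$ to $1$, and the only code that writes to $\pend(v)$ lives in the parent's call, which---because a visited child forces the parent into the cross branch---cannot touch $\pend(v)$ again after recursing.
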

We can easily check that that the lazy value of all visited node is reset to $1$ at the end of each round in \cref{alg:lazyUpdate,alg:lazyquery}.

\begin{claim}\label{lem:lazy2}
In a range update or range query operation with range $E$, $U_1(E)$ forms a partition of $E$
\end{claim}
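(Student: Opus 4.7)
The plan is to verify the two defining conditions of a partition: (i) the node-sets $\{N(v) : v \in U_1(E)\}$ are pairwise disjoint, and (ii) their union is exactly $E$.

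For disjointness, the key observation is that $U_1(E)$ nodes are ``terminal'' in the recursion: whenever the algorithm places a node into $U_1(E)$ (or $U_2(E)$) it immediately returns without recursing on children. Consequently, no node in $U_1(E)$ can be a proper ancestor of another visited node, and in particular no two distinct $v, v' \in U_1(E)$ stand in ancestor--descendant relation. I would then invoke the defining property of the partition tree (\cref{def:partitiontree}): for any two nodes $v, v'$ neither of which is an ancestor of the other, $N(v) \cap N(v') = \emptyset$, which can itself be seen by induction on tree depth using the fact that children's node-sets partition the parent's. This yields pairwise disjointness.

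For coverage, I would first observe $\bigcup_{v \in U_1(E)} N(v) \subseteq E$ trivially from the defining condition $N(v) \subseteq E$. The nontrivial direction is the reverse: every $x \in E$ is covered. Fix such an $x$, and consider the unique root-to-leaf path $v_0 = root, v_1, \dots, v_K$ in $\mathcal{T}$ with $x \in N(v_k)$ for every $k$ (such a path exists since children's node-sets partition parent's, and $x \in \mathcal{X} = N(root)$). I would then prove by induction on $k$ that $v_k$ is either visited (i.e., $v_k \in U(E)$) or has some ancestor in $U_1(E)$. In the inductive step, if $v_k$ is visited then since $x \in N(v_k) \cap E$, node $v_k$ cannot be in $U_2(E)$; hence either $v_k \in U_1(E)$ (done) or $v_k \in U_3(E)$, in which case the algorithm recurses on all children, so $v_{k+1}$ is visited. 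At the leaf $v_K$, we have $N(v_K) = \{x\} \subseteq E$, so $v_K$ cannot be in $U_2(E)$ or $U_3(E)$; hence $v_K \in U_1(E)$. This shows some ancestor of (or equal to) $v_K$ lies in $U_1(E)$ and covers $x$.

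The argument is almost entirely structural; the main subtlety I anticipate is simply being precise about ``$v$ is visited'' as defined by the depth-first recursion, and ensuring the induction correctly handles the case distinction at each recursive call. Once the terminal-node and recursion-propagation observations are in hand, both disjointness and coverage follow from the hereditary partition property of $\mathcal{T}$, and no additional properties of $E$ or the lazy updates are needed.
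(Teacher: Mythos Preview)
Your proof is correct and follows essentially the same approach as the paper. The paper dispatches disjointness in a single sentence by appeal to the partition-tree definition (which you spell out more carefully via the ``no ancestor--descendant pair in $U_1(E)$'' observation), and for coverage the paper picks the \emph{deepest} visited node containing $x$ and argues by contradiction that it cannot lie in $U_3(E)$, whereas you run the equivalent forward induction along the root-to-leaf path through $x$; both arguments are packaging the same observation that recursion continues exactly until the node-set is contained in or disjoint from $E$.
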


\begin{claim}\label{lem:lazy1}
For all round $t$ and $v\in \mathcal{V}$, $\val(v)\prod_{u:N(v)\subseteq N(u)} \pend(u) = W^{(t)}(N(v))$ at the end of the round.  
\end{claim}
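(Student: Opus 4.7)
The plan is to proceed by induction on the round number $t$, showing that the invariant $\Pi(v) := \val(v)\prod_{u:\, N(v) \subseteq N(u)} \pend(u) = W^{(t)}(N(v))$ holds at the end of each round for every node $v \in \mathcal{V}$. The base case $t = 0$ follows from initialization, which sets $\val(v) = W^{(0)}(N(v))$ (built bottom-up from the leaves, where each leaf is a singleton) and $\pend(v) = 1$, so the product collapses to $W^{(0)}(N(v))$.

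The key auxiliary observation, which I would establish first, is that a single push-down at a node $v$ with $\pend(v) = p$ — i.e., $\val(v) \gets p\,\val(v)$, $\pend(c) \gets p\,\pend(c)$ for every child $c \in \mathcal{C}(v)$, and $\pend(v) \gets 1$ — preserves $\Pi(w)$ for every $w$ in the subtree rooted at $v$: when $w = v$ the factor $p$ migrates from $\pend(v)$ into $\val(v)$, and when $w$ is a strict descendant, $p$ migrates from $\pend(v)$ into the pending of the child of $v$ on the path to $w$, while all other factors are untouched. Consequently a range query preserves the full invariant, since it performs only push-downs and the underlying weights satisfy $W^{(t+1)} = W^{(t)}$.

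For a range update with parameters $(E, S)$ at round $t+1$, I will partition nodes according to the classification $U(E) = U_1(E) \cup U_2(E) \cup U_3(E)$ used in the proof of \cref{lem:lazy3}, treating each unvisited node as a descendant of its unique closest visited boundary node (this uniqueness is guaranteed because the algorithm recurses into every child of a visited non-boundary node). For $v \in U_1(E)$, the step $\val(v) \gets S\,\val(v)$ combined with the injection $\pend(c) \gets S\,\pend(c)$ installs exactly one extra factor of $S$ on the ancestor path of every node $w$ with $N(w) \subseteq N(v) \subseteq E$ (either in $\val(v)$ itself when $w = v$, or in the pending of the child of $v$ en route to $w$). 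This matches the true update $W^{(t+1)}(N(w)) = S \cdot W^{(t)}(N(w))$. The $U_2$ case leaves both true weights and $\Pi$ untouched, modulo push-downs.

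The main obstacle is the $U_3$ case, where after recursion the algorithm sets $\val(v) \gets \sum_{c \in \mathcal{C}(v)} \val(c)$ and I must verify $\Pi(v) = W^{(t+1)}(N(v))$. I plan to handle this by a secondary induction on the height of $v$, combined with \cref{lem:lazy3} to conclude $\pend(v) = \pend(c) = 1$ after the recursive calls, and the defining property of a partition tree (\cref{def:partitiontree}) that the node-sets $\{N(c)\}_{c \in \mathcal{C}(v)}$ partition $N(v)$. The secondary hypothesis supplies $\val(c)\prod_{u:\, N(v) \subsetneq N(u)} \pend(u) = W^{(t+1)}(N(c))$ for every child $c$, and summing over $c$ yields $\Pi(v) = \sum_{c} W^{(t+1)}(N(c)) = W^{(t+1)}(N(v))$. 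The delicate point that must be argued explicitly is that no ancestor of $v$ ever has its $\pend$ modified during the recursive subcalls, so the ancestor factors in $\Pi$ are identical before and after the operation; this follows because the algorithm propagates pending values strictly downwards.
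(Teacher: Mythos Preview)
Your proposal is correct and follows essentially the same approach as the paper: induction on rounds, with the key observation that push-downs preserve $\Pi(w)$ for all $w$, followed by a case analysis over $U_1(E), U_2(E), U_3(E)$ using a secondary structural induction (the paper uses DFS post-order where you use height, which is equivalent). One minor point where your plan is actually more complete than the paper's own proof: you explicitly track what happens to $\Pi(w)$ for \emph{unvisited} descendants $w$ of a boundary node $v\in U_1(E)$, arguing that the injection $\pend(c)\gets S\cdot\pend(c)$ installs the needed factor of $S$ on the ancestor path of every such $w$; the paper establishes the invariant only at visited nodes and leaves this case implicit.
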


With the above three claims, the answer of \cref{alg:lazyquery} with $E$ is 
\begin{align*}
    \sum_{v\in U_1(E)}\val(v) =& \sum_{v\in U_1(E)}\frac{W^{(t)}(N(v))}{\prod_{u:N(v)\subseteq N(u)} \pend(u)}\tag{by \cref{lem:lazy1}}\\
    =& \sum_{v\in U_1(E)}{W^{(t)}(N(v))}\tag{by \cref{lem:lazy3}}\\
    =& W^{(t)}(E)\tag{by \cref{lem:lazy2}}
\end{align*}
which proves the correctness. 
\end{proof}

\begin{proof}[Proof of \cref{lem:lazy2}]
    By the definition of partition trees and $U_1(E)$, every node sets $N(v)\subset E$ and are mutually disjoint.  Hence, it is sufficient to show that for every point $x\in E$, there exists $v\in U_1(E)$ so that $x\in N(v)$.  Because $x\in N(root)\subseteq {U}(E)$, there exists a visited node $v$ that contains $x$ and has the deepest level.  However, if $v\in U_3(E)$ is a non-boundary node, in line~\ref{line:update3}, one of $v$'s children is visited and contains point $x$ which is a contradiction.  Therefore, $v\in U_1(E)$ which completes the proof.
\end{proof}

\begin{proof}[Proof of \cref{lem:lazy1}]
Given $t$, suppose the statement is correct for all round before round $t$.  Let $pend$ and $val$ be the data at the begin of round $t$ and $pend'$ and $\val'$ be the data at the end of round $t$.

If round $t$ has a range query operation with any $E$, since the correct weights is not changed, we only need to show that 
\begin{equation}\label{eq:lazy11}
    \val'(v) \prod_{u:N(v)\subseteq N(u)} \pend'(u) = \val(v) \prod_{u:N(v)\subseteq N(u)} \pend(u)
\end{equation}is also unchanged for any $v\in \mathcal{V}$.  We will use an induction on visited nodes in the DFS pre-order in \cref{alg:lazyquery} to prove \cref{eq:lazy11}.  For the base case, the root node, when $\pend(root)\neq 1$, the if statement in line~\ref{line:querycheck} 1) updates its value $\val'(root) = \val(root)\pend(root)$ and 2) propagates the lazy value to each child $u\in \mathcal{C}(root)$ so that $\pend(root)\pend(u)$ is unchanged.  The first ensures that \cref{eq:lazy11} holds for the root node and the second ensures that \cref{eq:lazy11} holds for any non root node.  The proof for the induction step follows similarly. 

On the other hand, suppose that round $t$ has a range update with range $E$ and $S\in \mathbb{R}_+$.  By the above argument, the if statement in line~\ref{line:updatecheck} does not change the value of $\val(v) \prod_{u:N(v)\subseteq N(u)} \pend(u)$ for any $v$, so we only need to consider the effects in the three cases from line~\ref{line:update1} to \ref{line:update3}.  We will use the DFS post-order in \cref{alg:lazyUpdate} where the base case consist of the boundary of visited nodes (the first two cases in line~\ref{line:update1} and \ref{line:update2}).  If a node $v\in U_1(E)$ satisfies line~\ref{line:update1}, we have
$$\val'(v) = S\cdot \val(v) = S\cdot  \val(v)\prod_{u:N(v)\subseteq N(u)} \pend(u) = S\cdot W^{(t-1)}(N(v)) = W^{(t)}(N(v))$$
where the second equality holds because of \cref{lem:lazy3}, and the third holds because the equality holds for round $t-1$.  The second case $v\in U_2(E)$ is trivial.   For the third case in line~\ref{line:update3}, $v$ will not be a leaf node and $N(u)$ for all $u\in \mathcal{C}(v)$ forms a partition of $N(v)$.  Thus, by induction hypothesis for all $u\in \mathcal{C}(v)$ we have $\val'(u) = W^{(t)}(N(u))$, and $\val'(v) = \sum_{u\in \mathcal{C}(v)} = W^{(t)}(N(v))$.  Finally, by \cref{lem:lazy3}, $\val'(v)\prod_{u:N(v)\subseteq N(u)} \pend(u) = \val'(v) = W^{(t)}(N(v))$ which completes the proof. 
\end{proof}

\begin{proof}[Proof of \cref{prop:one}]
     We only need to show the time complexity, as the correctness follows directly from \cref{thm:visiting}.  Because $E_{(i,j)}$ crosses $E_{(i',j')}$ only if $i\in E_{(i',j')}$ or $j\in E_{(i',j')}$, for all $i\le j$ and $i'\le j'$ in $\{0,\dots, n-1\}$, an interval $E_{(i,j)}$ can only cross at most two node sets in each level.  Therefore, the visiting number is at most twice of the number of level and, thus, in $O(\log n)$.
\end{proof}
\subsection{Proofs in Sections~\ref{sec:hardness}}
The partial-sums problem is to maintain an length $n$ array $W$ subject to the following operations:
\begin{enumerate}
    \item $update(k, \Delta)$: modify $W(k) \gets \Delta$.
    \item $sum(k)$: returns the partial sum $\sum_{i\le k} W(i)$.
\end{enumerate}

\begin{theorem}[Theorem 4.1 in \citet{patrascu2004tight}]\label{thm:partialsumhard}
Any algorithm for the online partial sums problem in the group arithmetic model has a
running time per operation of $\Omega(\log n)$ in the worst case. 
\end{theorem}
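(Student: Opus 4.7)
The plan is to adapt the \emph{chronogram} technique of Fredman--Saks to the group arithmetic model. The key feature of the group arithmetic model to exploit is: at each query, the algorithm outputs a group element that is a $\mathbb{Z}$-linear combination of the contents of the cells it reads, so information about past random updates can only reach the output through cells the query actually probes. This lets us turn a counting/entropy argument against the algorithm.

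I would first fix a hard random instance. Let $G$ be a large finite abelian group, say $G=\mathbb{Z}_p$ for a large prime $p$. Consider a sequence of $2N$ operations that strictly alternates updates and queries: at time $2t-1$ perform $\mathit{update}(\pi(t),\Delta_t)$ with $\pi$ a uniformly random permutation of $[N]$ and $\Delta_t$ i.i.d.\ uniform in $G$; at time $2t$ perform $\mathit{sum}(k_t)$ with $k_t$ uniform in $[N]$. Introduce geometrically shrinking \emph{epochs}: for a constant $r>1$, let $E_i$ be the block of updates whose times lie in $(\,N-r^{i+1},\,N-r^{i}\,]$, giving $L:=\lfloor\log_r N\rfloor=\Theta(\log n)$ epochs that tile the update stream.

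The heart of the argument is a per-epoch information-transfer lemma. Fix an epoch $E_i$ of length $\ell_i\asymp r^{i+1}-r^i$. The $\ell_i$ random updates inside $E_i$ carry $H_i := \ell_i\log|G|$ bits of entropy. I would consider the block $Q_i$ of the $\ell_i$ queries immediately following $E_i$ and argue an encoding bound: conditioned on the algorithm's memory state \emph{just before} $E_i$ and on the transcript of all operations outside $E_i\cup Q_i$, one can recover the entire update vector of $E_i$ from (a) the addresses and pre-read values of the cells that queries in $Q_i$ touch which were last written during $E_i$, plus (b) the query answers. The recovery is possible because, in the group arithmetic model, each query answer equals a fixed $\mathbb{Z}$-linear combination (determined by the algorithm's branching, which itself is a function of the transcript and read values) of the cells it reads; stripping off contributions from cells last written \emph{outside} $E_i$ leaves a known $\mathbb{Z}$-linear form in the $E_i$-touched cells. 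Over the $\ell_i$ queries of $Q_i$, these linear forms collectively yield a system that, for appropriate choice of the random query indices (one can restrict attention to those $\mathit{sum}(k)$ with $k$ in an $E_i$-dependent critical range to force the system to have full rank with constant probability), determines the $\ell_i$ updates of $E_i$. A Shannon/pigeonhole argument then forces the description length of the epoch-$E_i$ reads to be $\Omega(H_i)$ bits, hence the total number of such reads across $Q_i$ is $\Omega(\ell_i)$.

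Finally, sum the per-epoch bounds. Each query at time $2t$ contributes to at most one epoch bound per index $i$, so the number of cell probes made by queries in $Q_i$ that touch $E_i$-written cells is $\Omega(\ell_i)$; dividing by $|Q_i|=\ell_i$ gives an amortized $\Omega(1)$ probes per query from epoch $i$, and these contributions across the $\Theta(\log n)$ epochs are disjoint (a cell's last-write epoch is unique), so summing gives $\Omega(\log n)$ amortized probes per query, hence an $\Omega(\log n)$ worst-case bound by an averaging argument. The main obstacle is the rank/decodability step in the middle paragraph: one must set up the joint distribution over $(\pi,\Delta,k)$ so that, with constant probability, the $\ell_i$ query linear forms restricted to $E_i$-cells are full rank modulo $p$, ensuring the $\ell_i$ random group elements of $E_i$ really are encoded by those reads. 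Handling this cleanly (the standard route is to consider the ``critical'' queries $\mathit{sum}(k)$ whose prefix straddles exactly the indices updated in $E_i$ and exploit the randomness of $\pi$) is the technical crux; the rest is packaging.
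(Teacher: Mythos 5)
The paper does not prove this statement at all: it is imported verbatim as Theorem~4.1 of \citet{patrascu2004tight}, whose proof is the information-transfer technique over a balanced binary tree on the time axis (interleaving updates and queries at \emph{every} scale, with each probe attributed to the lowest common ancestor of its write time and read time). Your sketch instead uses a single Fredman--Saks-style chronogram tiling, and as written it has two genuine gaps. First, the per-epoch lemma is not true in your layout: between the end of $E_i$ and the queries of $Q_i$ (and interleaved with them) the algorithm performs the updates of all later epochs, which number $\Theta(\ell_i)$; during those operations it may freely read the $E_i$-written cells and write $\Theta(\ell_i)$ fresh cells, each holding a full group element, which is ample room to encode the entire $\ell_i\log|G|$ bits of entropy of $E_i$. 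Hence the queries in $Q_i$ need never touch a cell last written in $E_i$, and the claimed bound of $\Omega(\ell_i)$ such probes fails. The classical chronogram avoids this only by making the post-$E_i$ region geometrically \emph{smaller} than $E_i$ and charging a single query per epoch, which is exactly what caps that route at $\Omega(\log n/\log\log n)$; Pătrașcu--Demaine instead count transfers made by \emph{all} later operations across each tree node and recover the full $\Omega(\log n)$.

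Second, even granting the lemma, the final accounting does not yield $\Omega(\log n)$. Your epochs tile the update stream, so summing the disjoint per-epoch probe counts gives $\sum_i \Omega(\ell_i)=\Omega(N)$ probes in total over $N$ queries, i.e.\ an amortized $\Omega(1)$, not $\Omega(\log n)$; the step ``$\Omega(1)$ amortized per query from each of $\Theta(\log n)$ epochs, hence $\Omega(\log n)$ amortized'' is a non sequitur because the blocks $Q_i$ are nested suffixes of very different sizes and the disjointness of probes caps the total at $\Omega(N)$. To obtain $\Omega(\log n)$ you need a lower bound of $\Omega(N)$ \emph{at each of $\Theta(\log n)$ scales} with probes attributed to a unique scale, which is precisely the hierarchical (tree) decomposition of the time axis in the cited proof; a single geometric tiling anchored at the end of the stream cannot deliver it. The rank/decodability issue you flag is real but secondary to these two structural problems.
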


\begin{proof}[Proof of \cref{prop:hard1d}]
We will reduce the above partial-sum problem to the RQRU problem with interval set system.  

Let $[1:k] = \{1,\dots, k\}$ for all $k = 1,\dots, n$ and $[1:0] = \emptyset$.
\begin{itemize}
    \item For each sum query with $k$, we return $\query([1:k])$ from the RQRU algorithm.  
    \item For each update query with $k$ and $\Delta$, we compute $\delta = \frac{\Delta}{\query([1:k])-\query([1:k-1])}$ and call $\update([1:k], \delta)$ and $\update([1:k-1], 1/\delta)$ by calling the range query and update function twice from the RQRU algorithm.
\end{itemize} 
If an RQRU algorithm has $T_U(n)$ update time and $T_Q(n)$ query time, the above reduction can solve partial-sum problem in $O(T_U(n)+T_Q(n))$ times.  Therefore, $\max \{T_U(n), T_Q(n)\} = \Omega(\log n)$ by \cref{thm:partialsumhard}.
\end{proof}

\begin{proof}[Proof of \cref{prop:hard2d}]
We will reduce matrix product of $A$ and $B$ using a RQRU algorithm for two-dimensional regular orthogonal set system with $\mathcal{X} = [m]^2 = \{(i,j): i, j\in [m]\}$.  To simplify the notation, for all $i, j = 1, \dots, m$, we let $(i,j) = \{(i,j)\}$, $(:,j) := \{(k, j):k = 1, \dots, m\}$, and $(i,:) := \{(i,k):k = 1, \dots, m\}$ which are all valid two-dimensional interval ranges. 

Given $A$ and $B$ are $m$-by-$m$ matrices, we can compute $C = AB$ as the following: run $m^2$ range updates $\update((i,j), A_{i,j})$ for all $i, j = 1,\dots m$.  Let $C$ be an $m$ by $m$ matrix.  Given $j = 1,\dots, m$, we first run $\update((:, j), B_{i,j})$ for all $i$, $C_{i,j}\gets \query((i, :))$ for all 
$i$, and $\update((:,j), 1/B_{i,j})$ for all $i$.  The resulting matrix $C$ will equal $AB$.  

The initialization take $m^2$ range updates, computing each column of $C$ takes $2m$ range update and $m$ range query.  Therefore, the time complexity of matrix product can be solved in $(m^2+2m^2) T_U(m^2)+m^2 T_Q(m^2)$ which completes the proof.
\end{proof}

\begin{proof}[Proof of \cref{cor:vc}]
By \cref{prop:limit_vc}, it is sufficient to show the VC dimensions of those set systems are infinite. 
\begin{claim}\label{lem:pairing}
    Given an even number $K$ and $n = K!$, the VC dimension of the pairing set system on $K$ candidates $\{0,1,\dots, K-1\}$ is at least $\lfloor \log_2(K/2)\rfloor$ which is increasing in $K$ and $n$. \footnote{By Stirling's formula $\lfloor \log_2(K/2)\rfloor = \Omega\left(W(\frac{1}{e}\log(\frac{n}{\sqrt{2\pi}}))\right)$ where $W(\cdot)$ is the Lambert $W$ function.}
\end{claim}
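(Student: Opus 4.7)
The plan is to exhibit an explicit set of $m:=\lfloor \log_2(K/2)\rfloor$ permutations that is shattered by the pairing set system, which by Definition~\ref{def:vc} gives the desired lower bound on the VC dimension. The key arithmetic observation is that $2^{m+1}\le K$, so we have enough candidates to associate one \emph{disjoint} pair of candidates to every subset of $[m]$.

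Concretely, I would enumerate the $2^m$ subsets $S\subseteq[m]$ and, using $2^{m+1}\le K$, pick pairwise disjoint ordered pairs $(i_S,j_S)$ of candidates, one per $S$. Then I would define $m$ permutations $\pi_1,\dots,\pi_m$ by prescribing only the relative order within each distinguished pair: for every $k\in[m]$ and every $S\subseteq[m]$, place $i_S$ above $j_S$ in $\pi_k$ if $k\in S$, and otherwise place $j_S$ above $i_S$. Since the pairs $\{(i_S,j_S)\}_S$ are disjoint, these prescriptions impose no conflicting constraints, and any linear extension completes each $\pi_k$ to a full permutation of $\{0,\dots,K-1\}$. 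The verification of shattering is then immediate: for any $S\subseteq[m]$, the pairing set $\tau_{(i_S,j_S)}$ contains exactly those $\pi_k$ with $k\in S$, so
\[
\Pi_{\mathcal F}(\{\pi_1,\dots,\pi_m\})=2^{\{\pi_1,\dots,\pi_m\}},
\]
which is the definition of shattering.

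Finally, monotonicity of $\lfloor\log_2(K/2)\rfloor$ in $K$ (and hence in $n=K!$) is immediate from monotonicity of $\log_2$. I do not foresee a real obstacle here; the only subtle point is the counting argument that ensures disjoint pairs are available, namely $2\cdot 2^{m}\le K$ when $m=\lfloor\log_2(K/2)\rfloor$, which is why the bound takes this particular form rather than the naive $\log_2 K$.
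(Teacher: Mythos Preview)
Your proposal is correct and is essentially the same construction as the paper's: the paper also picks $D=\lfloor\log_2(K/2)\rfloor$ disjoint candidate pairs (concretely $(2j,2j+1)$ for $j<2^D$) and defines each of the $D$ permutations by specifying, independently for every pair, which member ranks higher---encoded there via a $D\times 2^D$ binary matrix whose columns are all length-$D$ bitstrings, exactly your subset indexing in matrix form. The only difference is cosmetic: you leave the disjoint pairs abstract and index by subsets $S\subseteq[m]$, while the paper fixes the pairs to be consecutive even--odd candidates and indexes by columns of the matrix.
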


% Therefore, there is no sublinear time algorithm for LMSR on pairing securities. 
\begin{proof}[Proof of \cref{lem:pairing}]
We will construct $D := \lfloor \log_2(K/2)\rfloor$ permutations $\mathcal{X}' = \{x_1, x_2,\dots, x_D\}\subset \mathcal{X}$ that is shattered by the pairing set system.  First consider a matrix $A\in \{0,1\}^{D, 2^D}$ where the columns consists of all binary strings of length $D$.  For the $l$-th permutation $x_l$, we start with the identity permutation and swap the ordering of $2j$ and $2j+1$ if $A_{l,j} = 1$ for all $j$.  Because, for all $j\le 2^D$, the set of permutations $\mathcal{X}'\cap \tau_{2j+1, 2j} = \{x_l: A_{l,j} = 1\}$ corresponds to the $j$-th column $j$ of $A$, $\mathcal{X}'$ is shattered by the pairing set system. 
 % Formally, $x_l = \prod_{j\le 2^D: A_{l,j} = 1}\sigma(2j, 2j+1)$.  
\end{proof}

\begin{claim}\label{lem:junta}
    Given a positive integer $K$, the VC dimension of $1$-junta set system on $\{0,1\}^K$ is at least $\lfloor \log_2(K)\rfloor$ which is increasing in $K$.
\end{claim}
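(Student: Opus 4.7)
The plan mirrors the construction in the proof of \cref{lem:pairing}: I will exhibit an explicit set $\mathcal{X}' \subseteq \{0,1\}^K$ of size $D := \lfloor \log_2 K\rfloor$ that is shattered by the $1$-junta set system $\{E_1, \dots, E_K\}$, where $E_i = \{x \in \{0,1\}^K : x_i = 1\}$.

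First I would view any candidate family of $D$ points $\{x_1,\dots,x_D\}$ as a $D \times K$ binary matrix $A \in \{0,1\}^{D \times K}$ whose $l$-th row is $x_l$. The key observation is that for every coordinate $i \in [K]$, the intersection $\mathcal{X}' \cap E_i = \{x_l : (x_l)_i = 1\}$ corresponds exactly to the $1$-support of the $i$-th column of $A$. Hence $\mathcal{X}'$ is shattered by $\{E_1,\dots,E_K\}$ if and only if every subset of $[D]$ occurs as the $1$-support of at least one column of $A$, i.e., every binary string in $\{0,1\}^D$ appears as a column of $A$.

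Next, since $2^D \le K$ by the choice $D = \lfloor \log_2 K\rfloor$, I can simply take $A$ to be any $D \times K$ matrix whose first $2^D$ columns enumerate all elements of $\{0,1\}^D$ (the remaining columns may be filled arbitrarily). Defining $x_l$ to be the $l$-th row of $A$ and $\mathcal{X}' := \{x_1,\dots,x_D\}$, the argument in the previous paragraph shows that every subset of $\mathcal{X}'$ is realized as $\mathcal{X}' \cap E_i$ for some $i$, so $\mathcal{X}'$ is shattered. This gives VC-dimension at least $D = \lfloor \log_2 K\rfloor$, as required. There is no real obstacle here: the construction is a direct binary analogue of the pairing construction, with each bit of $A$'s row indicating whether the corresponding coordinate of $\{0,1\}^K$ is set.
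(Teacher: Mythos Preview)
Your proposal is correct and follows essentially the same construction as the paper: both build a $D\times K$ binary matrix whose columns include all of $\{0,1\}^D$ (the paper phrases it as a $D\times 2^D$ matrix with rows padded by zeros to length $K$), take the rows as the points $x_1,\dots,x_D$, and observe that $\mathcal{X}'\cap E_i$ is determined by the $i$-th column, so every subset of $\mathcal{X}'$ is realized.
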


% Therefore, there is no sublinear time algorithm for LMSR on pairing securities. 
\begin{proof}[Proof of \cref{lem:junta}]
We will construct $D := \lfloor \log_2(K)\rfloor$ boolean strings $\mathcal{X}' = \{x_1, x_2,\dots, x_D\}\subset \{0,1\}^K$ that is shattered by the $1$-junta set system.  First consider a matrix $A\in \{0,1\}^{D, 2^D}$ where the columns consists of all binary strings of length $D$.  We define $x_i$ as the $i$-th row of $A$ padded up with $0$s to length $K\ge 2^D$.  Because each subset of $\mathcal{X}'$ corresponds to $j$-th  column of $A$ and can be derived by an $1$-junta function $\psi_j$, $\mathcal{X}'$ is shattered by the $1$-junta set system.
\end{proof}
    
\end{proof}
\section{Proofs in Section~\ref{sec:beyondlmsr}}
\subsection{Proofs in Section~\ref{sec:qmsr}}
\begin{proof}[Proof of \cref{lem:equivq}]
    We define a reduction from a QMSR to $(+,+)$-RQRU in \cref{def:rqruq} with $l = 2$.  Given an initial state $\vw^{(0)}$ on $(\mathcal{X}, \mathcal{F})$, we run the algorithm on initial weight $Z^{(0)}$ where $Z^{(0)}(x) = \begin{bmatrix}1\\
        w^{(0)}_x
    \end{bmatrix}$ for all $x\in \mathcal{X}$ and compute $M = \sum_{x\in \mathcal{X}} w^{(0)}_x$.  
    \begin{itemize}
        \item For each price operation with $E\in \mathcal{F}$, we run $\query_Q(E) = \begin{bmatrix}
            \Sigma_0\\ \Sigma_1
        \end{bmatrix}$ by calling the range query function one from the RQRU algorithm and return $$\frac{1}{n}\Sigma_0+\frac{1}{2b}\Sigma_1-\frac{1}{2bn}\Sigma_0M.$$  
        \item For each buy operation with $E\in \mathcal{F}$ and share $s\in \mathbb{R}$, we run the range update $\update_Q(E, \begin{bmatrix}
            0\\ s
        \end{bmatrix})$, from the RQRU algorithm, and update $M\gets M+|E|s$
        \item Finally, to compute a cost operation with set $E$ and share $s$, we run $\query_Q(E) = \begin{bmatrix}
            \Sigma_0\\ \Sigma_1
        \end{bmatrix}$ and return $$\left(\frac{s}{n}+\frac{s^2}{4b}\right)\Sigma_0-\frac{s}{4bn}\Sigma_0^2+\frac{s}{2b}\Sigma_1-\frac{s}{2bn}\Sigma_0M.$$ 
    \end{itemize}
    Because the buy operation also needs to update $M$ that takes additional range query to compute $|E|$, the time complexity for buy operation is $O(T_Q(n)+T_U(n))$.  The complexity for price and cost are straightforward.  
    
    To prove the correctness, we first use induction on the sequence of operations to show the following invariant: for all round $t$,
    \begin{equation}\label{eq:invq}
        Z^{(t)}(x) = \begin{bmatrix}
            1\\ w^{(t)}_x
        \end{bmatrix}\text{ for all }x\in \mathcal{X}\text{ and } M^{(t)} = \sum_{x\in \mathcal{X}} w^{(t)}_x
    \end{equation}
    The based case holds by initialization.  If we encounter a buy operation with $E$ and $s$ at round $t+1$, the share of $x\in E$ is updated from $w^{(t)}_x$ to $w^{(t+1)} = w^{(t)}_x+s$, and the above reduction also updates $Z^{(t)}(x)$ to $Z^{(t+1)}(x) = Z^{(t)}(x) +\begin{bmatrix}
        0\\s
    \end{bmatrix} = \begin{bmatrix}
        1\\w^{(t+1)}_x
    \end{bmatrix}$.  The equality also holds for all $x\notin E$.  Finally, 
    $$M^{(t+1)} = M^{(t)}+|E|s = \sum_{x\in E} (w^{(t)}(x)+s)+\sum_{x\in \mathcal{X}\setminus E} w^{(t)}(x) = \sum_{x\in \mathcal{X}} w^{(t+1)}_x.$$  Thus, we prove \cref{eq:invq}.  
    
    We then show the reduction answers price and cost queries correctly.  Given a price operation with $E$ at round $t$, the reduction returns 
    \begin{align*}
        \frac{1}{n}\Sigma_0+\frac{1}{2b}\Sigma_1-\frac{1}{2bn}\Sigma_0M
        = \frac{1}{n} |E|+\frac{1}{2b}\sum_{x\in E}w^{(t)}_x-\frac{1}{2bn}|E|\sum_{x\in \mathcal{X}}w^{(t)}_x\tag{by \cref{eq:invq}}
    \end{align*}
    which equals $\price_Q(E;\vw^{(t)})$ in \cref{def:qmsr}.  For the cost operation with $E$ and $s$, note that 
    \begin{align*}
        \cost_Q(E, s; \vw^{(t)}) :=& C_Q(\vw^{(t)}+s\mathbf{1}_E)-C_Q(\vw^{(t)})\\
        =& \frac{s}{n}|E|+\frac{1}{4b}\sum_{x\in E}s(2w^{(t)}_x+s)-\frac{1}{4bn}s|E|(s|E|+2\sum_{x\in \mathcal{X}}w^{(t)}_x).\\
        =&\left(\frac{s}{n}+\frac{s^2}{4b}\right)|E|-\frac{s}{4bn}|E|^2+\frac{s}{2b}\sum w^{(t)}_x-\frac{s}{2bn}|E|\sum w^{(t)}_x\\
        =&\left(\frac{s}{n}+\frac{s^2}{4b}\right)\Sigma_0-\frac{s}{4bn}\Sigma_0^2+\frac{s}{2b}\Sigma_1-\frac{s}{2bn}\Sigma_0M\tag{by \cref{eq:invq}}
    \end{align*}
    which is identical to the output of the reduction.
\end{proof}

\subsection{Proofs in Section~\ref{sec:pmsr}}
\begin{proof}[Proof of \cref{prop:pmsr_closedform}]
    By KKT conditions, the optimal $p\in \Delta_\mathcal{X}$ satisfies $\sqrt{p(x)} = \frac{2}{3b}(w(x)-\lambda)\ge 0$ for some $\lambda$ so that $\sum_{x\in \mathcal{X}} p(x) = 1$.  Let $M_1 = \sum_x w(x)$ and $M_2 = \sum_x w(x)^2$, $M_3 = \sum_x w(x)^3$, and $\mu = \sqrt{M_1^2-n(M_2-\frac{9b^2}{4})}$. By direct calculation, $\lambda = \frac{1}{n}(M_1-\mu)$.  Therefore, 
    \begin{align*}
        C_{\frac32}(w) =& \max_{p\in \Delta_\mathcal{X}} \sum_{x\in \mathcal{X}} w(x)p(x)-b\sum_{x\in \mathcal{X}}p(x)^{\frac32}\\
        =& \sum_x \frac{4}{9b^2}(w(x)-\lambda)^2w(x)-\sum_x \frac{8}{27b^2}(w(x)-\lambda)^3\\
        =&\frac{4}{27b^2} \sum_x 3(w(x)-\lambda)^2w(x)-2(w(x)-\lambda)^3\\
        =&\frac{4}{27b^2} \sum_x w(x)^3-3\lambda^2w(x)+2\lambda^3\\
        =&\frac{4}{27b^2} \left(M_3-\frac{1}{n^2}(M_1^3-3M_1\mu^2+2\mu^3)\right)
    \end{align*}
    For the price function,
    \begin{align*}
        &\frac{\partial}{\partial w(x)}C_{\frac32}(w)\\
        =&\frac{4}{27b^2}\left(3w(x)^2-\frac{1}{n^2}(3M_1^2-3\mu^2-6M_1\mu\frac{\partial\mu}{\partial w(x)}+6\mu^2\frac{\partial\mu}{\partial w(x)})\right)\\
        =&\frac{4}{9b^2}w(x)^2-\frac{4}{27b^2n^2}\left(3M_1^2-3\mu^2-6M_1\mu\frac{M_1-nw(x)}{\mu}+6\mu^2\frac{M_1-nw(x)}{\mu}\right)\tag{$\frac{\partial\mu}{\partial w(x)} = \frac{M_1-nw(x)}{\mu}$}\\
        =&\frac{4}{9b^2}w(x)^2-\frac{4}{9b^2n^2}\left(M_1^2-\mu^2-2M_1^2+2M_1nw(x)+2M_1\mu -2\mu nw(x)\right)\\
        =&\frac{4}{9b^2}w(x)^2-\frac{8}{9b^2n}\left(M_1 -\mu \right)w(x)-\frac{4}{9b^2n^2}\left(M_1^2-2M_1^2+2M_1\mu-M_1^2+nM_2-\frac{9nb^2}{4}\right)\\
        =&\frac{1}{n}+\frac{4}{9b^2}\left(w(x)^2-\frac{2}{n}\left(M_1 -\mu \right)w(x)-\frac{1}{n^2}\left(2M_1\mu-2M_1^2+nM_2\right)\right)
    \end{align*}
    which completes the proof.
\end{proof}

\begin{proof}[Proof of \cref{lem:equivp}]
    We define a reduction from a $\frac32$-power MSR market to the RQRU problem in \cref{def:rqrup}.  Given an initial state $\vw^{(0)}$ on $(\mathcal{X}, \mathcal{F})$, we run the algorithm on initial weight $Z^{(0)}$ where $Z^{(0)}(x) = \begin{bmatrix}1\\
        w^{(0)}_x\\\left(w^{(0)}_x\right)^2\\\left(w^{(0)}_x\right)^3
    \end{bmatrix}$ for all $x\in \mathcal{X}$, and compute $M = \begin{bmatrix}
        M_0\\M_1\\M_2\\M_3
    \end{bmatrix}$ so that $M_0 = \sum_{x\in \mathcal{X}}1$, $M_1 = 
        \sum_{x\in \mathcal{X}} w^{(0)}_x$, $M_2 = \sum_{x\in \mathcal{X}} \left(w^{(0)}_x\right)^2$, $M_3 = \sum_{x\in \mathcal{X}} \left(w^{(0)}_x\right)^3$, and $\mu = \sqrt{M_1^2-n(M_2-\frac{9}{4})}$.  
    \begin{itemize}
        \item For each price operation with $E\in \mathcal{F}$, we run $\query_Q(E) = \begin{bmatrix}
            \Sigma_0\\ \Sigma_1\\ \Sigma_2\\ \Sigma_3
        \end{bmatrix}$ by calling the range query function one from the RQRU algorithm and return $$\frac{1}{n}\Sigma_0+\frac{4}{9}\left(\Sigma_2+\frac{2}{n}(\mu-M_1) \Sigma_1-\frac{1}{n}\Sigma_2+\frac{2}{n^2}M_1^2\Sigma_0-\frac{2}{n^2}M_1\mu\Sigma_0\right).$$  
        \item For each buy operation with $E\in \mathcal{F}$ and share $s\in \mathbb{R}$, we run the range update $\update_{\frac32}(E, s)$, from the RQRU algorithm, and update $M\gets \alpha_s(M)$ where $\alpha_s$ is defined in \cref{def:rqrup}.
        \item Finally, to compute a cost operation with set $E$ and share $s$, we run the following three steps: First, run $\update(E,s)$ and compute $c' = \frac{4}{27}\left( M_3-\frac{1}{n^2}(M_1^3-3M_1^2\mu+2\mu^3  )\right)$.  Second, run $\update(E, -s)$ and compute $c = \frac{4}{27}\left( M_3-\frac{1}{n^2}(M_1^3-3M_1^2\mu+2\mu^3  )\right)$. Third, return $c'-c$. 
    \end{itemize}
    The time complexity is straightforward.  
    
    To prove the correctness, we first use induction on the sequence of operations to show the following invariant: for all round $t$,
    \begin{equation}\label{eq:invp}
        Z^{(t)}(x) = \begin{bmatrix}
            Z^{(t)}_{0}\\Z^{(t)}_{1}\\Z^{(t)}_{2}\\Z^{(t)}_{3}
        \end{bmatrix} = \begin{bmatrix}
            1\\ w^{(t)}_x\\\left(w^{(t)}_x\right)^2\\ \left(w^{(t)}_x\right)^3
        \end{bmatrix}\text{ for all }x\in \mathcal{X}\text{ and } M^{(t)} = \begin{bmatrix}
            M^{(t)}_{0}\\M^{(t)}_{1}\\M^{(t)}_{2}\\M^{(t)}_{3}
        \end{bmatrix} = \begin{bmatrix}
            \sum_{x\in \mathcal{X}}1\\ \sum_{x\in \mathcal{X}}w^{(t)}_x\\\sum_{x\in \mathcal{X}}\left(w^{(t)}_x\right)^2\\\sum_{x\in \mathcal{X}}\left(w^{(t)}_x\right)^3
        \end{bmatrix}
    \end{equation}
    The based case holds by initialization.  If we encounter a buy operation with $E$ and $s$ at round $t+1$, the share of $x\in E$ is updated from $w^{(t)}_x$ to $w^{(t+1)} = w^{(t)}_x+s$, and the above reduction also updates $Z^{(t)}(x)$ to 
    
    \begin{align*}
        Z^{(t+1)}(x) =& \alpha_s(Z^{(t)}(x)) = \begin{bmatrix} Z^{(t)}_{0}\\ Z^{(t)}_{1}+s\\Z^{(t)}_{2}+2sZ^{(t)}_{1}+s^2\\Z^{(t)}_{3}+3sZ^{(t)}_{2}+3s^2Z^{(t)}_{1}+s^3\end{bmatrix}.\\
        =& \begin{bmatrix} 1\\ w^{(t)}_x+s\\\left(w^{(t)}_x\right)^2+2sw^{(t)}_x+s^2\\\left(w^{(t)}_x\right)^3+3s\left(w^{(t)}_x\right)^2+3s^2w^{(t)}_x+s^3\end{bmatrix}\\
        =& \begin{bmatrix} 1\\ w^{(t)}_x+s\\(w^{(t)}_x+s)^2\\(w^{(t)}_x+s)^3\end{bmatrix}=\begin{bmatrix}
            1\\ w^{(t+1)}_x\\ \left(w^{(t+1)}_x\right)^2\\\left(w^{(t+1)}_x\right)^3
        \end{bmatrix}
    \end{align*}  The equality also holds for all $x\notin E$.  Finally, 
    $M^{(t+1)} = \alpha_s(M^{(t)}).$ by the same computation.  Thus, we prove \cref{eq:invq}.  
    The reduction answers price and cost queries correctly follows directly from \cref{prop:pmsr_closedform} and \cref{eq:invp}.
\end{proof}

\begin{proof}[Proof of \cref{thm:pmsr}]
    Because $(+,\alpha)$-RQRU satisfy \cref{def:rqrug}, combining \cref{thm:rqrug} and \cref{lem:equivp} completes the proof.
\end{proof}

\subsection{Proofs in Section~\ref{sec:multi-res}}

\begin{proof}[Proof of \cref{prop:multi-resolution-cost}]
We first prove that the constraints $\A^\top \mu = \vzero$ imply that all submarkets $k=0,1,\dotsc,K$ are mutually coherent.
To do this, it suffices to show that all pairs of submarkets at consecutive levels $\ell < K$ and $\ell+1$ are coherent, i.e., $\mu_u=\sum_{v\in\mathcal{C}(u)} \mu_v$ for all $u\in\V_\ell$ as we define $\A$ in \cref{eq:A_gen}.
As prices at level $K$ are determined by $C_K$, they describe a probability distribution over $\mathcal{X}$.
Since all submarkets are coherent with the finest submarket $K$, $\mu$ is a coherent price vector for $\mathcal{N} = \cup_{k = 0}^K \mathcal{N}_k$.

Consider a fixed $w$ and the corresponding $\eta^*$ that minimizes \cref{eq:lcmm_cost}.
We calculate prices over $\mathcal{N} = \cup_{k = 0}^K \mathcal{N}_k$ as
$
\vp(w)=\nabla C(w)=\nabla\Cx(w+\A\eta^*).
$
By the first order optimality, $\eta^*$ minimizes \cref{eq:lcmm_cost} if and only if $\A^\top\bigParens{\nabla\Cx(w+\A\eta^*)}=\vzero$.
This means that $\A^\top\vp(w)=\vzero$, and thus arbitrage opportunities expressed by $\A$ are completely removed by the cost function $C$ in \cref{eq:lcmm_cost}.

\end{proof}

\begin{proof}[Proofs for     \cref{ex:multiresolution_lmsr_arb}]
To calculate price, we consider a node $v \neq root$ with $k=\level(v)$. 
We have
\begin{equation}
\label{eq:pxz:2}
\price(N(v)) 
%= \frac{p_v}{p_v + \sum_{u \in \sib(v)} p_u} \cdot p_{\pt(v)}
=\frac{e^{\tilde{w}(v)/b_k}}{e^{\tilde{w}(v)/b_k}+\sum_{u \in \sib(v)} e^{\tilde{w}(u)/b_k}} \cdot \price(N(\pt(v))).
\end{equation}
Following the construction of $\A^{\text{LMSR}}$ in \eqref{eq:A} and expanding $\tilde{\vw}$, we get
\begin{equation}
\label{eq:ttheta}
\tilde{w}(v) = w(v) + \sum_{u\in\U} A_{vu}\eta(u)
= w(v) + B_{k}\eta(v) - b_{k}\sum_{u \supset v} \eta(u).
\end{equation}
and
\begin{equation}
\label{eq:pxz:3}
\price(N(v)) 
=\frac{\exp\BigParens{\frac{w_v+B_{k}\eta_v}{b_k}}}
{\exp\BigParens{\frac{w_v+B_{k}\eta_v}{b_k}}
	+
\sum_{u \in \sib(v)} \exp\BigParens{\frac{w_u+B_{k}\eta_u}{b_k}}} \cdot \price(N(\pt(v))).
\end{equation}

Next, we show that given a price coherent market, after a trader buys $s$ shares of security associated with node $u$, the price incoherence between the submarket at $\ell := \level(u)$ and submarkets at all other levels can be removed efficiently.
Specifically, to restore price coherence, it suffices to update $\eta(u)$ by a closed-form amount.
% \begin{equation}
%     \label{eq:arb_t}
%     t = \frac{b_\ell}{B_{\ell-1}}\ln\Parens{\frac{1-p_u}{p_u}\cdot\frac{p_u^*}{1-p_u^*}},
% \end{equation}
% where we use $p_u^*$ to denote the price of $u$ in all other finer levels.
%
%\xw{move to the appendix?}
Consider two arbitrary levels $k$ and $m$ with $\ell < k < m \leq K$.
Since prices are coherent between levels $k$ and $m$ before buying $x$ shares of bundle $a_u$, we have, for any $v\in\V_k$,
\[ p_v = \sum_{z \in \V_m:\:z\subset v} p_z.\]
Let $\tilde{w}^*=\tilde{w}+x a_u$.
Based on matrix $\A^{\text{LMSR}}$, we have
\[
	\tilde{w}^*(v)=
	\begin{cases}
	\tilde{w}(v)-x b_{k}
	&\text{if $v \subset u$,}
	\\
	\tilde{w}(v)
	&\text{otherwise,}
	\end{cases}
	\qquad\qquad
	\tilde{w}^*(z)=
	\begin{cases}
	\tilde{w}(z)-x b_{m}
	&\text{if $z \subset u$,}
	\\
	\tilde{w}(z)
	&\text{otherwise.}
	\end{cases}
\]\fang{what is the difference?}
We calculate the new price $p^*_v$ of any node $v\in\V_{k}$ and show it equals to the price derived from its descendants $z \in \V_{m}$. 
First, if $v \subset u$,
\begin{align*}
	p^*_v &=
	\frac{p_v e^{-x}}{p_u e^{-x}+1-p_u}
	= \frac{\sum_{z\in\V_{m}:\:z \subset v} p_z e^{-x}}{p_u e^{-x}+1-p_u}
	= \sum_{z\in\V_{m}:\:z \subset v} p^*_z.
	\intertext{%
		If $v \not\subset u$, then we similarly have}
	p^*_v &=
	\frac{p_v}{p_u e^{-x}+1-p_u}
	= \frac{\sum_{z\in\V_{m}:\:z \subset v} p_z}{p_u e^{-x}+1-p_u}
	= \sum_{z\in\V_{m}:\:z \subset v} p^*_z.
\end{align*}
Thus, prices remain coherent among all levels $m>k>\ell$.

Next, it remains to show that prices are coherent among levels $\ell$ and $\ell+1$, i.e., $p^*_\mu = \sum_{v \in C(\mu)} p^*_v$.
Based on matrix $\A$, we have
\[
	\tilde{w}^*(\mu)=
	\begin{cases}
	\tilde{w}(\mu)+xB_{\ell}
	&\text{if $\mu = u$,}
	\\
	\tilde{w}(\mu)
	&\text{otherwise,}
	\end{cases}
	\qquad\qquad
	\tilde{w}^*(v)=
	\begin{cases}
	\tilde{w}(v)-x b_{\ell+1}
	&\text{if $z \subset u$,}
	\\
	\tilde{w}(v)
	&\text{otherwise.}
	\end{cases}
\]\fang{can it be three cases?}
To verify \cref{eq:arb_x}, we have
\begin{align}
    p^*_\mu &= \sum_{v \in \mathcal{C}(\mu)} p^*_v\\
    \frac{p_\mu e^{xB_\ell/b_\ell}}{p_\mu e^{xB_\ell/b_\ell}+1-p_\mu} &= \frac{\sum_{v \in C(\mu)}p_v e^{-x}}{\sum_{v \in C(\mu)}p_v e^{-x}+1-\sum_{v \in C(\mu)}p_v}\\
    x &= \frac{b_\ell}{B_{\ell-1}}\ln\Parens{\frac{1-p_\mu}{p_\mu}\cdot\frac{\sum_{v \in C(\mu)}p_v}{1-\sum_{v \in C(\mu)}p_v}}
\end{align}
Note that $B_{\ell-1} = B_{\ell} + b_{\ell}$.
\end{proof}
\begin{proof}[Proof for \cref{lem:lcmm_q}]
    For any internal node $u\in \mathcal{V}$ in the partition tree associated with the filtration $(\mathcal{N}_k)_{k = 0,\dots K}$\fang{define?}, we use the bundle defined in \cref{eq:A}
    $a_u^*\in \R^{|\mathcal{V}|}$ where 
    $$a_{u,v}^* = \begin{cases}
        B_{\level(u)} = \sum_{k>\level(u)}^K b_{k}& \text{if $v=u$,}
\\
-b_{\level(v)}&\text{if $v$ is a descendent of $u$,}
\\
0 &\text{otherwise.}
    \end{cases}$$
    Similar to \cref{ex:multiresolution_lmsr}, these bundles are in the column space of $\mA$.\fang{better words?}

    Now we show how to make prices coherent locally by the following claim. As we can scale $\vb$ linearly, we can assume the original $C_q$ has liquidity parameter $1$.  
    Given $\xi\in \R$, $u$, $u'\neq u$ that is not ascendant of $u$, and $\tilde{w}^0$ with associated prices $p^0 = \tilde{p}(\tilde{w}^0)$ and $p = \tilde{p}(\tilde{w}^0+\xi a_u^*)$, 
    $$p_{u'}- \sum_{v'\in \mathcal{C}(u')}p_{v'} = p^0_{u'}- \sum_{v'\in \mathcal{C}(u')}p^0_{v'}.$$
    Moreover, if $\xi = \xi_u^* = \frac{b_\ell}{\sum_{k = \ell}^K b_k}\frac{2n(\sum_{v\in \mathcal{C}(u)} p_v^0-p_u^0)}{N(u)(n-N(u))}$
    $$p_{u} = \sum_{v\in \mathcal{C}(u)}p_{v}.$$  In other words, we can make prices coherent between $u$ and its children without affecting price coherence for any non-ascendant $u'$.  Therefore, with the above claim, we can iteratively remove arbitrage by using the above bundle for each node in the DFS order. 

    For the first part, if $u'$ is a descendant of $u$ with $\level(u') = k$ by \cref{eq:QMSR} and the definition of $a_u^*$,
    \begin{align*}
        p_{u'} =& \frac{1}{n}|N(u')|+\frac{1}{2b_k}\sum_{x\in N(u')} (\tilde{w}^0_k(x)-\xi b_k)-\frac{|N(u')|}{2b_kn}\left(\sum_{x\in N(u)} (\tilde{w}^0_k(x)-\xi b_k)+\sum_{x\notin N(u)} \tilde{w}^0_k(x)\right)\\
        =& p_{u'}^0-\frac{\xi(n-|N(u)|)}{2n} |N(u')|
    \end{align*}
    Similarly for any $v'\in \mathcal{C}(u')$, $p_{v'} = p_{v'}^0-\frac{\xi(n-|N(u)|)}{2n}|N(v')|$, and 
    \begin{align*}
        p_{u'}-\sum_{v'\in \mathcal{C}(u')} p_{v'} =& p_{u'}^0-\frac{\xi(n-|N(u)|)}{2n} |N(u')|-\sum_{v'\in \mathcal{C}(u')} p_{v'}^0+\frac{\xi(n-|N(u)|)}{2n}\sum_{v'\in \mathcal{C}(u')}|N(v')|\\
        =& p_{u'}^0-\sum_{v'\in \mathcal{C}(u')} p_{v'}^0
    \end{align*}
    If $u'$ with $\level(u') = k$ is neither descendant nor ascendant of $u$, $N(u')$ is disjoint to $N(u)$, we have 
    \begin{align*}
        p_{u'} =& \frac{1}{n}|N(u')|+\frac{1}{2b_k}\sum_{x\in N(u')} \tilde{w}^0_k(x)-\frac{|N(u')|}{2b_kn}\left(\sum_{x\in N(u)} (\tilde{w}^0_k(x)-\xi b_k)+\sum_{x\notin N(u)} \tilde{w}^0_k(x)\right)\\
        =& p_{u'}^0+\frac{\xi(|N(u)|)}{2n} |N(u')|
    \end{align*}
    and the rest follows the identical argument. 

    Finally, for $u$ with $\level(u) = \ell$ we have
    \begin{align*}
        p_{u} =& \frac{1}{n}|N(u)|+\frac{1}{2b_\ell}\sum_{x\in N(u')} (\tilde{w}^0_\ell (x)+\xi B_\ell)-\frac{|N(u)|}{2b_\ell n}\left(\sum_{x\in N(u)} (\tilde{w}^0_\ell(x)+\xi B_\ell)+\sum_{x\notin N(u)} \tilde{w}^0_\ell(x)\right)\\
        =&p_u^0+\frac{ \xi(n-|N(u)|)|N(u)|B_\ell}{2n b_\ell}, 
    \end{align*} and $p_{v} = p_{v}^0-\frac{\xi(n-|N(u)|)}{2n}|N(v)|$ for all $v\in \mathcal{C}(u)$.  Therefore by taking $\xi = \xi_u^*$
    \begin{align*}
        p_u-\sum_{v\in \mathcal{C}(u)} p_v =& p_u^0-\sum_{v\in \mathcal{C}(u)} p_v^0+\frac{ \xi(n-|N(u)|)|N(u)|B_\ell}{2n b_\ell}+\frac{\xi_u^*(n-|N(u)|)|N(u)|}{2n}\\
        =& p_u^0-\sum_{v\in \mathcal{C}(u)} p_v^0+\xi_u^* \frac{(n-|N(u)|)|N(u)|}{2n}\left(\frac{B_\ell}{b_\ell}+1\right)\\
        =& 0
    \end{align*}
    which completes the proof.
\end{proof}
\begin{proof}[Proof of \cref{thm:multi-res}]
    With efficient and local arbitrage removal (\cref{def:local_update}), we can localize the arbitrage weight updates to the subtree rooted at the node $u$, when securities associated with $N(u)$ is traded. 
    We can continue using the local update property to go up, back along the search path of node $u$ to retain price coherence of the multi-resolution market. 
    In an arbitrage-free multi-resolution market, with each node $u$ storing its trade weight $w(u)$ and arbitrage weight $\eta(u)$, price can be computed recursively along the search path.
    Then by \cref{thm:visiting}, price, buy, and cost operation can be supported in time big $O$ of the visiting number of the constructed partition tree $\mathcal{T}$ for the multi-resolution market.
\end{proof}

\section{Generalized RQRU problem and partition tree scheme}\label{app:rqrug}
We first define the range query range update problem with general query and update operations.  

\begin{Definition}[Generalized RQRU]\label{def:rqrug}
    Let $(\mathcal{Z}, \oplus)$ be a commutative group with zero $0_\mathcal{Z}$ and a group $(\mathcal{S}, \circ)$ with identity $1_\mathcal{S}$ acts on $\mathcal{Z}$ denoted as $S\otimes Z$.  Moreover, the group action $\otimes$ and $\oplus$ satisfy the distributive law where $S\otimes (Z\oplus Z') = (S\otimes Z)\oplus(S\otimes Z')$ for all $S\in \mathcal{S}$ and $Z, Z'\in \mathcal{Z}$.  The range query and range update problem on $(\mathcal{X}, \mathcal{F})$ with $(\mathcal{Z}, \oplus)$ and $(\mathcal{S}, \circ)$ requests a sequence of operations, taking one of the following forms: for any $E\in \mathcal{F}$ and $S\in \mathcal{S}$
\begin{itemize}
    \item $\query_G(E; Z)$: return the total weight of range $E$, $Z(E) = \oplus_{x \in E} Z(x)$ where $Z(\emptyset) := 0_\mathcal{Z}$.
    \item $\update_G(E, S; W)$: for each $x \in E$, update $Z(x) \gets S\otimes Z(x)$, and for each $x'\notin E$, $Z(x') \gets Z(x')$.  
\end{itemize}
\end{Definition}

Note that the above definition generalizes all the RQRU problems in \cref{def:rqru,def:rqruq,def:rqrup}.
Interestingly, the partition tree scheme in \cref{alg:lazyquery,alg:lazyUpdate} directly adapts to this generalized RQRU problem without incurring any additional overhead defined in \cref{alg:lazygquery,alg:lazygUpdate}. 
\begin{theorem}\label{thm:rqrug}
    Given a set system $(\mathcal{X}, \mathcal{F})$ and a partition tree $\mathcal{T}$, the query time $T_Q(n)$ of \cref{alg:lazygquery} and the update $T_U(n)$ of \cref{alg:lazygUpdate} on $\mathcal{T}$ are big $O$ of the visiting number of $\mathcal{T}$ on $(\mathcal{X}, \mathcal{F})$. 

    % Moreover, with \cref{thm:equiv}, $\mathcal{T}$ can support price, cost, and buy operations for LMSR with running time big $O$ of the visiting number of $\mathcal{T}$ on $(\mathcal{X}, \mathcal{F})$.

    % Let $\mathcal{X}$ be any set of $n$ point, and $\mathcal{F}$ be a collection of subsets on $\mathcal{X}$.  If $\mathcal{F}$ has a dual shatter function $\pi^*_\mathcal{F}(m) = O(m^d)$ for some constant $d$, there exists a partition tree for generalized RQRU in \cref{def:rqrug} so that both \cref{alg:lazygquery}'s query time $T_Q(n)$ and \cref{alg:lazygUpdate}'s update time $T_U(n)$ are both in $O(n^{1-1/d}\log^2 n)$ when $d>1$ and $O(\log^3 n)$ when $d = 1$. 
\end{theorem}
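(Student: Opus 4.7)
The plan is to show that Algorithms~\ref{alg:lazygquery} and~\ref{alg:lazygUpdate} are structurally identical to the multiplicative versions in Algorithms~\ref{alg:lazyquery} and~\ref{alg:lazyUpdate}, with scalar multiplication replaced by the group action $\otimes$ and scalar addition replaced by the commutative operation $\oplus$. Under this abstraction, the running-time analysis of \cref{thm:visiting} transfers almost verbatim to the generalized setting, so the task reduces to isolating the precise algebraic properties invoked by that analysis and verifying they hold under \cref{def:rqrug}.

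First I would argue that the recursion pattern of both generalized algorithms is driven \emph{solely} by the set-theoretic relationship between $E$ and each node-set $N(v)$, through the three branches (contained, disjoint, crossing) in the pseudocode. In particular, the branching decisions do not inspect any weight or lazy value, so the collection of recursive calls induced by a $\query_G(E)$ or $\update_G(E,S)$ call is exactly the visited set $U(E)$ used in the proof of \cref{thm:visiting}. By definition, $|U(E)|$ is bounded above by the visiting number of $\mathcal{T}$ on $(\mathcal{X},\mathcal{F})$, so the number of recursive invocations is at most the visiting number.

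Next I would bound the work performed at a single visited node under the arithmetic model, where each evaluation of $\oplus$, $\otimes$, and identity comparison costs $O(1)$. A single recursive call performs: a test for $\pend(v) \neq 1_{\mathcal{S}}$; if needed, one $\otimes$ to update $\val(v)$ and one $\otimes$ per child to push the lazy value down; one decision among the three cases for $N(v)$ versus $E$; and, in the crossing case, one $\oplus$ per child to re-aggregate. All of this is $O(|\mathcal{C}(v)|)$. For the visiting-number bound to give the total time, it suffices to assume the partition tree has bounded branching factor (which is the case for every instantiation in \cref{sec:examples} — balanced binary trees, $k$-d trees, and the partition trees of \citet{Chazelle1989,chan2010optimal}); alternatively, one observes that every visited case-3 node contributes all its children to $U(E)$, so outside of visited case-1/case-2 leaves the push-down cost is already absorbed into $|U(E)|$.

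Finally I would verify the only non-trivial algebraic step, namely that lazy propagation remains correctness-preserving under $(\oplus,\otimes)$. In the multiplicative case this uses $S \cdot (w + w') = (S\cdot w) + (S\cdot w')$; in \cref{def:rqrug} the required identity $S \otimes (Z \oplus Z') = (S\otimes Z) \oplus (S\otimes Z')$ is exactly the distributive law assumed there, so pushing a pending $\pend(v)$ down to children and re-aggregating matches applying the update to leaves and recomputing. The main (minor) obstacle is cleanly handling the lazy-pushdown cost at visited case-1 or case-2 nodes whose children are not themselves visited; the cleanest route is to invoke the bounded-branching convention of the partition-tree scheme so this cost is $O(1)$ per visited node, making the total running time $O(|U(E)|) = O(\text{visiting number})$ for both operations.
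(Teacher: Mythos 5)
Your proposal is correct and takes essentially the same route as the paper: the branching of both generalized algorithms depends only on the contained/disjoint/crossing relation between $E$ and the node-sets, so the number of recursive calls is bounded by the visiting number, and the algebraic assumptions of \cref{def:rqrug} (group action, commutativity of $\oplus$, distributivity) let the lazy-propagation analysis of \cref{thm:visiting} carry over unchanged. You are in fact more explicit than the paper about the per-node $O(\card{\mathcal{C}(v)})$ lazy-pushdown cost at contained or pending nodes whose children are not visited (the paper's one-line time argument implicitly relies on bounded branching), while the paper in turn writes out in full the correctness invariants that you only sketch; for the stated running-time claim your argument suffices.
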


\begin{lemma}
    RQRU problems in \cref{def:rqru,def:rqruq,def:rqrup} are generalized RQRU.
\end{lemma}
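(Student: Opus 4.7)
The plan is to prove the lemma case by case. For each of the three RQRU problems, I will exhibit the commutative group $(\mathcal{Z}, \oplus, 0_\mathcal{Z})$, the group $(\mathcal{S}, \circ, 1_\mathcal{S})$, and the action $\otimes : \mathcal{S} \times \mathcal{Z} \to \mathcal{Z}$, and then verify the group action axioms along with the distributive law $S \otimes (Z \oplus Z') = (S \otimes Z) \oplus (S \otimes Z')$ required by \cref{def:rqrug}.

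The $(+, \cdot)$-RQRU case of \cref{def:rqru} is immediate. Take $\mathcal{Z} = \R$ with $\oplus = +$ (identity $0$) and $\mathcal{S} = \R_+$ with $\circ = \cdot$ (identity $1$), and set $S \otimes Z := S \cdot Z$. Both groups are standard, the update $W(x) \gets S \cdot W(x)$ coincides with $S \otimes W(x)$, and distributivity is the usual distributive law over the reals. The initial positive weights stay positive under multiplicative updates, so the problem never leaves $\R_+$.

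The $(+, +)$- and $(+, \alpha)$-RQRU problems do not fit \cref{def:rqrug} under the most literal reading, because the pointwise update fails the distributive law: for instance, under $\oplus = +$ and the naive $S \otimes Z := S + Z$, one has $S + (Z + Z') = Z + Z' + S$ while $(S + Z) + (S + Z') = Z + Z' + 2 S$. The remedy, already implicit in the reductions of \cref{lem:equivq} (where $Z(x) = (1, w_x)$ always carries a $1$ in the first coordinate) and of \cref{lem:equivp} (where $\Gamma_0 = 1$), is to augment $\mathcal{Z}$ with a coordinate tracking the cardinality of the aggregated subset and to extend the action accordingly. For $(+, +)$-RQRU, I take $\mathcal{Z} = \R \times \R^l$ with $\oplus$ componentwise addition, $\mathcal{S} = \R^l$ with $\circ = +$, and $S \otimes (m, z) := (m, z + m S)$, embedding each original weight $Z_0(x) \in \R^l$ as $(1, Z_0(x))$. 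Distributivity is then a one-line check: $S \otimes ((m, z) \oplus (m', z')) = (m + m', z + z' + (m + m') S) = (S \otimes (m, z)) \oplus (S \otimes (m', z'))$, and the action on points with $m = 1$ recovers the original update $z \mapsto z + S$.

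For $(+, \alpha)$-RQRU I make the analogous extension. The $\alpha_S$ of \cref{def:rqrup} is the restriction to $\Gamma_0 = 1$ of the natural action on the vector of $0$th through $3$rd empirical moments; on all of $\mathcal{Z} = \R^4$ this becomes $S \otimes (\Gamma_0, \Gamma_1, \Gamma_2, \Gamma_3) := (\Gamma_0,\, \Gamma_1 + \Gamma_0 S,\, \Gamma_2 + 2 S \Gamma_1 + \Gamma_0 S^2,\, \Gamma_3 + 3 S \Gamma_2 + 3 S^2 \Gamma_1 + \Gamma_0 S^3)$, which is precisely how $\sum_{x \in E} w_x^k$ for $k = 0, 1, 2, 3$ transforms when each $w_x$ is increased by $S$. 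With $\mathcal{S} = \R$ and $\circ = +$, distributivity and the group-action identity $S \otimes (S' \otimes Z) = (S + S') \otimes Z$ follow by coordinate-wise expansion using the binomial formula. The main obstacle is this verification in the cubic coordinate, which is mechanical but the most tedious piece of arithmetic in the proof.
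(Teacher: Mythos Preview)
Your proof is correct and, in fact, considerably more careful than the paper's own argument. The paper's proof handles only the $(+,\cdot)$ case of \cref{def:rqru} explicitly (taking $\mathcal{Z}=\mathcal{S}=\R_{\ge 0}$, $\oplus=+$, $\circ=\otimes=\cdot$) and says nothing about the other two.

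Your observation that the literal update maps of \cref{def:rqruq} and \cref{def:rqrup} fail the distributive law required in \cref{def:rqrug} is entirely valid: with $S\otimes Z := S+Z$ one has $S\otimes(Z\oplus Z') = Z+Z'+S \neq 2S+Z+Z' = (S\otimes Z)\oplus(S\otimes Z')$, and the analogous failure occurs for $\alpha_S$ in the coordinates where the bare $S^k$ terms appear. Your remedy---carrying along a cardinality coordinate $\Gamma_0$ and replacing each occurrence of $S^k$ by $\Gamma_0 S^k$---makes the action linear in $\Gamma$, which is exactly what distributivity over $\oplus=+$ requires; on the initial data where $\Gamma_0=1$ this reduces to the original update, so the embedding is faithful for the reductions in \cref{lem:equivq,lem:equivp}. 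The paper implicitly relies on this fix (the initial weights in both reductions carry a constant $1$ in the first slot) but never spells it out, so you have closed a genuine gap.
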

\begin{proof}
    For \cref{def:rqru}, we take $\mathcal{Z} = \mathcal{S} = \R_{\ge 0}$ non-negative real numbers, $\oplus = +$ addition and $\circ = \otimes = \cdot$ as multiplication. 
\end{proof}

\begin{algorithm}\caption{Range update on partition trees}\label{alg:lazygUpdate}
    \begin{algorithmic}[1]
    \Require A partition tree $\mathcal{T}$ on $\mathcal{X}$, range $E\subseteq \mathcal{X}$, and value to update $S\in \R_+$
        \Function{range\_update}{$E, S$}
            \State \Call{range\_update}{$E,S,root$}
        \EndFunction
        \Function{range\_update}{$E, S, v$}
        \If{$\pend(v)\neq 1_\mathcal{S}$}\Comment{Check if there are pending updates for the current node}\label{line:gupdatecheck}
            \State $val(v) \gets {\pend(v)}\otimes val(v)$
            \For{$u\in \mathcal{C}(v)$}
                \State $\pend(u)\gets \pend(v)\circ \pend(u)$
            \EndFor
            \State $\pend(v) \gets 1_\mathcal{S}$
        \EndIf
        
        \If{$N(v)\subseteq E$}\Comment{$E$ contains $N(v)$}\label{line:gupdate1}
            \State $val(v) \gets S\otimes val(v)$
            \For{$u\in \mathcal{C}(v)$}
                \State $\pend(u)\gets S\circ \pend(u)$
            \EndFor
            \State \Return
        \ElsIf{$N(v)\cap E = \emptyset$}\Comment{$E$ and $N(v)$ are disjoint}\label{line:gupdate2}
            \State \Return
        \Else                       \Comment{$E$ crosses the node set $N(v)$}\label{line:gupdate3}
            \State $ans \gets 0_\mathcal{Z}$
            \For{$u\in \mathcal{C}(v)$}\Comment{Recursive call to all the children of $v$ node}
                \State \Call{range\_update}{$E\cap N(u), S, u$}
                \State $ans \gets ans\oplus val(u)$
            \EndFor
            \State $val(v)\gets ans$
            \State \Return
        \EndIf
        \EndFunction
    \end{algorithmic}
\end{algorithm}

% \begin{algorithm}[!ht]
% \caption{Range query on partition trees}\label{alg:lazygquery}
% \KwIn{A range $E\subseteq \mathcal{X}$, and a partition tree $\mathcal{T}$ on $\mathcal{X}$ as \cref{alg:lazyUpdate}}
% \KwOut{Optimal aggregator $f^*\in\mathcal{F}$}

% \SetKwFunction{Fquery}{query}
% \SetKwFunction{Fqueryp}{query}
% \SetKwProg{Fn}{Function}{:}{}
% \Fn{\Fquery{$E$}}{
%     \KwRet \Fquery{$E,root$}\;
% }
% \Fn{\Fqueryp{$E, v$}}{
%     \If(\tcp*[h]{Check if there are pending updates for the current node}\label{line:gquerycheck}){$\pend(v)\neq 1_\mathcal{S}$}
%     {
%     $\val(v) \gets {\pend(v)}\otimes \val(v)$\;
%     \For{$u\in \mathcal{C}(v)$}{
%         $\pend(u)\gets \pend(v)\circ \pend(u)$\;}
%     $\pend(v) \gets 1_\mathcal{S}$\;
%     }
%     $ans\gets 0_\mathcal{Z}$\;
%     \uIf(\tcp*[h]{$E$ contains $N(v)$~\label{line:gquery1}}){$N(v)\subseteq E$}{
%         $ans\gets \val(v)$\;}
%     \uElseIf(\tcp*[h]{$E$ and $N(v)$ are disjoint\label{line:gquery2}}){$N(v)\cap E = \emptyset$}{
%         $ans\gets 0_\mathcal{Z}$\;}
%     \Else(\tcp*[h]{$E$ crosses the node set $N(v)$\label{line:gquery3}}){
%         \For(\tcp*[h]{Recursive call to all the children of $v$ node}){$u\in \mathcal{C}(v)$}{
%         $ans \gets ans\oplus\Fqueryp{$E\cap N(u), u$}$\;
%         }
%     }
%     \KwRet $ans$\;
%   }
% \end{algorithm}

\begin{algorithm}\caption{Range query on partition trees}\label{alg:lazygquery}
    \begin{algorithmic}[1]
    \Require A partition tree $\mathcal{T}$ on $\mathcal{X}$, and range $E\subseteq \mathcal{X}$
        \Function{range\_query}{$E$}
            \State \Call{range\_query}{$E,root$}
        \EndFunction

        \Function{range\_query}{$E, v$}
        \If{$\pend(v)\neq 1_\mathcal{S}$}\Comment{Check if there are pending updates for the current node}\label{line:gquerycheck}
            \State $val(v) \gets {\pend(v)}\otimes val(v)$
            \For{$u\in \mathcal{C}(v)$}
                \State $\pend(u)\gets \pend(v)\circ \pend(u)$
            \EndFor
            \State $\pend(v) \gets 1_\mathcal{S}$
        \EndIf
        \State $ans \gets 0_\mathcal{Z}$
        \If{$N(v)\subseteq E$}\Comment{$E$ contains $N(v)$}
            \State $ans\gets val(v)$
        \ElsIf{$N(v)\cap E = \emptyset$}\Comment{$E$ and $N(v)$ are disjoint}
            \State $ans\gets 0_\mathcal{Z}$
        \Else       \Comment{$E$ crosses the node set $N(v)$}
            \For{$u\in \mathcal{C}(v)$}
                \State $ans \gets ans\oplus $ \Call{range\_query}{$E\cap N(u), u$}
            \EndFor
        \EndIf
        \State \Return $ans$
        \EndFunction
    \end{algorithmic}
\end{algorithm}

\begin{proof}[Proof of \cref{thm:rqrug}]
    
Now we prove \cref{thm:rqrug}.  The time complexity guarantee is hold by the above discussion and \cref{thm:visiting} as the number of recursion of both update and range queries are exactly the visiting number of the partition tree.  We only need to show the correctness of the algorithm so that for any range query with range $E$ the output value is correct.  

For correctness, we use similar notions as the proof of \cref{thm:visiting}.  Given a range update or range query operation with range $E$, let $U(E)\subseteq \mathcal{V}$ be the set of visited nodes, $U_1(E) = \{v\in U(E): N(v)\subseteq E\}$, $U_2(E) = \{v\in U(E): N(v)\cap E = \emptyset \}$ and $U_3(E) = U(E)\setminus (U_1(E)\cup U_2(E))$ be the set of first, second and third case nodes respectively.

\begin{claim}\label{lem:lazyg3}
In a range update or range query operation with $E$, each visited node $v\in {U}(E)$ has unit lazy value $\pend(u) = 1_\mathcal{S}$ at the end of the round where $1_\mathcal{S}$ is the identity of group $(\mathcal{S}, \circ)$.  
\end{claim}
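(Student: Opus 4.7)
The plan is to prove this claim by structural inspection of \cref{alg:lazygUpdate,alg:lazygquery}, which is essentially the direct generalization of the argument sketched for \cref{lem:lazy3}. The key observation is that for any node $v$, the only places in the entire round that can possibly write to $\pend(v)$ are (i) the propagation block at the top of a recursive call that visits $v$ itself, which unconditionally ends with $\pend(v)\gets 1_\mathcal{S}$, and (ii) code executed during a visit to $v$'s parent $p$ (either $p$'s own propagation block at line~\ref{line:gupdatecheck}/\ref{line:gquerycheck}, or the case-1 push at line~\ref{line:gupdate1} when $N(p)\subseteq E$).

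First, I would verify by inspecting the three branches (lines \ref{line:gupdate1}, \ref{line:gupdate2}, \ref{line:gupdate3} of \cref{alg:lazygUpdate}, and the analogous branches in \cref{alg:lazygquery}) that once the initial propagation block has set $\pend(v)\gets 1_\mathcal{S}$, no subsequent instruction in the body of the visit modifies $\pend(v)$: the three branches only read and write $val(v)$ and, at most, push updates into $\pend$-values of $v$'s \emph{children}, not into $\pend(v)$ itself.

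Second, I would argue that once the visit to $v$ returns, nothing else in the remainder of the round can modify $\pend(v)$. The only potential offender is a visit to the parent $p$ of $v$. If $p$ is in case~2 or case~1, it returns immediately without recursing into any child, so $v$ cannot lie in $U(E)$; hence whenever $v\in U(E)$, $p$ must be in the crossing case (case~3). In that branch, all writes to $\pend(v)$ from within $p$'s body -- namely the initial propagation at $p$, which sets $\pend(v)\gets \pend(p)\circ \pend(v)$ -- occur strictly before the recursive call on $v$, and the initial propagation inside $v$'s own visit then overrides $\pend(v)$ to $1_\mathcal{S}$. After the recursion on $v$ returns to $p$, the remaining code at $p$ only accumulates into $val(p)$ and never touches $\pend(v)$.

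Combining the two steps shows $\pend(v)=1_\mathcal{S}$ at the end of the round for every $v\in U(E)$. The main thing to be careful about is the ordering within the case-3 branch at $p$; beyond that, the proof uses only the existence of the identity $1_\mathcal{S}$ of the group $(\mathcal{S},\circ)$ and the fact that the pending slot is reset after being consumed. No further algebraic structure of $(\oplus,\otimes)$ enters -- which is why the argument lifts verbatim from the $(+,\cdot)$ setting of \cref{lem:lazy3} to the generalized RQRU and why I do not expect any substantive obstacle beyond the case analysis just described.
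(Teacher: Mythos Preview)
Your proposal is correct and follows the same approach as the paper. The paper itself disposes of this claim in a single sentence (``We can easily check that the lazy value of all visited nodes is reset to $1_\mathcal{S}$ at the end of each round in \cref{alg:lazygUpdate,alg:lazygquery}''), so your case analysis is simply a more explicit version of that inspection; the key observations you isolate---that the propagation block leaves $\pend(v)=1_\mathcal{S}$, that the three branches only push into children's $\pend$ rather than into $\pend(v)$, and that a visited node's parent is necessarily in the crossing case so all writes to $\pend(v)$ from the parent precede the recursive call---are exactly what is needed.
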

We can easily check that that the lazy value of all visited node is reset to $1_\mathcal{S}$ at the end of each round in \cref{alg:lazygUpdate,alg:lazygquery}.

\begin{claim}\label{lem:lazyg2}
In a range update or range query operation with range $E$, $U_1(E)$ forms a partition of $E$
\end{claim}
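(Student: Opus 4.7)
The plan is to observe that this claim is purely structural: it concerns only which nodes are visited and how their node-sets relate to $E$, not any weight values or the group operations $\oplus$ and $\otimes$. Hence, the argument from the earlier \cref{lem:lazy2} transfers verbatim, since the case-splitting at lines \ref{line:gupdate1}--\ref{line:gupdate3} (respectively, the analogous lines in \cref{alg:lazygquery}) is identical in structure to the original algorithms.

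First I would establish disjointness: for any two distinct $v, v' \in U_1(E)$, the node-sets $N(v)$ and $N(v')$ are disjoint. By the definition of a partition tree (\cref{def:partitiontree}), either one of $v, v'$ is an ancestor of the other, or their node-sets are already disjoint. If $v'$ were a strict descendant of $v$ (or vice versa), then, since $v \in U_1(E)$ means $N(v) \subseteq E$, the update/query procedure returns immediately at $v$ and never recurses into $v$'s subtree, contradicting $v' \in U(E)$.

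Next I would establish coverage: for every $x \in E$, there exists $v \in U_1(E)$ with $x \in N(v)$. Since $x \in \mathcal{X} = N(root)$ and the root is always visited, the set of visited nodes whose node-set contains $x$ is nonempty. Pick such a visited $v$ of maximal depth. If $v \in U_2(E)$, then $N(v) \cap E = \emptyset$ contradicts $x \in N(v) \cap E$. If $v \in U_3(E)$, then by line \ref{line:gupdate3} the algorithm recurses on all children of $v$, and since the children's node-sets partition $N(v)$, exactly one child $u \in \mathcal{C}(v)$ satisfies $x \in N(u)$; this visited $u$ has strictly greater depth than $v$, contradicting maximality. Hence $v \in U_1(E)$, completing the proof.

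Since the argument is identical to that of \cref{lem:lazy2} and hinges only on the three-way case split in the DFS recursion, there is no real obstacle. The only thing worth double-checking is that the generalized algorithms in \cref{alg:lazygUpdate,alg:lazygquery} preserve this case split unchanged, which they do: the lazy-propagation bookkeeping at lines \ref{line:gupdatecheck} and \ref{line:gquerycheck} affects only $val$ and $\pend$, never which branch is taken.
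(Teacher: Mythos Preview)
Your proposal is correct and follows the same approach as the paper. The paper does not write out a separate proof of \cref{lem:lazyg2}; it relies implicitly on the proof of the structurally identical \cref{lem:lazy2}, which is precisely the maximal-depth coverage argument you give, together with a one-line disjointness remark that your version actually spells out more carefully.
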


\begin{claim}\label{lem:lazyg1}
For all round $t$ and $v\in \mathcal{V}$ with the path from the root to itself $u_0 = root, \dots, u_k = v$, we have $\left(\prod_{i= 0}^k \pend(u_i)\right)\otimes val(v) = W^{(t)}(N(v))$ at the end of the round where $\prod$ is the product under $\circ$ in group $(\mathcal{S}, \circ)$.  
\end{claim}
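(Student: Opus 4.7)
The plan is to induct on the round index $t$, extending the proof of \cref{lem:lazy1} by replacing the multiplicative identities with the group-action axioms $(S_1\circ S_2)\otimes Z = S_1\otimes(S_2\otimes Z)$ and the distributive law $S\otimes(Z\oplus Z')=(S\otimes Z)\oplus(S\otimes Z')$. For a node $v$ with root-to-$v$ path $u_0,\dots,u_k=v$, I define the ``effective'' stored value
\[\mathsf{eff}(v) \;:=\; \Bigl(\prod_{i=0}^{k}\pend(u_i)\Bigr)\otimes val(v),\]
so the target invariant is $\mathsf{eff}(v)=W^{(t)}(N(v))$ at the end of round~$t$. The base case $t=0$ holds by initialization, since $\pend\equiv 1_{\mathcal{S}}$ and $val(v)=W^{(0)}(N(v))$.

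First I will prove a \emph{push-down lemma}: whenever the algorithm executes the conditional block at line~\ref{line:gquerycheck} or line~\ref{line:gupdatecheck}, setting $val(v)\gets\pend(v)\otimes val(v)$, $\pend(v)\gets 1_{\mathcal{S}}$, and $\pend(c)\gets\pend(v)\circ\pend(c)$ for each child $c$, the quantity $\mathsf{eff}(w)$ is unchanged for every node $w$ in the tree. For $w=v$ and for descendants of $v$, this is a one-line computation: the factor $\pend(v)_{\mathrm{old}}$ that is removed from $v$'s pending is reintroduced either directly into $val(v)$ (for $w=v$) or into the pending of the unique child of $v$ lying on the root-to-$w$ path (for descendants), and the group-action axiom collapses the two forms to the same element of $\mathcal{Z}$. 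For $w$ not in $v$'s subtree, the root-to-$w$ path does not touch $v$ or its children, so nothing along it is modified.

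With the push-down lemma in hand, the query round of the inductive step is immediate: $W^{(t)}=W^{(t-1)}$, push-downs preserve all $\mathsf{eff}$-values, and the returned answer $\bigoplus_{v\in U_1(E)} val(v)$ is shown to equal $W^{(t)}(E)$ by combining \cref{lem:lazyg2} (so $U_1(E)$ partitions $E$), \cref{lem:lazyg3} (so $\pend=1_{\mathcal{S}}$ along each root-to-$U_1$ path at the end of the call, meaning $val=\mathsf{eff}$ there), and the distributivity of $\otimes$ over $\oplus$. For the update round $\update_G(E,S)$, I proceed by induction on $U(E)$ in DFS post-order. At any node $v$ being actively processed, the push-downs during the descent ensure that every proper ancestor of $v$ and $v$ itself have $\pend=1_{\mathcal{S}}$, so $\mathsf{eff}(v)=val(v)$ at that moment. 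The three cases mirror the proof of \cref{lem:lazy1}: for $v\in U_2(E)$ nothing is touched and $N(v)\cap E=\emptyset$ gives $W^{(t)}(N(v))=W^{(t-1)}(N(v))$; for $v\in U_1(E)$ the assignments $val(v)\gets S\otimes val(v)$ and $\pend(c)\gets S\circ\pend(c)$ propagate an $S$-action to $\mathsf{eff}(w)$ for $w=v$ and for every descendant of $v$, which matches the true update because $N(w)\subseteq N(v)\subseteq E$; for $v\in U_3(E)$ the post-order hypothesis yields $val(c)=\mathsf{eff}(c)=W^{(t)}(N(c))$ for each child $c$, so the assignment $val(v)\gets\bigoplus_c val(c)$ gives $\mathsf{eff}(v)=W^{(t)}(N(v))$ since $\{N(c)\}_{c\in\mathcal{C}(v)}$ partitions $N(v)$.

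The main obstacle is book-keeping the order in which pending operations are composed and later applied, particularly to verify that the rule $\pend(c)\gets\pend(v)\circ\pend(c)$---with the \emph{parent}'s pending placed on the left---matches the temporal order in which those updates logically act on the leaf weights. All the concrete groups $(\mathcal{Z},\oplus)$ and $(\mathcal{S},\circ)$ arising from LMSR, QMSR, and $\tfrac32$-power MSR are commutative, so this ordering concern is transparent there; in the general statement it suffices to observe that older pending at $c$ always acts before newer pending pushed down from $v$, after which the identity $(S_1\circ S_2)\otimes Z=S_1\otimes(S_2\otimes Z)$ converts every claim about $\mathsf{eff}$ into an unambiguous equation in $\mathcal{Z}$. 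The remaining arguments are then book-keeping identical to those in the proof of \cref{lem:lazy1}.
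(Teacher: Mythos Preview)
Your proposal is correct and follows essentially the same structure as the paper's own proof: both argue that the push-down block preserves the ``effective'' value $\bigl(\prod_i \pend(u_i)\bigr)\otimes val(v)$ for every node, then handle the update case via a DFS post-order induction over $U(E)$ with the same three-case split. Your explicit packaging of the push-down step as a standalone lemma and your naming of $\mathsf{eff}(v)$ make the argument a bit cleaner, and your discussion of the left-versus-right ordering of $\pend(v)\circ\pend(c)$ in the non-commutative setting is a useful clarification that the paper leaves implicit (it just invokes associativity of $\circ$ without spelling out why the temporal order is respected).
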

\begin{proof}[Proof of \cref{lem:lazyg1}]
Given $t$, suppose the statement is correct for all round before round $t$.  Let $\pend$ and $\val$ be the data at the begin of round $t$ and $\pend'$ and $\val'$ be the data at the end of round $t$.

If round $t$ has a range query operation with any $E$, since the correct weights is not changed, we only need to show that 
\begin{equation}\label{eq:lazyg11}
   \left(\prod_{i= 0}^k \pend'(u_i)\right)\otimes \val'(v)  = \left( \prod_{i= 0}^k \pend(u_i)\right)\otimes \val(v)
\end{equation}is also unchanged for any $v\in \mathcal{V}$  with the path from the root to itself $u_0 = root, \dots, u_k = v$.  We will use an induction on visited nodes in the DFS pre-order in \cref{alg:lazygquery} to prove \cref{eq:lazyg11}.  For the base case, the root node, when $\pend(root)\neq 1_\mathcal{S}$, the if statement in line~\ref{line:gquerycheck} 1) updates its value $\val'(root) = \pend(root)\otimes \val(root)$, 2) propagates the lazy value to each child $u\in \mathcal{C}(root)$ so that $\pend(root)\circ \pend(u)$ is unchanged, and 3) $\pend'(root) = 1_\mathcal{S}$.  The first and the third ensure that \cref{eq:lazyg11} holds for the root node because $$\pend'(u)\otimes \val'(root) = 1_\mathcal{S}\otimes \val'(root) = \val'(root) = \pend(root)\otimes \val(root)$$
and the second ensures that \cref{eq:lazyg11} holds for any non root node with the path $u_0,\dots, u_k$ as
\begin{align*}
    &\left(\prod_{i= 0}^k \pend'(u_i)\right)\otimes \val'(v)\\
    =& \left(\pend'(root)\circ \pend'(u_1)\circ \prod_{i= 2}^k \pend'(u_i)\right)\otimes \val'(v)\tag{associative law of $\circ$}\\
    =& \left(\pend'(root)\circ \pend'(u_1)\circ \prod_{i= 2}^k \pend(u_i)\right)\otimes \val(v)\tag{update the root and its children's pend}\\
    =& \left(\pend(root)\circ \pend(u_1)\circ \prod_{i= 2}^k \pend(u_i)\right)\otimes \val(v)\tag{by the second property}\\
    =& \left(\prod_{i= 0}^k \pend(u_i)\right)\otimes \val(v)
\end{align*}  The proof for the induction step follows similarly. 

On the other hand, suppose that round $t$ has a range update with range $E$ and $S\in \mathcal{S}$.  By the above argument, the if statement in line~\ref{line:gupdatecheck} does not change the value of $\left(\prod_{i = 0}^k \pend(u_i)\right)\otimes \val(v)$ for any $v$, so we only need to consider the effects in the three cases from line~\ref{line:gupdate1} to \ref{line:gupdate3}.  We will use the DFS post-order in \cref{alg:lazygUpdate} where the base case consist of the boundary of visited nodes (the first two cases in line~\ref{line:gupdate1} and \ref{line:gupdate2}).  If a node $v\in U_1(E)$ satisfies line~\ref{line:gupdate1}, we have
\begin{align*}
    \val'(v) =& S\otimes \val(v) = S\otimes\left(\left(\prod_{i = 0}^k \pend(u_i)\right)\otimes  \val(v)\right)\tag{by \cref{lem:lazyg3}}\\
    =& S\otimes W^{(t-1)}(N(v))\tag{induction hypothesis}\\
    =& S\otimes\left(\oplus_{x\in N(v)} W^{(t-1)}(x)\right)\tag{definition of $W^{(t-1)}(N(v))$}\\
    =& \oplus_{x\in N(v)} S\otimes W^{(t-1)}(x)\tag{distributive property}\\
    =& W^{(t)}(N(v)).
\end{align*}
The second case $v\in U_2(E)$ is trivial.   For the third case in line~\ref{line:gupdate3}, $v$ will not be a leaf node and $N(u)$ for all $u\in \mathcal{C}(v)$ forms a partition of $N(v)$.  Thus, by induction hypothesis for all $u\in \mathcal{C}(v)$ we have $\val'(u) = W^{(t)}(N(u))$, and $\val'(v) = \oplus_{u\in \mathcal{C}(v)} W^{(t)}(N(v))$ by the commutative law of $\oplus$.  Finally, by \cref{lem:lazyg3}, $\left(\prod_{i = 0}^k \pend(u_i)\right)\val'(v) = \val'(v) = W^{(t)}(N(v))$ which completes the proof. 
\end{proof}

With the above three claims, the answer of \cref{alg:lazygquery} with $E$ is 
\begin{align*}
    \oplus_{v\in U_1(E)}\val(v)
    =& \oplus_{v\in U_1(E)}{W^{(t)}(N(v))}\tag{by \cref{lem:lazyg1,lem:lazyg3}}\\
    =& W^{(t)}(E)\tag{by \cref{lem:lazyg2} and $\oplus$ commutative}
\end{align*}
which proves the correctness. 
\end{proof}

\section{Hardness to computing cost function and trading function}\label{app:hardness}
Here, we provide a cost function and a trading function that is NP-hard to compute. 

Consider the following convex function on non-negative integers $$C_{\text{partition}}(\vw) = \mathbf{1}\left[\exists S\subset [n]\text{ so that }2\sum_{i\in S} w_i = \sum_{j = 1}^n w_j\right]+100C_Q(\vw)$$
where $C_Q$ is the QMSR in \cref{eq:QMSR}.  Note that computing $C_{\text{partition}}$ suffices to solve the partition problem which is NP-hard.  Now, we show the function is a valid cost function.  First, the function is convex, as the first term is bounded by $1$.  Moreover, the first term decides the partition problem, so adding a constant to all coordinates does not change the partition problem, and $C_{\text{partition}}(\vw+\alpha \mathbf{1}) = C_{\text{partition}}(\vw)+\alpha$ which is $\mathbf{1}$-invariant.  Finally, we can extend the domain to $\R^n$ to a differential able function.  

Moreover, computing trading function can also be NP-hard, as $-C_{\text{partition}}(-\vw)$ is an concave increasing trading function.

\section{Proof of Proposition~\ref{thm:equiv_swap}}
  We first show a reduction from a market maker to a RU algorithm.  Given an initial state (the numbers of outstanding securities) $\vw_0$ on $(\mathcal{X}, \mathcal{F})$, we run RU on initial weight $\vw_0$.  In each round with any $E^-, E^+, s_+$, because $\varphi$ is increasing and there exists $s\le \lambda$ that solves $\varphi(\vw+s_+\mathbf{1}_{E^+}-s\mathbf{1}_{E^-}) = \varphi(\vw)$, $\varphi(\vw)$ is between $\varphi(\vw+s_+\mathbf{1}_{E^+}-\lambda \mathbf{1}_{E^-})$ and and $\varphi(\vw+s_+\mathbf{1}_{E^+})$.  Then we can use binary search to find $s$ where each iteration require one update and one query.  The backward trade follows similarly.

    For the other direction, given a swap market maker, we construct RU with the following reduction.  Given an initial weight $W_0$, we pick one element $x^*$ so that $\{x^*\}\in \mathcal{F}$ and create additional variables $\phi, M, M'$ with initial value equal to $\varphi(w_0), w_0(x^*)$ and $w_0(x^*)$ respectively and run the swap market maker with initial state $W_0$. 
    
    For each update with $E$ and $S$, if $S \ge 0$ we run the forward trade operation, $s^- = \tradef(\{x^*\}, E, S)$ from the market maker, and update $M\gets M+S\mathbf{1}[x^*\in E]$ and $M' \gets M'+S\mathbf{1}[x^*\in E]-s^-$. If $S<0$, we run the backward trade operation, $s^+ = \tradef( E,\{x^*\}, S)$ from the market maker, and update $M\gets M+S\mathbf{1}[x^*\in E]$ and $M' \gets M'+S\mathbf{1}[x^*\in E]+s^+$.  Then we compute $\varphi(W)$ from $M, M'$, and $\phi$.  We can see the market maker's state $w = (W_{-x^*}, W_{x^*}')$ maintains the range query problem $W$ excepts for coordinate $x^*$, and store $M = W_{x^*}$ and $M' = W_{x^*}'$.  Therefore, we can recover the value $\varphi(W)$ from $M, M'$ and $\phi = \varphi(w)$.

\end{document}